\documentclass[12pt,letterpaper]{article}

\usepackage{amsmath,amssymb,amsthm,amsfonts}
\allowdisplaybreaks[2]
\usepackage[margin=1in]{geometry}
\usepackage{indentfirst}

\usepackage[T1]{fontenc}
\usepackage{newtxtext,newtxmath}
\usepackage[moderate]{savetrees}
\makeatletter
\renewcommand\section{\@startsection{section}{1}{\z@}%
  {-3.5ex \@plus -1ex \@minus -.2ex}%
  {2.3ex \@plus .2ex}{\normalfont\Large\bfseries}}
\renewcommand\subsection{\@startsection{subsection}{2}{\z@}%
  {-3.25ex \@plus -1ex \@minus -.2ex}%
  {1.5ex \@plus .2ex}{\normalfont\large\bfseries}}
\renewcommand\subsubsection{\@startsection{subsubsection}{3}{\z@}%
  {-3.25ex \@plus -1ex \@minus -.2ex}%
  {1.5ex \@plus .2ex}{\normalfont\normalsize\bfseries}}
\renewcommand\paragraph{\@startsection{paragraph}{4}{\z@}%
  {.8ex plus .2ex minus .1ex}{-.5em}{\normalfont\normalsize\bfseries}}
\makeatother

\usepackage{graphicx}
\usepackage{float}
\usepackage{xcolor}
\usepackage[normalem]{ulem}
\DeclareRobustCommand{\SfourAdd}[1]{\textcolor{red}{#1}}
\makeatletter
\DeclareRobustCommand{\SfourDel}[1]{%
  \textcolor{red}{\sout{#1}}%
  \@ifnextchar\SfourAdd{\hspace{0.2em}}{}}
\makeatother
\usepackage{tikz}
\usetikzlibrary{arrows.meta}

\definecolor{referenceblue}{RGB}{0,76,153}
\usepackage{enumerate}
\usepackage{array}
\usepackage{etoolbox}
\AtBeginEnvironment{thebibliography}{\interlinepenalty=10000}

\makeatletter
\def\@listi{%
  \leftmargin\leftmargini
  \topsep 3pt plus 1pt minus 1pt
  \partopsep 0pt
  \parsep 0pt
  \itemsep 2pt plus .5pt}
\let\@listI\@listi
\def\@listii{%
  \leftmargin\leftmarginii
  \labelwidth\leftmarginii
  \advance\labelwidth-\labelsep
  \topsep 2pt plus 1pt minus 1pt
  \parsep 0pt
  \itemsep 1pt plus .5pt}
\makeatother
\newcommand{\afterfigurespace}{\par\vspace{2pt}}
\usepackage[round]{natbib}

\usepackage{setspace}
\AtBeginEnvironment{thebibliography}{\normalsize\setstretch{1.45}}
\AtBeginEnvironment{figure}{\normalsize\setstretch{1.5}}

\usepackage[hang,flushmargin]{footmisc}

\usepackage[hypertexnames=false,hyperfootnotes=false]{hyperref}
\hypersetup{
  colorlinks=true,
  linkcolor=referenceblue,
  citecolor=referenceblue,
  urlcolor=referenceblue,
  pdftitle={Equilibrium Architecture in Multi-Battle Contests with Count-Dependent Prizes},
  pdfauthor={Zhonghong Kuang and Jingfeng Lu},
  pdfsubject={Equilibrium in multi-battle Tullock contests under general count-dependent prize schedules},
  pdfkeywords={multi-battle contests, Tullock contests, count-dependent prizes, mixed-strategy equilibrium, endogenous participation}
}
\newtheoremstyle{compactplain}
  {2pt}
  {2pt}
  {\itshape}
  {}
  {\bfseries}
  {.}
  {.5em}
  {}

\newtheoremstyle{compactdefinition}
  {2pt}
  {2pt}
  {\normalfont}
  {}
  {\bfseries}
  {.}
  {.5em}
  {}

\theoremstyle{compactplain}
\newtheorem{theorem}{Theorem}[section]
\newtheorem{proposition}[theorem]{Proposition}
\newtheorem{corollary}[theorem]{Corollary}
\newtheorem{lemma}[theorem]{Lemma}

\theoremstyle{compactdefinition}
\newtheorem{definition}[theorem]{Definition}
\newtheorem{example}[theorem]{Example}

\newcommand{\numberappendixobjectsbysection}{%
  \numberwithin{equation}{section}%
  \numberwithin{figure}{section}%
  \numberwithin{table}{section}%
}

\newcommand{\compactdisplays}{%
  \setlength{\abovedisplayskip}{3pt plus 1pt minus 1pt}%
  \setlength{\belowdisplayskip}{3pt plus 1pt minus 1pt}%
  \setlength{\abovedisplayshortskip}{2pt plus 1pt minus 1pt}%
  \setlength{\belowdisplayshortskip}{2pt plus 1pt minus 1pt}%
}
\apptocmd{\normalsize}{\compactdisplays}{}{}
\AtBeginDocument{\compactdisplays}

\title{Equilibrium Architecture in Multi-Battle\\Contests with Count-Dependent Prizes%
\thanks{We thank Yuxuan Zhu for helpful comments; all remaining errors are our own. Kuang gratefully acknowledges financial support from the National Natural Science Foundation of China (NSFC), grant numbers~72103190 and 72473137. Kuang: School of Economics, Renmin University of China, 59 Zhongguancun Street, Beijing 100872, China; email: \href{mailto:kuang@ruc.edu.cn}{\texttt{kuang@ruc.edu.cn}}. Lu: Department of Economics, National University of Singapore, 10 Kent Ridge Crescent, Singapore 119260; email: \href{mailto:ecsljf@nus.edu.sg}{\texttt{ecsljf@nus.edu.sg}}.}}
\author{%
  {\Large\textsc{Zhonghong Kuang}}
  \\[-.05em] {\large\textit{Renmin University of China}}
  \and
  {\Large\textsc{Jingfeng Lu}}
  \\[-.05em] {\large\textit{National University of Singapore}}%
}
\date{September 2026}

\makeatletter
\renewcommand{\maketitle}{%
  {\centering
   {\LARGE\bfseries\@title\par}
   \vspace{1.15\baselineskip}
   {\normalsize
    \def\and{\end{tabular}\hspace{3em}\begin{tabular}[t]{c}}%
    \begin{tabular}[t]{c}\@author\end{tabular}\par}
   \vspace{.9\baselineskip}
   {\large\@date\par}
   \vspace{.35\baselineskip}}%
  \@thanks
}
\makeatother

\begin{document}
\renewcommand{\thefootnote}{\fnsymbol{footnote}}
\maketitle
\setcounter{footnote}{0}
\renewcommand{\thefootnote}{\arabic{footnote}}

\begin{abstract}
\noindent Two contestants with possibly different marginal costs compete across identical battlefields governed by a Tullock technology with discriminatory power at most one. A symmetric schedule divides a fixed prize according to the number of victories. Allowing for inactivity, unequal efforts across battlefields, and arbitrary mixed strategies, we prove the existence and uniformity of equilibrium. Equilibrium may be pure, semi-pure (one contestant mixes), or two-sided mixed; in a two-sided mixed equilibrium, each contestant uses at most countably many positive effort levels. Multiple equilibria with different structures can coexist while generating the same expected effort, prize share, cost, and payoff. For every admissible schedule and cost ratio, equilibrium is unique with six or fewer battlefields, whereas seven first permits multiplicity or two-sided mixing. We also characterize all equilibria under majority rule.

\smallskip
\noindent\textbf{JEL classification:} C72, D72, D74.

\smallskip
\noindent\textbf{Keywords:} Multi-battle contests, Tullock contests, count-dependent prizes, mixed-strategy equilibrium, endogenous participation.
\end{abstract}

\clearpage
\section{Introduction}\label{S1}

Many competitions are fought on several fronts but settled by a rule that aggregates local victories: political parties across electoral districts, firms across research projects, and sports teams across matches. Awarding the prize to the contestant who wins a majority of battlefields can sharpen incentives but also make modest efforts unrewarding and induce withdrawal; smoother rewards preserve value for partial success but may weaken marginal incentives. These competing effects raise two questions: \emph{When does aggregation induce inactivity or strategic randomization? How complex can equilibrium become as the number of battlefields grows?}

We study two risk-neutral contestants competing simultaneously across \(J\) identical battlefields. On each battlefield, the contestants play a Tullock contest: if they exert nonnegative efforts \(x\) and \(y\), the first wins with probability \(x^r/(x^r+y^r)\) when \(x+y>0\), and with probability \(1/2\) when \(x=y=0\); here \(r\in(0,1]\) measures how strongly success responds to relative effort. We assume constant marginal effort costs that may differ across contestants: we normalize the higher marginal cost to one and write \(c\in(0,1]\) for the ratio of the lower marginal cost to the higher one, so \(c=1\) means equal costs. A common prize allocation schedule assigns a prize share \(q_n\) for \(n\) victories. The schedule is nonconstant, monotone, and symmetric, with \(q_n+q_{J-n}=1\), so it treats the contestants identically and always divides one fixed prize. We call such schedules admissible. Proportional rewards, \(q_n=n/J\), and majority rule with ties split equally are leading examples.

The unrestricted strategy space and payoff structure create several technical difficulties. A contestant may allocate effort unevenly, choose zero, or randomize over effort vectors with correlated battlefield components and finite expected total effort; identical battlefields alone do not justify imposing uniform effort. Payoffs are also discontinuous at mutual inactivity: when both contestants choose zero, each receives one half of the prize, whereas an arbitrarily small positive uniform effort against an inactive rival wins every battlefield and yields \(q_J>1/2\). Standard existence results for continuous games therefore do not apply. Moreover, even after efforts are reduced to their totals, an arbitrary admissible schedule can make expected reward nonconcave in relative effort. A best response may then be inactivity or one of several separated positive effort levels, so local first-order conditions alone neither characterize mixed equilibrium nor establish uniqueness.

We establish equilibrium existence despite the discontinuity at mutual inactivity and show that every equilibrium allocates effort uniformly across battlefields (Theorem~\ref{thm:existence}). For existence, we approximate the game by continuous games and verify zero-effort deviations directly at the limit. For uniformity, we show that equalizing effort at a fixed total raises expected reward against a rival who uses the same positive effort on every battlefield; a constant-sum reformulation that preserves best responses extends this conclusion to every equilibrium. Thus the unrestricted multidimensional game reduces in equilibrium to a one-dimensional contest in total effort.

We next characterize every equilibrium of this total-effort game (Theorem~\ref{thm:equilibrium-taxonomy}). An equilibrium is pure, with both contestants choosing one effort level; semi-pure, with one choosing a fixed positive effort and the other randomizing; or two-sided mixed, with both randomizing. A semi-pure mixer uses finitely many effort levels. Under two-sided mixing, the contestants' sets of positive effort levels are either both finite or both countably infinite; in the latter case, those levels can accumulate only at zero. At equal costs, both contestants are always active; with unequal costs, only the higher-cost contestant may put positive probability on inactivity. Section~\ref{S3} identifies a single pure candidate, an exhaustive family of semi-pure candidates, and a global verification test for two-sided mixing, for every \(J\). Multiplicity can change strategies and the distribution of realized effort, but not either contestant's expected total effort, prize share, effort cost, or net payoff (Proposition~\ref{prop:outcome-equivalence}).

The general characterization yields sharp results for prize schedules, battlefield counts, and majority rule. For every \(r\in(0,1]\), equilibrium is unique through six battlefields; through five, only the higher-cost contestant can mix, whereas at six an admissible schedule can make the lower-cost contestant the sole mixer without destroying uniqueness (Theorem~\ref{thm:lowcount-refinements}). Seven is the smallest number of battlefields for which some admissible schedule permits multiplicity or two-sided mixing. Under majority rule, each odd count and the next even count induce the same total-effort game. Equilibrium is unique and pure at low discrimination; at intermediate discrimination, the cost ratio determines whether it is pure or semi-pure; and at high discrimination, both contestants mix over countably infinite positive supports with zero as their only accumulation point. The high-discrimination region first appears at seven battlefields. Figure~\ref{fig:majority-regions} summarizes these regimes.

\begin{figure}[H]
\centering
\begin{tikzpicture}[
  x=10.7cm,y=3.06cm,
  axis/.style={-{Latex[length=1.7mm]},thin},
  boundary/.style={very thick},
  guide/.style={densely dotted,thin},
  every node/.style={font=\small}
]
\def\rlow{0.24}
\def\rhigh{0.72}

\path[fill=gray!55] (\rhigh,0) rectangle (1,1);
\path[fill=gray!20]
  (\rlow,0)
  .. controls (0.39,0.03) and (0.61,0.58) .. (\rhigh,1)
  -- (\rhigh,0) -- cycle;
\draw[boundary]
  (\rlow,0)
  .. controls (0.39,0.03) and (0.61,0.58) .. (\rhigh,1);
\draw[guide] (\rlow,0) -- (\rlow,1);
\draw[guide] (\rhigh,0) -- (\rhigh,1);

\draw[axis] (0,0) -- (1.05,0)
  node[right,align=left] {discriminatory\\[-1pt]power};
\draw[axis] (0,0) -- (0,1.07);
\node[rotate=90] at (-0.105,0.53) {cost ratio $c$};
\node[left,align=right] at (-0.012,1) {$c=1$};
\node[left,align=right] at (-0.012,0) {$c\to0$};
\node[below] at (0.12,-0.02) {low};
\node[below] at (0.49,-0.02) {intermediate};
\node[below] at (0.86,-0.02) {high};

\node[font=\bfseries\small,align=center] at (0.12,0.52) {Pure};
\node[font=\bfseries\small,align=center] at (0.51,0.78) {Pure};
\node[font=\bfseries\small,align=center] at (0.52,0.12) {Semi-pure};
\node[font=\bfseries\small,align=center] at (0.86,0.52)
  {Two-sided\\[-1pt]mixed};
\end{tikzpicture}
{\normalsize\begin{spacing}{1}
\caption{Schematic majority-rule equilibrium regions for a battlefield count at which all three regimes occur, such as seven or eight (not to scale). The exact boundaries depend on \(J\).}
\label{fig:majority-regions}
\end{spacing}}
\end{figure}

\medskip
\noindent\textbf{Related Literature.} The closest paper is \citet*{fan2026generalized} (FKL), who study the same model with asymmetric costs, arbitrary battlefield counts, and general symmetric prize schedules. They show that an equilibrium of the uniform-effort restriction is also an unrestricted-game equilibrium and that, whenever such an equilibrium exists, every unrestricted equilibrium is uniform. They establish uniqueness and purity for all admissible schedules and cost asymmetries when \(r\leq2/(J+1)\), and study prize design in that region. Our existence result makes that reduction unconditional throughout \(r\in(0,1]\); we then derive necessary-and-sufficient conditions and support restrictions for pure, semi-pure, and two-sided mixed equilibria.

The closest mixed-equilibrium benchmark is \citet{klumpp2006primaries}, hereafter KP. For equal-cost majority contests with an odd battlefield count, KP establish uniform effort allocation, payoff equivalence, and a pure-to-mixed transition. We sharpen their mixed-region support description: neither contestant has an atom at zero, and each has countably infinite positive support with zero as the unique accumulation point. \Citet{luwangzhou2024electoral} derive   equilibria for three-battlefield models with $r=1$, special cases of our setup.

Related work also studies contests with multiple fronts. Simultaneous-object auctions generate strategic dependence across prizes \citep{szentes2003beyond,szentes2003three,ewerhart2022fractal}. Competition for a majority links contests through an aggregate threshold \citep{barelli2014majority}, whereas Colonel Blotto and limited-resource contests impose a common resource constraint across fronts \citep{roberson2006colonel,kvasov2007limited}. Other work examines multi-battle and team contests \citep{konrad2009multibattle,fu2015team}, as well as the design or bundling of prizes across dimensions \citep{feng2018split,lu2024performance,feng2024optimal}. In our paper, total expenditure is endogenous and a single fixed prize is divided by victory count. The schedule itself therefore links battlefield incentives and shapes participation and mixing.

Finally, the endogenous reduction connects the paper to standard and asymmetric one-dimensional Tullock contests \citep{tullock1980efficient,nti1999rent,feng2017uniqueness} and to complete-information all-pay contests \citep{baye1996allpay,siegel2009allpay,siegel2010conditional}. \Citet{ewerhart2015mixed} characterizes mixed equilibria for the standard two-player Tullock contest success function, and \citet{ewerhart2025uniqueness} completes the uniqueness analysis. Unlike the standard Tullock reward, a victory-count schedule can create several isolated best-response effort levels and multiple equilibria from seven battlefields onward. Related results study equilibrium and value comparisons, the boundary between pure and mixed regimes, and participation in broader scalar contests \citep{malueg2005equilibria,alcalde2007tullock,levine2022success}. The key distinction is that our scalar game emerges endogenously from an unrestricted multidimensional contest. The constant-sum reformulation and the equal-expected-cost identity together make contestant-specific expected outcomes common across equilibria, while a finite battlefield count makes expected reward a finite-degree polynomial in the win probability, limiting mixed-support complexity and yielding sharp battlefield-count thresholds.

\section{Model and Equilibrium Reduction}\label{S2}

\subsection{Prize Schedule, Effort, and Normalization}\label{S2:prizes}

Two risk-neutral contestants, \(A\) and \(B\), compete across \(J\geq1\) identical battlefields, indexed by \(j\in\mathcal J\equiv\{1,\ldots,J\}\). A \emph{count-dependent prize schedule} \(\boldsymbol q\equiv(q_0,\ldots,q_J)\in[0,1]^{J+1}\) assigns prize share \(q_n\) to a contestant who wins \(n\) battlefields. The same schedule applies to both: if \(A\) wins \(n\), then \(B\) wins \(J-n\) and receives \(q_{J-n}\).\footnote{Under risk neutrality, \(q_n\) can equivalently be interpreted as the probability of receiving one indivisible prize.}

\begin{definition}[Admissible prize schedule]\label{def:admissible}
A prize schedule \(\boldsymbol q\) is \emph{admissible} if it satisfies:
\begin{enumerate}[(i)]
\item \emph{budget balance and contestant symmetry:} \(q_n+q_{J-n}=1\) for every \(n\in\{0,\ldots,J\}\);
\item \emph{monotonicity:} \(q_{n+1}\geq q_n\) for every \(n\in\{0,\ldots,J-1\}\); and
\item \emph{nonconstancy:} \(q_J>q_0\).
\end{enumerate}
\end{definition}

We maintain admissibility throughout. Condition~\emph{(i)} makes gross prizes constant-sum and treats the contestants symmetrically; condition~\emph{(ii)} ensures that an additional victory never lowers a contestant's prize; and condition~\emph{(iii)} rules out a strategically trivial flat schedule. They imply \(0\leq q_0<1/2<q_J\leq1\). When \(J\) is even, admissibility also forces \(q_{J/2}=1/2\): contestants who split the battlefields divide the prize equally. Proportional rewards, \(q_n=n/J\), and majority rule, with ties split equally, are leading examples.

Subtracting \(q_0\) from payoffs leaves incentives unchanged. Let \(\Lambda\equiv q_J-q_0\in(0,1]\) be the gain from winning all rather than no battlefields, define \(\widetilde q_n\equiv(q_n-q_0)/\Lambda\), and write \(\widetilde{\boldsymbol q}\equiv(\widetilde q_0,\ldots,\widetilde q_J)\). Then \(\widetilde q_0=0\), \(\widetilde q_J=1\), and \(\widetilde q_n+\widetilde q_{J-n}=1\). Thus \(\widetilde{\boldsymbol q}\) determines the shape of incentives and \(\Lambda\) their scale. With linear costs, holding \(\widetilde{\boldsymbol q}\) fixed and replacing \(\Lambda\) by \(\Lambda'\in(0,1]\) scales every equilibrium effort by \(\Lambda'/\Lambda\), preserving effort ratios, participation, and mixing probabilities.

Contestants move simultaneously, and we use Nash equilibrium as our solution concept. Contestant \(i\in\{A,B\}\) chooses a nonnegative effort vector \(\mathbf e_i=(e_{ij})_{j\in\mathcal J}\), with actual total effort \(E_i\equiv\sum_{j\in\mathcal J}e_{ij}\). All battlefields have the same Tullock parameter \(r\in(0,1]\). On battlefield \(j\), contestant \(A\) wins with probability
\[
p_{j,A}=
\begin{cases}
\dfrac{e_{Aj}^{r}}{e_{Aj}^{r}+e_{Bj}^{r}},&e_{Aj}+e_{Bj}>0,\\[5pt]
\dfrac12,&e_{Aj}=e_{Bj}=0,
\end{cases}
\qquad
p_{j,B}=1-p_{j,A}.
\]
Thus a contestant wins for sure when only she exerts positive effort. A larger \(r\) makes battlefield success more responsive to relative effort; \(r=1\) is the standard Tullock case. For a fixed positive rival effort, \(r\leq1\) makes the probability of winning concave in own effort; this curvature drives the equalization result below. When \(r>1\), success is locally convex near zero, so concentrating effort can become attractive. Conditional on the effort profile, battlefield outcomes are independent. Let \(W_i\) be contestant \(i\)'s number of victories.

Effort costs are linear. Normalize contestant \(B\)'s marginal cost to one and, after relabeling, let contestant \(A\)'s marginal cost be \(c\in(0,1]\), so \(A\) has weakly lower cost. Given the contestants' effort choices, expected payoffs are \(\pi_A=\mathbb E[q_{W_A}]-cE_A\) and \(\pi_B=\mathbb E[q_{W_B}]-E_B\). Contestant \(i\)'s normalized payoff is \(\widehat\pi_i\equiv(\pi_i-q_0)/\Lambda\). This transformation subtracts the same constant and divides by a positive number, so it does not change best responses or equilibria.\footnote{The analysis extends to common-power cost functions \(C_i(\mathbf e_i)=d_iE_i^\gamma\), \(\gamma\geq1\), with finite expected costs. Equalization at fixed \(E_i\) is unchanged. Relabel so that \(0<d_A\leq d_B\) and set \(Z_i=d_BE_i^\gamma/\Lambda\). For uniform allocations, total-effort ratios equal \((Z_A/Z_B)^{1/\gamma}\), so the transformed contest has exponent \(r/\gamma\), linear costs, and cost ratio \(d_A/d_B\); the existence, equalization, taxonomy, and verification results therefore apply.}

A contestant is \emph{inactive} at the zero-effort vector and \emph{participates} otherwise. A pure strategy is an effort vector in \(\mathbb R_{\geq0}^J\); a mixed strategy is a probability distribution over such vectors with finite expected total effort. Thus inactivity, unequal effort across battlefields, and arbitrary correlation of a contestant's efforts across battlefields are allowed.

\subsection{From Battlefield Efforts to a Scalar Contest}\label{S2:reduction}

An effort vector is \emph{uniform} if all its components are equal. An equilibrium is uniform if \(e_{ij}=E_i/J\) with probability one for every contestant \(i\) and battlefield \(j\). In the \emph{scalar restriction}, each contestant chooses total effort and allocates it uniformly. We show below that this restriction preserves the equilibrium set.

\begin{lemma}[Equalization and interchangeability]\label{lem:core-reductions}
\begin{enumerate}[(i)]
\item Against any mixture over uniform effort vectors with positive participation, every best-response effort vector is uniform.
\item Any pair of equilibrium strategies, one per contestant, forms an equilibrium.
\item If one equilibrium is uniform and both contestants participate with positive probability, every equilibrium is uniform.
\end{enumerate}
\end{lemma}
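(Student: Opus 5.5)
The plan is to treat the three parts in order: (i) is an analytic inequality, (ii) follows from a constant-sum reformulation, and (iii) combines the two.

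\emph{Part (ii).} I would first rescale contestant \(A\)'s effort as \(\mathbf e_A'\equiv c\,\mathbf e_A\). The Tullock win probabilities are homogeneous of degree zero in the two efforts on each battlefield. Hence the game becomes one in which both contestants have unit marginal cost and \(A\)'s success function carries the fixed factor \(c^{-r}\) inside the ratio. Write \(R_A\) for \(A\)'s expected prize share. Then \(A\)'s payoff is \(R_A-E_A'\) and \(B\)'s is \(1-R_A-E_B\). Adding to each payoff a term that depends only on the rival's strategy gives
\[
V_A\equiv R_A-E_A'+E_B,\qquad V_B\equiv -V_A=(1-R_A-E_B)+E_A'-1 .
\]
This leaves best responses and the equilibrium set unchanged, and the result is a zero-sum game. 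Finite expected total effort makes all these expectations well defined and permits Fubini for mixed strategies, so bilinearity in the pair of mixed strategies holds. The standard interchangeability argument then applies. If \((\sigma_A,\sigma_B)\) and \((\tau_A,\tau_B)\) are equilibria, the saddle-point inequalities give
\[
V(\sigma_A,\sigma_B)\le V(\sigma_A,\tau_B)\le V(\tau_A,\tau_B)\le V(\tau_A,\sigma_B)\le V(\sigma_A,\sigma_B).
\]
All four values are therefore equal, so \((\sigma_A,\tau_B)\) and \((\tau_A,\sigma_B)\) are also saddle points.

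\emph{Part (iii).} Given part (ii), this is immediate. Let \((\sigma_A,\sigma_B)\) be a uniform equilibrium in which both contestants participate with positive probability, and let \((\tau_A,\tau_B)\) be any equilibrium. By (ii), \((\tau_A,\sigma_B)\) is an equilibrium, so almost every vector in the support of \(\tau_A\) is a best response to \(\sigma_B\). Since \(\sigma_B\) is a mixture over uniform vectors with positive participation, part (i) makes those vectors uniform. The symmetric argument applies to \(\tau_B\).

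\emph{Part (i).} This is the real work. Let the rival's uniform effort be \(y/J\) per battlefield, with \(y\) random. For a candidate vector \(\mathbf e\) with total \(E>0\), expected reward is \(\mathbb E_y[Q(p(e_1,y),\ldots,p(e_J,y))]\). Here \(Q(\boldsymbol p)=\mathbb E[q_W]\) is the symmetric multilinear function of independent success probabilities. I would argue pairwise. Fix all components except \(e_j,e_k\), with \(e_j+e_k=t\). Then \(Q=\alpha+\beta(p_j+p_k)+\gamma p_jp_k\), where the coefficients depend on the other battlefields, and monotonicity gives \(\partial Q/\partial p_j=\beta+\gamma p_k\ge0\). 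I would show that, for every realized \(y>0\), this expression is uniquely maximized at \(e_j=e_k\). On the event \(y=0\), the equal split is still weakly optimal, since any positive component wins outright. Positive participation of the rival then converts the weak inequality into a strict one after taking expectations. Iterating the pairwise equalization (or invoking Schur-concavity) shows that any nonuniform vector is strictly dominated by its uniform average, and also by any vector with zero components where relevant.

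For the pairwise step I would use \(p'(x)=r\,p(1-p)/x\), together with the fact that \(r\le1\) makes \(x\mapsto p(x)\) concave and \(\log\frac{p}{1-p}\) concave in \(\log x\). The goal is to show that the first-order expression \((\beta+\gamma p_k)p'(e_j)-(\beta+\gamma p_j)p'(e_k)\) has the sign of \(e_k-e_j\). The case \(\gamma\ge0\) should follow directly from concavity. \textbf{Main obstacle.} When \(\gamma<0\) the two effects conflict, and this is where I expect the difficulty. I would first try to exploit the budget-balance symmetry \(q_n+q_{J-n}=1\), by pairing outcome configurations of the other battlefields with their complements, in order to bound \(|\gamma|\) relative to \(\beta\). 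If that fails, I would fall back on a direct global argument: write \(Q=q_0+\sum_n(q_n-q_{n-1})\Pr(W\ge n)\) and apply a majorization inequality for sums of independent Bernoulli variables whose probabilities come from a concave transform of effort.
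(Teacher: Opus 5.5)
Your parts (ii) and (iii) are sound and match the paper's argument: adding the rival's expected cost gives a constant-sum game with the same best responses, and an arbitrary equilibrium strategy is then paired with the uniform one. The rescaling $\mathbf e_A'=c\,\mathbf e_A$ is harmless but unnecessary.

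\textbf{Main gap.} The gap is in part (i), exactly where you flag it: you never prove the pairwise inequality when $\gamma<0$. This case is common. Under three-battlefield majority rule, $\beta=p_3$ and $\gamma=1-2p_3<0$ whenever the remaining battlefield is won with probability $p_3>1/2$.

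Neither fallback is needed, and the symmetry route looks in the wrong place. The bound you need follows from monotonicity alone, and your own observation $\partial Q/\partial p_j=\beta+\gamma p_k\ge0$ for all $p_k\in[0,1]$ already contains it. Let $\Delta_n=q_n-q_{n-1}$, and let $W'$ count wins on the other $J-2$ battlefields. Evaluating at $p_k=0$ and $p_k=1$ gives $\beta=\mathbb E[\Delta_{W'+1}]\ge0$ and $\beta+\gamma=\mathbb E[\Delta_{W'+2}]\ge0$.

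Now write $\beta+\gamma p_k=\beta(1-p_k)+(\beta+\gamma)p_k$. Your first-order expression becomes
\[
\beta\bigl[p'(e_j)(1-p_k)-p'(e_k)(1-p_j)\bigr]+(\beta+\gamma)\bigl[p'(e_j)p_k-p'(e_k)p_j\bigr].
\]
Let $a>0$ be the rival's per-battlefield effort. For $e_j>e_k>0$, both brackets are negative, because $p'/p=r(1-p)/x$ and $p'/(1-p)=rp/x=rx^{r-1}/(x^r+a^r)$ are strictly decreasing; the latter uses $r\le1$.

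A bound $|\gamma|\le\beta$ plus concavity of $p$ would not close the argument by themselves. At $\gamma=-\beta$, the required inequality is precisely monotonicity of $p'/(1-p)$, which is stronger than concavity of $p$. It does follow from strict concavity together with $p(0)=0$, via your identity $p'=rp(1-p)/x$. This decomposition into two nonnegative expected increments, each paired with a monotone ratio, is the paper's key step.

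\textbf{Strictness gap.} You claim that for every $y>0$ the pair expression is \emph{uniquely} maximized at $e_j=e_k$. This is false when other components are zero. Under five-battlefield majority rule, a vector with effort on only two battlefields earns zero reward against any positive uniform rival, however that effort is split.

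Strictness needs $\mathbb E[\Delta_{W'+1}]+\mathbb E[\Delta_{W'+2}]>0$. This holds when all other components are positive: every value of $W'$ then has positive probability, and $\sum_n\Delta_n=q_J-q_0>0$.

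Vectors with zero components need a separate step. One option, as in the paper, is a two-step comparison:
\begin{enumerate}[(a)]
\item compare $\mathbf e$ weakly, via continuity, with $(1-\varepsilon)\mathbf e+\varepsilon\mathbf u$;
\item compare that positive nonuniform vector strictly with the uniform $\mathbf u$.
\end{enumerate}
A bare appeal to iteration or Schur-concavity does not give the strict inequality on the boundary.
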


At fixed total effort and cost, monotonicity of the prize schedule and \(r\leq1\) make pairwise equalization weakly increase expected reward against a positive uniform rival effort. Repeatedly averaging the largest and smallest components converges to uniform allocation (Figure~\ref{fig:pairwise-equalization}). Adding the opponent's effort cost to each payoff preserves best responses and makes the game constant-sum, giving interchangeability. Appendix~\ref{app:core-reductions} proves the lemma, including the strict equalization gain and the cases of zero effort and mixed strategies.

\begin{figure}[H]
\centering
\begin{tikzpicture}[
  own/.style={fill=gray!65,draw=black,thin},
  rival/.style={fill=white,draw=black,thin},
  axis/.style={-{Latex[length=1.5mm]},draw=black,thin},
  move/.style={-{Latex[length=1.8mm]},draw=black,thick},
  every node/.style={font=\small,text=black}
]
\begin{scope}
  \draw[axis] (0,0) -- (3.05,0);
  \draw[axis] (0,0) -- (0,2.35);
  \draw[own]  (0.55,0) rectangle (0.82,2.00);
  \draw[rival] (0.87,0) rectangle (1.14,1.18);
  \draw[own]  (1.95,0) rectangle (2.22,0.58);
  \draw[rival] (2.27,0) rectangle (2.54,1.18);
\node[above] at (0.685,2.00) {\(e_{ij}\)};
  \node[above] at (1.005,1.18) {$a$};
\node[above] at (2.085,0.58) {\(e_{ik}\)};
  \node[above] at (2.405,1.18) {$a$};
  \node[below] at (0.845,-0.03) {$j$};
  \node[below] at (2.245,-0.03) {$k$};
  \node at (1.52,-0.56) {\textbf{(a)}};
\end{scope}
\draw[move] (3.25,1.22) -- (4.15,1.22)
  node[midway,above,font=\scriptsize] {average $j$ and $k$}
    node[midway,below,font=\scriptsize] {\(e_{ij}+e_{ik}=2\bar e_i\)};
\begin{scope}[shift={(5.00,0)}]
  \draw[axis] (0,0) -- (3.05,0);
  \draw[axis] (0,0) -- (0,2.35);
  \draw[own]  (0.55,0) rectangle (0.82,1.31);
  \draw[rival] (0.87,0) rectangle (1.14,1.18);
  \draw[own]  (1.95,0) rectangle (2.22,1.31);
  \draw[rival] (2.27,0) rectangle (2.54,1.18);
  \node[above] at (0.685,1.31) {\(\bar e_i\)};
  \node[above] at (1.005,1.18) {$a$};
  \node[above] at (2.085,1.31) {\(\bar e_i\)};
  \node[above] at (2.405,1.18) {$a$};
  \node[below] at (0.845,-0.03) {$j$};
  \node[below] at (2.245,-0.03) {$k$};
  \node at (1.52,-0.56) {\textbf{(b)}};
\end{scope}
\draw[move] (8.25,1.22) -- (9.15,1.22)
  node[midway,above,font=\scriptsize] {repeat}
  node[midway,below,font=\scriptsize] {max--min pairs};
\begin{scope}[shift={(10.00,0)}]
  \draw[axis] (0,0) -- (3.05,0);
  \draw[axis] (0,0) -- (0,2.35);
  \draw[own]  (0.38,0) rectangle (0.60,1.31);
  \draw[rival] (0.64,0) rectangle (0.86,1.18);
  \draw[own]  (1.30,0) rectangle (1.52,1.31);
  \draw[rival] (1.56,0) rectangle (1.78,1.18);
  \draw[own]  (2.22,0) rectangle (2.44,1.31);
  \draw[rival] (2.48,0) rectangle (2.70,1.18);
  \draw[densely dashed,draw=black,thin] (0.28,1.31) -- (2.82,1.31)
    node[right,font=\scriptsize] {\(E_i/J\)};
  \node[below] at (0.62,-0.03) {$1$};
  \node[below] at (1.54,-0.03) {$2$};
  \node[below] at (2.00,-0.03) {$\cdots$};
  \node[below] at (2.46,-0.03) {$J$};
  \node at (1.52,-0.62) {\textbf{(c)}};
\end{scope}
\draw[own] (3.15,-1.38) rectangle (3.45,-1.10);
\node[right] at (3.52,-1.24) {contestant's effort};
\draw[rival] (6.55,-1.38) rectangle (6.85,-1.10);
\node[right] at (6.92,-1.24) {rival's effort};
\end{tikzpicture}
{\normalsize\begin{spacing}{1}
\caption{Pairwise equalization: repeatedly averaging the largest and smallest battlefield efforts preserves total effort and converges to a uniform effort allocation. Here, contestant~\(i\) has fixed total effort \(E_i\), the rival uses effort \(a>0\) on every battlefield, and \(\bar e_i=(e_{ij}+e_{ik})/2\).}
\label{fig:pairwise-equalization}
\end{spacing}}
\end{figure}
\afterfigurespace

To formulate the scalar game, measure total effort in units of the reward range: \(X_i\equiv E_i/\Lambda\). A uniform choice \(X_i\) assigns actual effort \(\Lambda X_i/J\) to each battlefield. Conditional on realized positive efforts, let \(t\) denote own total effort divided by the rival's. The common battlefield win probability is \(p=t^r/(1+t^r)\), so the number of victories is binomial with parameters \(J\) and \(p\). Normalized expected reward is therefore
\begin{equation}
Q_J(p)=\sum_{n=0}^{J}\widetilde q_n\binom{J}{n}p^n(1-p)^{J-n},\qquad
\varphi_{J,r}(t)=Q_J\!\left(\frac{t^r}{1+t^r}\right)
=\frac{\sum_{n=0}^{J}\binom{J}{n}\widetilde q_n t^{rn}}{(1+t^r)^J}.
\label{eq:normalized-payoff-kernel}
\end{equation}
The functions \(Q_J(p)\) and \(\varphi_{J,r}(t)\) express the same reward in terms of win probability and effort ratio, respectively; also, \(\varphi_{J,r}(t)=\varphi_{J,1}(t^r)\). Dependence on \(\widetilde{\boldsymbol q}\) is implicit, and we suppress subscripts when unambiguous. Prize symmetry gives \(\varphi(t)+\varphi(1/t)=1\) for \(t>0\): reversing the ratio exchanges the contestants. We use the endpoint extensions \(\varphi(0)=0\) and \(\varphi(\infty)=1\). These ratio limits do not determine the payoff at mutual inactivity.

For normalized total efforts \(x\) and \(y\) of \(A\) and \(B\), respectively, \(A\)'s normalized prize is \(\mathcal U(x,y)=\varphi(x/y)\) when \(y>0\), with \(\mathcal U(x,0)=1\) for \(x>0\) and \(\mathcal U(0,0)=1/2\). The normalized payoffs are \(\mathcal U(x,y)-cx\) for \(A\) and \(1-\mathcal U(x,y)-y\) for \(B\). The following theorem holds for every integer \(J\geq1\), every admissible prize schedule, every \(c\in(0,1]\), and every \(r\in(0,1]\).

\begin{theorem}[Existence and scalar reduction]\label{thm:existence}\label{prop:scalar-reduction}
An equilibrium exists. Every equilibrium is uniform, and the scalar restriction preserves the equilibrium set.
\end{theorem}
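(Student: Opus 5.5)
We will first establish existence of an equilibrium of the scalar game. After that, Lemma~\ref{lem:core-reductions} lifts this equilibrium to the unrestricted game and then yields uniformity of every equilibrium.

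\textbf{Step 1: a compact continuous approximation.} Effort above \(1/c\) (normalized) is strictly dominated for \(A\), since reward lies in \([0,1]\). Likewise effort above \(1\) is dominated for \(B\). So we restrict to \(x\in[0,1/c]\), \(y\in[0,1]\). For \(\varepsilon>0\) we replace the ratio \(x/y\) by \((x+\varepsilon)/(y+\varepsilon)\). The payoff \(\varphi((x+\varepsilon)/(y+\varepsilon))-cx\) is then continuous on a compact rectangle. Glicksberg's theorem gives a mixed equilibrium \((F_\varepsilon,G_\varepsilon)\) of the perturbed game. Along a sequence \(\varepsilon\to0\), weak compactness gives limits \(F,G\).

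\textbf{Step 2: passing to the limit.} The perturbed payoff converges to \(\mathcal U\) uniformly on compact subsets away from \((0,0)\), so the discontinuity sits only at mutual inactivity. The plan has three parts.
\begin{enumerate}[(a)]
\item Show that the limits cannot both place atoms at zero. If \(B\) had an atom at \(0\) in the limit, \(A\) could deviate to a tiny \(x>0\). Against \(B\)'s zero atom this tiny effort yields reward near \(1\) rather than about \(1/2\), at negligible cost. In the perturbed games this deviation already beats zero by a margin bounded below once \(x\gg\varepsilon\). This rules out a common atom at \(0\), and it also shows that the low-cost player \(A\) never sits at \(0\) when \(B\) has an atom there.
\item With no joint atom at \((0,0)\), the payoffs against the limit strategies are limits of the perturbed payoffs, because the set of discontinuity has measure zero under \(F\times G\). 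The equilibrium payoffs of the perturbed games therefore converge to the limit payoffs.
\item For any deviation \(x>0\), continuity of \(y\mapsto\mathcal U(x,y)\) on \([0,1]\) gives that the deviation payoff against \(G_\varepsilon\) converges to the payoff against \(G\). Hence \(x>0\) is not profitable. For the zero-effort deviation we check directly: its payoff \(\int\mathcal U(0,y)\,dG\) is at most the limit of payoffs of small positive deviations, since \(\mathcal U(0,y)\le\liminf_{x\downarrow0}\mathcal U(x,y)\). The same argument applies to \(B\).
\end{enumerate}
This gives an equilibrium of the scalar restriction.

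\textbf{Step 3: lifting and uniformity.} To lift, we take the scalar equilibrium and note from Step 2 that each contestant participates with positive probability. If \(A\) were inactive almost surely, \(B\) would best respond with vanishing effort, and no best response exists, a contradiction; so \(A\) participates, and similarly for \(B\), who would otherwise face \(A\) choosing tiny effort. Lemma~\ref{lem:core-reductions}(i) then says that against the rival's uniform mixture every best-response vector is uniform. A uniform deviation is just a scalar deviation, so the scalar equilibrium is an equilibrium of the unrestricted game. Lemma~\ref{lem:core-reductions}(iii) now shows every equilibrium is uniform. Because uniform equilibria are exactly equilibria of the scalar restriction (again using (i) for unrestricted deviations), the scalar restriction preserves the equilibrium set.

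The main obstacle is Step 2, namely controlling the limit near \((0,0)\): we must exclude a joint zero atom and verify the zero-effort deviation, where \(\mathcal U\) is not upper semicontinuous. I would handle this with a quantitative lemma. If \(G_\varepsilon\) has mass \(m\) on \([0,\delta]\), then \(A\)'s payoff from deviating to effort \(\sqrt\delta\) exceeds her payoff from \(0\) by roughly \(m(q\text{-gain})-c\sqrt\delta\). This forces \(A\)'s mass near zero to vanish whenever \(m\) stays bounded away from zero.
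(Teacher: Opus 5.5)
The gap is in Step~2(a), and (b) and the first half of (c) depend on it. Your deviation argument compares a tiny positive effort with \emph{exactly} zero effort, and it asserts that \(A\) earns about \(1/2\) on \(B\)'s near-zero mass. But a limit atom of \(F\) at zero can be generated by perturbed efforts \(\eta_\varepsilon\to0\) with \(\eta_\varepsilon/\varepsilon\to\infty\) that exceed \(B\)'s near-zero efforts. Then \(\varphi((\eta_\varepsilon+\varepsilon)/(y+\varepsilon))\approx1\) on that mass, and \(A\) gains nothing by jumping to \(\sqrt\delta\). So your quantitative lemma, that \(A\)'s mass near zero must vanish whenever \(G_\varepsilon([0,\delta])\) stays bounded away from zero, is false as stated. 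In this configuration it is \(B\), not \(A\), who is losing.

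A correct exclusion must be two-sided. On \(S=[0,\delta]^2\) the two perturbed rewards sum to \((F_\varepsilon\times G_\varepsilon)(S)\), so one contestant collects at most half of it. Moving all her mass in \([0,\delta]\) to \(\sqrt\delta\) then gains her at least \((\tfrac12-o(1))\) times the rival's mass on \([0,\delta]\), minus \(O(\sqrt\delta)\), per unit moved. Monotonicity of \(\varphi\) ensures no loss elsewhere. Without such an argument, (b) is unavailable: near \((0,0)\) the perturbed reward can approach any value in \([0,1]\), so a joint limit atom breaks convergence of equilibrium payoffs. Then ``hence \(x>0\) is not profitable'' in (c) has no limiting equilibrium payoff to compare against.

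More importantly, (a) and (b) are unnecessary, and this is where the paper's route differs from yours. The paper works with the constant-sum transform \(K_c(x,y)=\mathcal U(x,y)-cx+y\) and the minimax value \(v_\varepsilon\to v\). It passes to the limit only in the pointwise inequalities \(K_c(x,F_B)\le v\le K_c(F_A,y)\) for fixed \(x,y>0\), which is exactly your uniform-convergence step in (c). It extends these to zero effort by your liminf comparison, then integrates to obtain \(K_c(F_A,F_B)=v\).

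Your Glicksberg setup admits the same shortcut through budget balance:
\begin{itemize}
\item Take subsequential limits \(\pi_A,\pi_B\) of the perturbed equilibrium payoffs, so that \(\Pi_A(x;G)\le\pi_A\) and \(\Pi_B(y;F)\le\pi_B\) for all \(x,y\ge0\).
\item Note that \(\pi_A^\varepsilon+\pi_B^\varepsilon=1-c\int x\,dF_\varepsilon(x)-\int y\,dG_\varepsilon(y)\). On bounded supports this converges to \(1-c\int x\,dF(x)-\int y\,dG(y)=\int\Pi_A(x;G)\,dF(x)+\int\Pi_B(y;F)\,dG(y)\). The \(\tfrac12+\tfrac12\) split at mutual inactivity does not affect the sum.
\item Integrating the two inequalities then forces equality in both.
\end{itemize}
A joint atom at zero is thus excluded a posteriori rather than required as an input. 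Your Step~3 coincides with the paper's lifting argument and is fine.
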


Appendix~\ref{app:existence-scalar-reduction} constructs an equilibrium of the scalar restriction and shows that both contestants participate with positive probability in every such equilibrium. The equalization argument in Lemma~\ref{lem:core-reductions}\emph{(i)} then rules out profitable nonuniform deviations, so every scalar equilibrium is also an equilibrium of the original contest. The existence of one such uniform equilibrium and part~\emph{(iii)} imply that every original equilibrium is uniform.

This is an equilibrium implication; contestants may still choose different total efforts and randomize over them. Henceforth, we describe equilibrium strategies as distributions of normalized total effort.

\subsection{Equilibrium Forms and Participation}\label{S2:equilibrium-forms}

For a distribution \(F\) of normalized total effort, its \emph{support}, denoted \(\operatorname{supp}F\), is the set of effort levels for which every open interval around the level receives positive probability; its \emph{positive support} is \(\operatorname{supp}F\cap(0,\infty)\). A support point need not be played with positive probability. A \emph{point mass} assigns probability one to a single effort; a \emph{nondegenerate} distribution is not concentrated on a single effort. An \emph{atom at zero} means positive probability of inactivity; an \emph{accumulation point} is a level approached by infinitely many distinct support points. We call an equilibrium \emph{pure} if both contestants choose a single total-effort level, \emph{semi-pure} if exactly one contestant chooses a single positive total-effort level while the other randomizes nontrivially, and \emph{two-sided mixed} if both randomize nontrivially.
\begin{theorem}[Equilibrium taxonomy and support restrictions]\label{thm:equilibrium-taxonomy}
Neither contestant is inactive with probability one. Every equilibrium is pure, semi-pure, or
two-sided mixed. A semi-pure mixer has finite support. In a two-sided mixed equilibrium, the
positive supports are either both finite or both countably infinite; in the latter case, zero is
their only possible accumulation point. Contestant~\(A\) has no atom at zero, whereas contestant~\(B\)
can have one only if \(c<1\).
\end{theorem}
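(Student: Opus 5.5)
By Theorem~\ref{thm:existence} it suffices to work in the scalar game, with payoffs \(\mathcal U(x,y)-cx\) for \(A\) and \(1-\mathcal U(x,y)-y\) for \(B\). The basic tool is the structure of \(\varphi=\varphi_{J,r}\). Since \(\varphi(t)=\varphi_{J,1}(t^r)\) is a ratio whose numerator is a polynomial of degree \(J\) in \(s=t^r\) and whose denominator is \((1+s)^J\), the function is real-analytic on \((0,\infty)\). For any fixed rival mixture \(G\) with positive support, \(A\)'s expected payoff on \((0,\infty)\),
\[
x\mapsto\int\varphi(x/y)\,dG(y)+G(\{0\})-cx ,
\]
is also real-analytic, and similarly for \(B\).

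\textbf{Step 1: participation and atoms at zero.} First I rule out inactivity with probability one. If \(B\) plays \(0\) surely, \(A\) gets \(1\) from any \(x>0\) but only \(1/2\) from \(x=0\). So \(A\) wants an arbitrarily small positive effort, and no best response exists, which is a contradiction. The same argument applies to \(A\). Next, \(A\) has no atom at zero. Suppose instead \(A\) puts mass \(\alpha>0\) at zero. Then \(B\) playing a tiny \(\varepsilon>0\) beats that mass, gaining \(\alpha/2\) discretely at cost \(\varepsilon\). Hence \(B\)'s equilibrium payoff is at least roughly \(1-\mathbb E[\text{A's win}]\) plus \(\alpha/2\). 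The plan is to combine this with the constant-sum reformulation from Lemma~\ref{lem:core-reductions}(ii) and an equal-expected-cost type comparison. Since \(c\le1\), \(A\) could mimic \(B\)'s strategy at cost weakly lower than \(B\)'s, and by symmetry \(\varphi(t)+\varphi(1/t)=1\) this guarantees \(A\) a payoff of at least \(1/2-c\,\mathbb E[Y]\). Zero must also be a best response for \(A\) if it is played with positive probability. Writing both indifference conditions and using that \(B\) must then put no atom at zero (otherwise mutual zeros give \(A\) an incentive to jump to \(0^+\)) should produce a contradiction. The same computation for \(B\) only bites when \(c=1\), where the roles are symmetric. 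I expect these inequalities to need care but to be elementary.

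\textbf{Step 2: finiteness and accumulation.} If a player's support is contained in a compact subset of \((0,\infty)\), her rival's payoff function extends analytically to a neighbourhood of \((0,\infty)\) and tends to \(-\infty\) linearly as effort grows. The rival's best responses are maxima of this nonconstant analytic function, because the linear cost prevents it from being constant. The set of maxima is therefore discrete in \((0,\infty)\) and bounded, and it can accumulate only at \(0\). Supports are bounded because effort above \(1/c\) or \(1\) is dominated. This yields at most countably many positive support points, with zero as the only possible accumulation point.

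For a semi-pure mixer, the rival plays a single \(y_0>0\), so the mixer's payoff is \(\varphi(x/y_0)-cx\). In the variable \(s=(x/y_0)^r\), the first-order condition reads
\[
\varphi_{J,1}'(s)\cdot r s^{1-1/r}=c\,y_0 .
\]
Using \(s\mapsto s^{1/r}\), this condition, or rather the statement that the payoff attains its maximum value, has finitely many solutions. Near \(0\), the behaviour \(\varphi\sim \binom J1\widetilde q_1 s+\dots\) (or a higher-order leading term) makes the payoff either increase strictly or stay below the maximum, so the maxima cannot accumulate at \(0\). Together with boundedness, this gives finite support.

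For the two-sided case, I would show that if one player's positive support is finite, then the other player's is finite too. When \(B\)'s support is finite and bounded away from zero, the Step 1 result that \(A\) has no atom, together with the analytic argument near \(0\) used in the semi-pure case, shows that \(A\)'s best responses cannot accumulate at \(0\). \(A\)'s support is therefore finite. Conversely, suppose \(A\)'s support is finite. \(B\) may have an atom at \(0\), but her positive best responses are maxima of a finite sum of analytic terms, and the same expansion at \(0\) prevents accumulation there. So finiteness passes in both directions, and otherwise both supports are countably infinite.

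\textbf{Main obstacle.} The hardest step is controlling accumulation at zero in the two-sided case. There the rival's support may itself accumulate at zero, so the payoff \(\int\varphi(x/y)\,dG(y)\) need not be analytic at \(x=0\). The plan is to use the small-\(x\) asymptotics, \(\varphi(t)\approx k t^{rm}\) with \(m\) the first index such that \(\widetilde q_m>0\), to show that finite support on one side forces the other side's best responses to be isolated near zero.
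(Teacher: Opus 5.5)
Your proposal has a genuine gap in Step~1: it does not prove that \(A\) has no atom at zero, which is one of the theorem's conclusions. Suppose \(\zeta_A>0\). Then \(\zeta_B=0\), \(A\)'s equilibrium payoff is zero, and her expected prize is \(c\mathbb E[X_A]\). The inequalities you list are two. Mimicking \(B\) gives \(\mathbb E[X_B]\geq1/(2c)\). \(B\)'s payoff \(1-c\mathbb E[X_A]-\mathbb E[X_B]\) is at least her inactivity payoff \(\zeta_A/2>0\). For \(c>1/2\) these can hold together (at \(c=1\), take \(\mathbb E[X_B]=1/2\) and \(\mathbb E[X_A]\) small), so they give no contradiction; you need a relation between the two means. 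The paper supplies it with the cost-balance identity \(c\mathbb E[X_A]=\mathbb E[X_B]\equiv\chi\). That identity is proved inside this theorem, so it cannot be cited. Because \(\mathcal U\) depends only on \(x/y\), \(m(x,y)\equiv x\mathcal U_x=-y\mathcal U_y=rp(1-p)Q_J'(p)\) is bounded, which justifies differentiating expected prizes under the integral. Multiplying each first-order condition by own effort gives \(\int m(x,y)\,dF_B(y)=cx\) and \(\int m(x,y)\,dF_A(x)=y\) on the supports (both sides vanish at zero effort). Integrating both identities yields the same double integral. Then \(B\)'s payoff is \(1-2\chi>0\), while mimicking forces \(\chi\geq1/(2c)\geq1/2\). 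You allude to ``an equal-expected-cost type comparison'' but neither state nor derive it. Your mimicking idea can also be completed without it. Letting \(B\) mimic \(A\) gives \(\mathbb E[X_B]\leq1/2+(1-c)\mathbb E[X_A]\). Together with \(\mathbb E[X_B]\geq1/(2c)\), this forces \(\mathbb E[X_A]\geq1/(2c)\) and hence a negative payoff for \(B\) when \(c<1\). At \(c=1\), relabeling and Lemma~\ref{lem:core-reductions}\emph{(ii)} make \((F_A,F_A)\) an equilibrium with two atoms at zero, which you have already excluded.

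A smaller gap is in Step~2. The claim that zero is the only possible accumulation point matters precisely in the \(\mathrm M_\infty\) case, where the rival's support accumulates at zero. Yet you argue analyticity only when the rival's support lies in a compact subset of \((0,\infty)\), and your opening assertion for general rival mixtures is unproved. Analyticity at \(x=0\) is irrelevant; what is needed is analyticity on \((0,\infty)\) against an arbitrary rival distribution. This holds because, for complex \(x\) with \(|x-x_0|<x_0\), one has \(\operatorname{Re}x^r>0\), so \(p=x^r/(x^r+y^r)\) satisfies \(|p|<1\) and \(|1-p|<1\) for every \(y>0\). The integrands are then uniformly bounded holomorphic functions on a fixed disc, so their integral is holomorphic; the paper does the same with a series in \(u=(x/x_0)^r-1\).

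Beyond these points, your finiteness argument is a valid alternative to the paper's. The paper clears denominators in the weighted first-order condition and applies the root bound for sums of real powers (Lemma~\ref{lem:generalized-polynomial-zeros}). You instead note that, against a rival with finite positive support, payoff minus its limit at \(0^+\) equals \(G(s)-d\,s^{1/r}\). Here \(s=x^r\), \(d\) is own marginal cost, and \(G\) is analytic at \(s=0\) with \(G(0)=0\). This expression is not identically zero and hence is sign-definite near zero, so best responses cannot accumulate there. That avoids root counting, whereas the paper's lemma also bounds the number of stationary points. Your ``main obstacle'' is moot: the theorem only needs finite support on one side to force finite support on the other, and in that case the rival's support is finite.
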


Figure~\ref{fig:equilibrium-taxonomy} summarizes the five support patterns permitted by Theorem~\ref{thm:equilibrium-taxonomy}.
\begin{figure}[H]
\centering
\begin{tikzpicture}[
  axis/.style={-{Latex[length=1.2mm]},thin},
  atom/.style={circle,fill=black,inner sep=1.35pt},
  optional/.style={circle,draw=black,fill=white,inner sep=1.3pt},
  every node/.style={font=\footnotesize}
]
\foreach \dx in {0,3.05,6.10,9.15,12.20}{
  \begin{scope}[shift={(\dx,0)}]
    \draw[axis] (.22,.58) -- (2.62,.58);
    \draw[axis] (.22,0) -- (2.62,0);
    \node[left] at (.18,.58) {$A$};
    \node[left] at (.18,0) {$B$};
    \node[below] at (.22,-.03) {$0$};
  \end{scope}
}
\begin{scope}
  \node at (1.42,1.06) {\textbf{(a)} \(\mathrm P\)};
  \node[atom] at (1.50,.58) {};
  \node[atom] at (1.15,0) {};
\end{scope}
\begin{scope}[shift={(3.05,0)}]
  \node at (1.42,1.06) {\textbf{(b)} \(\mathrm{SP}_A\)};
  \node[atom] at (1.50,.58) {};
  \node[optional] at (.22,0) {};
  \node[atom] at (1.02,0) {};
  \node[atom] at (1.88,0) {};
\end{scope}
\begin{scope}[shift={(6.10,0)}]
  \node at (1.42,1.06) {\textbf{(c)} \(\mathrm{SP}_B\)};
  \node[atom] at (.84,.58) {};
  \node[atom] at (1.76,.58) {};
  \node[atom] at (1.43,0) {};
\end{scope}
\begin{scope}[shift={(9.15,0)}]
  \node at (1.42,1.06) {\textbf{(d)} \(\mathrm M_f\)};
  \node[atom] at (.68,.58) {};
  \node[atom] at (1.18,.58) {};
  \node[atom] at (1.82,.58) {};
  \node[atom] at (2.35,.58) {};
  \node[optional] at (.22,0) {};
  \node[atom] at (.82,0) {};
  \node[atom] at (1.43,0) {};
  \node[atom] at (2.12,0) {};
\end{scope}
\begin{scope}[shift={(12.20,0)}]
  \node at (1.42,1.06) {\textbf{(e)} \(\mathrm M_\infty\)};
  \node at (.58,.58) {$\cdots$};
  \node[atom] at (.94,.58) {};
  \node[atom] at (1.20,.58) {};
  \node[atom] at (1.57,.58) {};
  \node[atom] at (2.02,.58) {};
  \node[atom] at (2.45,.58) {};
  \node[optional] at (.22,0) {};
  \node at (.61,0) {$\cdots$};
  \node[atom] at (.98,0) {};
  \node[atom] at (1.28,0) {};
  \node[atom] at (1.68,0) {};
  \node[atom] at (2.16,0) {};
\end{scope}
\end{tikzpicture}
{\normalsize\begin{spacing}{1}
\caption{Equilibrium-support patterns.}
\label{fig:equilibrium-taxonomy}
\end{spacing}}
\end{figure}

The subscripts in \(\mathrm{SP}_A\) and \(\mathrm{SP}_B\) identify the pure contestant. In \(\mathrm M_f\) and \(\mathrm M_\infty\), both contestants randomize over finite or countably infinite positive supports, respectively.
Filled dots denote positive support points. In panels~(b), (d), and~(e), the hollow dot marks an optional atom at zero for contestant~\(B\); the three-dot symbols indicate infinitely many positive support points accumulating at zero.

Appendix~\ref{app:proof-general-taxonomy} proves the theorem. The proof has three main steps. First, neither contestant can remain inactive for sure: against an inactive rival, a tiny positive effort beats zero and any larger effort. Nor can both have atoms at zero, since replacing zero by a tiny effort yields a fixed prize gain at vanishing cost whenever the rival is inactive. Second, because prizes depend only on effort ratios, averaging the first-order conditions over equilibrium efforts gives \(c\mathbb E[X_A]=\mathbb E[X_B]\equiv\chi\). If the lower-cost contestant \(A\) had an atom at zero, \(B\) would have none and \(A\)'s zero action would earn zero. By independently drawing from \(B\)'s effort distribution, \(A\) would instead obtain an expected prize of one half by symmetry and pay \(c\chi\). But \(B\)'s payoff \(1-2\chi\) is positive because her option of inactivity yields a positive payoff when \(A\) has an atom at zero. Hence \(\chi<1/2\); since \(c\leq1\), this deviation would be profitable, a contradiction. Third, expected payoff is analytic on positive efforts (locally equal to a convergent Taylor series), so distinct best responses cannot accumulate above zero unless payoff is constant, which bounded prizes and linear costs rule out. With finite rival support, the first-order condition reduces to a nonzero finite sum of real powers and therefore has only finitely many positive roots.

\section{Equilibrium Characterization and Multiplicity}\label{S3}

\subsection{Common Outcomes}\label{S3:outcomes}

\begin{proposition}[Common outcomes]
\label{prop:outcome-equivalence}
All equilibria give each contestant the same expected total effort, prize share, effort cost, and net payoff.
\end{proposition}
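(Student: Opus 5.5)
The plan is to work in the scalar game, which Theorem~\ref{thm:existence} allows, and to use the constant-sum reformulation behind Lemma~\ref{lem:core-reductions}\emph{(ii)}. Define the modified payoffs
\[
\widetilde\pi_A(x,y)=\mathcal U(x,y)-cx+y,\qquad \widetilde\pi_B(x,y)=1-\mathcal U(x,y)-y+cx .
\]
They sum to one, and each differs from the original payoff only by a term that depends on the rival's action. Best responses, and hence equilibria, are therefore unchanged. Since the game is constant-sum with value well defined, every equilibrium gives \(A\) the same modified payoff \(v\) by the standard minimax argument, using interchangeability from Lemma~\ref{lem:core-reductions}\emph{(ii)}. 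Concretely, if \((F,G)\) and \((F',G')\) are two equilibria, then \((F,G')\) is also an equilibrium. The saddle-point inequalities give
\[
\widetilde\pi_A(F,G)\ge\widetilde\pi_A(F,G')\ge\widetilde\pi_A(F',G')
\]
and the reverse chain, so the values coincide. Expectations are finite because expected efforts are finite.

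The next step is to separate effort from prize using the equal-expected-cost identity \(c\,\mathbb E[X_A]=\mathbb E[X_B]=\chi\) from the proof sketch of Theorem~\ref{thm:equilibrium-taxonomy}. I would re-derive it as a lemma. Fix an equilibrium and take any \(x\) in \(A\)'s positive support that is a best response. Consider the scaling deviation in which \(A\) plays \(\lambda X_A\) for \(\lambda\) near one. Since \(\mathcal U\) depends only on ratios, \(\mathbb E[\mathcal U(\lambda X_A,X_B)]\) is differentiable at \(\lambda=1\). Optimality at \(\lambda=1\) gives
\[
\frac{d}{d\lambda}\mathbb E\,\mathcal U(\lambda X_A,X_B)\Big|_{\lambda=1}=c\,\mathbb E[X_A].
\]
Scaling \(B\) instead by \(\mu\) shifts the same ratio in the opposite direction. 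Homogeneity of degree zero gives the identity \(\partial_\lambda+\partial_\mu=0\) at \((1,1)\), so \(\mathbb E[X_B]\) equals the same derivative. Atoms at zero are unaffected by scaling, and at most one contestant has an atom at zero by Theorem~\ref{thm:equilibrium-taxonomy}, so the boundary value \(\mathcal U(0,0)\) never enters.

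The main obstacle is justifying differentiation under the expectation. I would get domination from \(|t\varphi'(t)|\) being bounded, which holds because \(\varphi(t)=\varphi_{J,1}(t^r)\) is a rational function with the correct degrees, so that \(t\varphi'(t)\to0\) at both ends. The first-order condition for the scaling deviation then follows from optimality even for mixed strategies, since scaling a whole distribution is itself a feasible mixed strategy.

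Finally, combine the two facts. \(A\)'s modified payoff is
\[
v=\mathbb E[\mathcal U]-c\,\mathbb E[X_A]+\mathbb E[X_B]=\mathbb E[\mathcal U],
\]
because \(c\,\mathbb E[X_A]=\mathbb E[X_B]\). Hence \(A\)'s expected prize share equals the common value \(v\) in every equilibrium, and \(B\)'s is \(1-v\). It remains to show that \(\chi\) is common across equilibria. For this I would use interchangeability again: \((F,G')\) is an equilibrium, so the identity applied there gives \(c\,\mathbb E_F[X_A]=\mathbb E_{G'}[X_B]\). Applied at \((F',G')\) it gives \(c\,\mathbb E_{F'}[X_A]=\mathbb E_{G'}[X_B]\). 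Thus \(\mathbb E[X_A]\) is common across equilibria, and symmetrically so is \(\mathbb E[X_B]\). Expected costs \(c\,\mathbb E[X_A]\) and \(\mathbb E[X_B]\), prize shares, and net payoffs (prize minus cost) are then all common, and undoing the normalization \(X_i=E_i/\Lambda\) and \(\widehat\pi_i\) transfers the conclusion to actual efforts and payoffs.
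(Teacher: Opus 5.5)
Your proposal is correct and matches the paper's proof: the constant-sum value \(v\) is common across equilibria, the cost-balance identity \(c\,\mathbb E[X_A]=\mathbb E[X_B]\) turns \(v\) into \(\mathbb E[\mathcal U]\), and applying that identity to the cross-pair \((F,G')\) makes \(\chi\) common. The only difference is how you obtain cost balance: you scale the whole distribution by \(\lambda\), a feasible mixed deviation justified by dominated convergence since \(t\varphi'(t)\) is bounded, whereas the proof of Theorem~\ref{thm:equilibrium-taxonomy} averages pointwise first-order conditions over the supports; both routes are valid.
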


Lemma~\ref{lem:core-reductions}\emph{(ii)} allows any equilibrium strategy of \(A\) to be paired with any equilibrium strategy of \(B\). Denote these strategy sets by \(\mathcal E_A\) and \(\mathcal E_B\); the equilibrium set is \(\mathcal E_A\times\mathcal E_B\). If either set contains two distinct strategies, their mixtures generate a continuum of equilibria.

For equilibria \((F_{A,h},F_{B,h})\), \(h\in\{1,2\}\), let \(\chi_h=c\mathbb E[X_{A,h}]=\mathbb E[X_{B,h}]\), using the cost-balance identity \(c\mathbb E[X_A]=\mathbb E[X_B]\). Applying the same identity to the cross-pair \((F_{A,1},F_{B,2})\) gives \(c\mathbb E[X_{A,1}]=\mathbb E[X_{B,2}]\), so \(\chi_1=\chi_2\equiv\chi\). The transformed payoff \(K_c(x,y)=\mathcal U(x,y)-cx+y\) has the same expectation \(v\) in every equilibrium of the constant-sum reformulation. Cost balance gives
\[
v=\mathbb E[K_c(X_A,X_B)]
=\mathbb E[\mathcal U(X_A,X_B)]-\chi+\chi
=\mathbb E[\mathcal U(X_A,X_B)].
\]
Thus expected prize shares are \(\Omega_A=q_0+\Lambda v\) and \(\Omega_B=1-\Omega_A\). In original units, every equilibrium has expected total efforts \((\Lambda\chi/c,\Lambda\chi)\), effort costs \((\Lambda\chi,\Lambda\chi)\), and payoffs \((\Omega_A-\Lambda\chi,\Omega_B-\Lambda\chi)\).

If \(B\) has an atom at zero, indifference implies that her payoff is \(q_0\), so \(\Omega_A=q_J-\Lambda\chi\). Equilibrium selection can change participation and the dispersion of effort, but not these expected outcomes.

\subsection{Equilibrium Characterization}\label{S3:characterization}

Let \(x\) and \(y\) denote normalized total efforts, \(\mathcal U(x,y)\) contestant~\(A\)'s normalized prize, and \(\varphi(t)\) the normalized prize at own-to-rival effort ratio \(t\). For total-effort distributions \(F_A,F_B\) and deviations \(x,y\geq0\), define the normalized expected payoffs by
\[
\Pi_A(x;F_B)\equiv\int\mathcal U(x,y)\,dF_B(y)-cx,\quad
\Pi_B(y;F_A)\equiv\int[1-\mathcal U(x,y)]\,dF_A(x)-y.
\]
By Theorem~\ref{thm:existence}, it suffices to rule out pure total-effort deviations: randomized deviations average their payoffs. These averages are finite because prizes are bounded and admissible strategies have finite expected effort.

\subsubsection{Best-response geometry}

Fix a positive rival effort.{\interfootnotelinepenalty=10000\footnote{Equilibrium also requires the rival to have no profitable deviation, including to inactivity.}} At effort ratio \(t\), normalized payoff is \(\varphi(t)-\sigma t\), where \(\sigma>0\) equals marginal cost times the fixed rival effort. The best-response ratios are \(\Gamma(\sigma)\equiv\arg\max_{t\geq0}\{\varphi(t)-\sigma t\}\). Continuity and \(\varphi(t)-\sigma t\to-\infty\) as \(t\to\infty\) ensure that this set is nonempty, closed, and bounded. The lowest straight line of slope \(\sigma\) lying globally above \(\varphi\) is the \emph{supporting line}; its \emph{contact points} form the \emph{contact set} \(\Gamma(\sigma)\). The line's intercept is the maximal payoff, so all contacts are equally good choices given \(\sigma\). At a positive contact \(t\), differentiability gives \(\varphi'(t)=\sigma\), so the line is a \emph{supporting tangent}. It is \emph{strict} if it lies strictly above \(\varphi\) elsewhere. The \emph{least concave majorant} \(\operatorname{cav}\varphi\) is the pointwise smallest concave function lying weakly above \(\varphi\). If \(\Gamma(\sigma)\) contains at least two points, write \(t_-=\min\Gamma(\sigma)\) and \(t_+=\max\Gamma(\sigma)\); the segment of the supporting line between them is a \emph{supporting chord}, and it is an \emph{origin chord} when \(t_-=0\).

\begin{figure}[H]
\centering
\begin{tikzpicture}[
  x=.8cm,y=.9cm,
  axis/.style={-{Latex[length=1.5mm]},thin},
  curve/.style={draw=black,line width=.9pt},
  cav/.style={draw=black,line width=1.2pt,dash pattern=on 5pt off 2pt},
  support/.style={draw=black,line width=.8pt,dash pattern=on 3pt off 1.4pt on .7pt off 1.4pt},
  gap/.style={fill=black!12,draw=none},
  guide/.style={draw=black!55,densely dotted,thin},
  contact/.style={circle,fill=black,inner sep=1.35pt},
  every node/.style={font=\footnotesize}
]
\begin{scope}[shift={(0,0)}]
  \draw[axis] (0,0) -- (4.05,0) node[right] {$t$};
  \draw[axis] (0,0) -- (0,2.45);
  \draw[support] (0,0.81) -- (3.72,2.149);
  \draw[curve,domain=0:3.7,samples=80]
    plot (\x,{2.25*\x/(1+\x)});
  \node[contact] at (1.5,1.35) {};
  \draw[guide] (1.5,1.35) -- (1.5,0);
  \draw[thin] (1.5,0.06) -- (1.5,-0.06)
    node[below] {$t^*$};
  \node at (2.0,-0.68) {\textbf{(a)}};
\end{scope}

\begin{scope}[shift={(5.05,0)}]
  \draw[axis] (0,0) -- (4.05,0) node[right] {$t$};
  \draw[axis] (0,0) -- (0,2.45);
  \fill[gap] (0,0) -- (1,1.1)
    plot[domain=1:0,samples=45] (\x,{2.2*\x*\x/(1+\x*\x)}) -- cycle;
  \draw[support] (0,0) -- (2.15,2.365);
  \draw[curve,domain=0:3.7,samples=90]
    plot (\x,{2.2*\x*\x/(1+\x*\x)});
  \draw[cav] (0,0) -- (1,1.1);
  \node[contact] at (0,0) {};
  \node[contact] at (1,1.1) {};
  \draw[guide] (1,1.1) -- (1,0);
  \draw[thin] (1,0.06) -- (1,-0.06)
    node[below] {$t_+$};
  \node at (2.0,-0.68) {\textbf{(b)}};
\end{scope}

\begin{scope}[shift={(10.1,0)}]
  \def\tminus{1}
  \def\tplus{2.55}
  \def\yminus{0.95}
  \def\yplus{1.58}
  \pgfmathsetmacro{\sig}{(\yplus-\yminus)/(\tplus-\tminus)}
  \pgfmathsetmacro{\aint}{\yminus-\sig*\tminus}
  \draw[axis] (0,0) -- (4.05,0) node[right] {$t$};
  \draw[axis] (0,0) -- (0,2.45);
  \fill[gap] (\tminus,\yminus) -- (\tplus,\yplus)
    plot[domain=\tplus:\tminus,samples=60]
      (\x,{\aint+\sig*\x-0.25*(\x-\tminus)^2*(\x-\tplus)^2}) -- cycle;
  \draw[support] (0,{\aint}) -- (3.65,{\aint+\sig*3.65});
  \draw[curve,domain=0:\tminus,samples=40]
    plot (\x,{(2*\yminus-\sig)*\x+(\sig-\yminus)*\x*\x});
  \draw[curve,domain=\tminus:\tplus,samples=60]
    plot (\x,{\aint+\sig*\x-0.25*(\x-\tminus)^2*(\x-\tplus)^2});
  \draw[curve,domain=\tplus:3.65,samples=40]
    plot (\x,{\yplus+\sig*(\x-\tplus)-0.08*(\x-\tplus)^2});
  \draw[cav] (\tminus,\yminus) -- (\tplus,\yplus);
  \node[contact] at (\tminus,\yminus) {};
  \node[contact] at (\tplus,\yplus) {};
  \draw[guide] (\tminus,\yminus) -- (\tminus,0);
  \draw[guide] (\tplus,\yplus) -- (\tplus,0);
  \draw[thin] (\tminus,0.06) -- (\tminus,-0.06)
    node[below] {$t_-$};
  \draw[thin] (\tplus,0.06) -- (\tplus,-0.06)
    node[below] {$t_+$};
  \node at (2.0,-0.68) {\textbf{(c)}};
\end{scope}

\begin{scope}[shift={(15.15,0)}]
  \def\tone{1.15}
  \def\ttwo{2.75}
  \def\sig{0.64}
  \pgfmathsetmacro{\ytwo}{\sig*\ttwo}
  \draw[axis] (0,0) -- (4.05,0) node[right] {$t$};
  \draw[axis] (0,0) -- (0,2.45);
  \fill[gap] (0,0) -- (\tone,{\sig*\tone})
    plot[domain=\tone:0,samples=50]
      (\x,{\sig*\x-(\sig/(\tone*\tone))*\x*(\x-\tone)^2}) -- cycle;
  \fill[gap] (\tone,{\sig*\tone}) -- (\ttwo,\ytwo)
    plot[domain=\ttwo:\tone,samples=60]
      (\x,{\sig*\x-0.5*(\x-\tone)^2*(\x-\ttwo)^2}) -- cycle;
  \draw[support] (0,0) -- (3.72,{\sig*3.72});
  \draw[curve,domain=0:\tone,samples=50]
    plot (\x,{\sig*\x-(\sig/(\tone*\tone))*\x*(\x-\tone)^2});
  \draw[curve,domain=\tone:\ttwo,samples=60]
    plot (\x,{\sig*\x-0.5*(\x-\tone)^2*(\x-\ttwo)^2});
  \draw[curve,domain=\ttwo:3.7,samples=40]
    plot (\x,{\ytwo+\sig*(1-exp(-(\x-\ttwo)))});
  \draw[cav] (0,0) -- (\ttwo,\ytwo);
  \node[contact] at (0,0) {};
  \node[contact] at (\tone,{\sig*\tone}) {};
  \node[contact] at (\ttwo,\ytwo) {};
  \draw[guide] (\tone,{\sig*\tone}) -- (\tone,0);
  \draw[guide] (\ttwo,\ytwo) -- (\ttwo,0);
  \draw[thin] (\tone,0.06) -- (\tone,-0.06)
    node[below] {$t_1$};
  \draw[thin] (\ttwo,0.06) -- (\ttwo,-0.06)
    node[below] {$t_2$};
  \node[below left] at (0,0) {$0$};
  \node at (2.0,-0.68) {\textbf{(d)}};
\end{scope}
\end{tikzpicture}
{\normalsize\begin{spacing}{1}
\caption{Best-response geometry (schematic).}
\label{fig:supporting-geometry}
\end{spacing}}
\end{figure}
\afterfigurespace

In Figure~\ref{fig:supporting-geometry}, the solid curve is \(\varphi\), a long-dashed segment shows \(\operatorname{cav}\varphi\) where it lies strictly above \(\varphi\), and the dash-dotted line is the supporting line; light-gray shading marks the gap, filled dots the contacts, and dotted guides their ratios. Panels~(a)--(d) have contact sets \(\{t^*\}\), \(\{0,t_+\}\), \(\{t_-,t_+\}\), and \(\{0,t_1,t_2\}\), respectively. In panel~(d), \(t_-=0\) and \(t_+=t_2\); the supporting chord is also tangent to \(\varphi\) at the interior contact \(t_1\).

\subsubsection{Pure equilibrium}\label{S3:pure-equilibrium}

By Theorems~\ref{thm:existence} and~\ref{thm:equilibrium-taxonomy}, any pure equilibrium has positive total efforts and uniform effort vectors. Put \(t=X_B/X_A\). The first-order conditions for contestants \(B\) and \(A\) are, respectively, \(X_A=\varphi'(t)\) and \(\varphi'(1/t)=cX_B\). Differentiating the reward-symmetry identity \(\varphi(t)+\varphi(1/t)=1\) gives \(\varphi'(1/t)=t^2\varphi'(t)\); using \(X_B=tX_A\) in the first-order conditions yields \(t=c\), since \(t>0\) and \(\varphi'(t)>0\) by monotonicity and nonconstancy of the prize schedule. Let \(\sigma_c\equiv\varphi'(c)\). These local conditions identify \((X_A,X_B)=(\sigma_c,c\sigma_c)\) as the sole pure-equilibrium candidate; Proposition~\ref{prop:general-support} adds the global conditions needed to rule out nonlocal effort deviations and inactivity. In those checks, \(t\) denotes \(B\)'s total effort relative to \(A\)'s proposed effort, whereas \(u\) denotes \(A\)'s effort relative to \(B\)'s.
\begin{proposition}[Global tangent test for pure equilibrium]
\label{prop:general-support}
The candidate \((X_A,X_B)=(\sigma_c,c\sigma_c)\) is an equilibrium if and only if, for every \(t,u\geq0\),
\begin{equation}\label{eq:general-pure-support}
\varphi(t)-\sigma_ct\leq\varphi(c)-c\sigma_c,
\qquad
\varphi(u)-c^2\sigma_cu\leq\varphi(1/c)-c\sigma_c.
\end{equation}
\end{proposition}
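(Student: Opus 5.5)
By Theorem~\ref{thm:existence} the scalar restriction preserves the equilibrium set. Moreover, as noted before the proposition, it suffices to rule out pure total-effort deviations. So the candidate \((X_A,X_B)=(\sigma_c,c\sigma_c)\) is an equilibrium if and only if \(A\) has no profitable deviation \(x\geq0\) against \(X_B=c\sigma_c\), and \(B\) has no profitable deviation \(y\geq0\) against \(X_A=\sigma_c\). The plan is to rewrite each of these two conditions in ratio form and show that each is exactly one of the inequalities in \eqref{eq:general-pure-support}.

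\textbf{Contestant \(B\).} Write a deviation as \(y=t\sigma_c\) with \(t\geq0\). Since \(X_A=\sigma_c>0\), \(B\)'s normalized prize is \(1-\varphi(\sigma_c/y)=\varphi(y/\sigma_c)=\varphi(t)\) for \(y>0\), by the identity \(\varphi(t)+\varphi(1/t)=1\). At \(y=0\), \(B\) loses every battlefield, so her prize is \(0=\varphi(0)\). Hence
\[
\Pi_B(t\sigma_c)=\varphi(t)-\sigma_c t \quad\text{for all } t\geq0,
\]
with the endpoint \(t=0\) covered by the extension \(\varphi(0)=0\). The candidate corresponds to \(t=c\). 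So \(B\) has no profitable deviation if and only if \(\varphi(t)-\sigma_ct\leq\varphi(c)-c\sigma_c\) for every \(t\geq0\), which is the first inequality.

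\textbf{Contestant \(A\).} Write a deviation as \(x=u\,c\sigma_c\) with \(u\geq0\). Then
\[
\Pi_A=\varphi(u)-c\cdot uc\sigma_c=\varphi(u)-c^2\sigma_c u,
\]
and at \(u=0\) the payoff is again \(0=\varphi(0)\). The candidate corresponds to \(u=1/c\), where the payoff is \(\varphi(1/c)-c\sigma_c\). So \(A\) has no profitable deviation if and only if the second inequality holds for every \(u\geq0\).

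Together these give both directions of the claim. The only delicate point is the zero-effort endpoint. The discontinuity of \(\mathcal U\) occurs only at mutual inactivity, and this is never reached here, because the rival's candidate effort is strictly positive. It remains to check \(\sigma_c>0\): by monotonicity and nonconstancy, \(\varphi'>0\) on \((0,\infty)\), so \(\sigma_c=\varphi'(c)>0\) and both candidate efforts are positive. With that, the endpoint conventions are the genuine payoffs, and the rest is algebraic substitution.

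Optionally, one can also confirm the tangency consistency: \(\varphi'(1/c)=c^2\varphi'(c)=c^2\sigma_c\). This shows that the second line is tangent at \(u=1/c\), but it is not needed for the equivalence itself.
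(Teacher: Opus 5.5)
Your proposal is correct and matches the paper's argument: writing $B$'s deviation as $y=t\sigma_c$ and $A$'s as $x=uc\sigma_c$ turns each deviation payoff into the left-hand side of \eqref{eq:general-pure-support}, the candidate ($t=c$, $u=1/c$) gives the right-hand side, and $t=0$, $u=0$ cover inactivity. Your explicit check that $\sigma_c>0$, which keeps the rival's effort positive so that mutual inactivity never arises, is a detail the paper leaves implicit.
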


Against the candidate, the normalized payoffs from \(B\)'s ratio-\(t\) deviation and \(A\)'s ratio-\(u\) deviation are \(\varphi(t)-\sigma_ct\) and \(\varphi(u)-c^2\sigma_cu\), respectively. The right-hand sides of \eqref{eq:general-pure-support} are their payoffs at the candidate. Thus \eqref{eq:general-pure-support} is necessary and sufficient to rule out every pure total-effort deviation, including inactivity at \(t=0\) or \(u=0\).

The corresponding per-battlefield efforts are \(e_A^*=\Lambda\sigma_c/J\) and \(e_B^*=ce_A^*\). Equation~\eqref{eq:general-pure-support} requires the tangent lines to \(\varphi\) at \(c\) and \(1/c\) to lie on or above the reward curve at every nonnegative effort ratio.

\subsubsection{Semi-pure equilibrium}\label{S3:semi-pure-equilibrium}

We next consider equilibria in which exactly one contestant mixes. If \(A\) fixes her normalized total effort at \(x_A^\circ>0\), let \(t\geq0\) be \(B\)'s effort divided by \(A\)'s. Thus \(B\) chooses effort \(x_A^\circ t\) and receives normalized payoff \(\varphi(t)-x_A^\circ t\). If \(B\) instead fixes her effort at \(x_B^\circ>0\), let \(\rho\geq0\) be \(A\)'s effort divided by \(B\)'s. Then \(A\) chooses effort \(x_B^\circ\rho\) and receives normalized payoff \(\varphi(\rho)-cx_B^\circ\rho\). Thus the mixer's best-response effort ratios are \(\Gamma(x_A^\circ)\) when \(A\) is pure and \(\Gamma(cx_B^\circ)\) when \(B\) is pure.

For an \(A\)-pure candidate, set \(T_B\equiv X_B/x_A^\circ\), with distribution \(F_{T_B}\) and inactivity probability \(\zeta_B\equiv F_{T_B}(\{0\})\). For a \(B\)-pure candidate, define \(T_A\equiv X_A/x_B^\circ\), \(F_{T_A}\), and \(\zeta_A\) analogously. Below, \(t\) and \(\rho\) denote possible values of \(T_B\) and \(T_A\), respectively. In the respective cases, \(F_B\) is the distribution of \(x_A^\circ T_B\) and \(F_A\) that of \(x_B^\circ T_A\).

The key restriction is the \emph{mean equality}. Contact support makes the mixer indifferent among the available ratios; the probabilities must also equalize expected costs. When \(A\) is pure, the equal-expected-cost identity \(c\mathbb E[X_A]=\mathbb E[X_B]\) becomes \(cx_A^\circ=x_A^\circ\mathbb E[T_B]\), so \(\mathbb E[T_B]=c\). When \(B\) is pure, it becomes \(cx_B^\circ\mathbb E[T_A]=x_B^\circ\), giving \(\mathbb E[T_A]=1/c\). For example, an \(A\)-pure lottery assigning probability \(p_k\) to contact ratio \(t_k\) must satisfy \(\sum_kp_kt_k=c\), including any probability assigned to zero. Nontrivial mixing therefore requires positive probability on ratios both below and above the required mean.

Holding the mixer's distribution fixed, scale the pure contestant's proposed effort by \(z>0\). The resulting effort is \(zx_A^\circ\) for \(A\) or \(zx_B^\circ\) for \(B\), and the normalized payoffs are
\[
\Pi_A(zx_A^\circ;F_B)=\zeta_B+\int_{t>0}\varphi(z/t)\,dF_{T_B}(t)-cx_A^\circ z,\quad
\Pi_B(zx_B^\circ;F_A)=\zeta_A+\int_{\rho>0}\varphi(z/\rho)\,dF_{T_A}(\rho)-x_B^\circ z.
\]
Each integral averages the reward over the rival's positive effort ratios. The \(\zeta_i\) terms account for an inactive rival and are constant for \(z>0\).  At zero effort, \(\Pi_A(0;F_B)=\zeta_B/2\) and \(\Pi_B(0;F_A)=\zeta_A/2\). Thus equilibrium requires  the proposed effort to maximize  \(\Pi_i\) over   all nonnegative efforts, including inactivity. Theorem~\ref{thm:equilibrium-taxonomy} also requires \(\zeta_A=0\). Together, the conditions are
\begin{equation}\label{eq:semi-pure-requirements}
\begin{aligned}
\text{\(A\) pure:}\quad&
\underbrace{\mathbb E[T_B]=c}_{\text{mean equality}},
\quad
\underbrace{x_A^\circ\in\arg\max_{x\geq0}\Pi_A(x;F_B)}_{\text{pure contestant's condition}},
\quad
\underbrace{\operatorname{supp}F_{T_B}\subseteq\Gamma(x_A^\circ)}_{\text{mixer's conditions}};\\[6pt]
\text{\(B\) pure:}\quad&
\underbrace{\mathbb E[T_A]=1/c}_{\text{mean equality}},
\quad
\underbrace{x_B^\circ\in\arg\max_{y\geq0}\Pi_B(y;F_A)}_{\text{pure contestant's condition}},
\quad
\underbrace{\operatorname{supp}F_{T_A}\subseteq\Gamma(cx_B^\circ),\ \zeta_A=0}_{\text{mixer's conditions}}.
\end{aligned}
\end{equation}

\begin{proposition}[Semi-pure equilibrium characterization]
\label{prop:general-semi-pure}
A profile is a semi-pure equilibrium if and only if the mixer's ratio distribution is not a point mass and satisfies the appropriate row of \eqref{eq:semi-pure-requirements}.
\end{proposition}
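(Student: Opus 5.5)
We treat the \(A\)-pure row; the \(B\)-pure row is symmetric, with the roles of \(c\) and \(1/c\) exchanged and the extra requirement \(\zeta_A=0\) imported from Theorem~\ref{thm:equilibrium-taxonomy}. By Theorem~\ref{thm:existence}, it suffices to work in the scalar restriction and to check pure total-effort deviations. The profile is semi-pure exactly when one contestant's effort is a point mass at a positive level and the other's is not a point mass. Since \(x_A^\circ>0\) is fixed, \(F_B\) is nondegenerate if and only if \(F_{T_B}\) is not a point mass. It therefore remains to show that equilibrium is equivalent to the three listed conditions.

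\emph{Necessity.} Suppose the profile is an equilibrium. Then \(x_A^\circ\) must be a best response of \(A\) against \(F_B\), which is the pure contestant's condition verbatim.

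For \(B\), facing the fixed effort \(x_A^\circ>0\), a choice of ratio \(t\) yields \(\varphi(t)-x_A^\circ t\). This function is continuous in \(t\) on \([0,\infty)\), including at \(t=0\), because \(\mathcal U(x_A^\circ,0)\) is handled by \(\varphi(0)=0\) and \(x_A^\circ>0\). Its maximizers form \(\Gamma(x_A^\circ)\), and the equilibrium payoff equals the maximum. Hence \(F_{T_B}\)-almost every ratio lies in \(\Gamma(x_A^\circ)\). Because \(\Gamma(x_A^\circ)\) is closed, the support of \(F_{T_B}\) is contained in it.

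Mean equality follows from the equal-expected-cost identity \(c\mathbb E[X_A]=\mathbb E[X_B]\) established with Theorem~\ref{thm:equilibrium-taxonomy}. Here \(\mathbb E[X_A]=x_A^\circ\) and \(\mathbb E[X_B]=x_A^\circ\mathbb E[T_B]\), so \(\mathbb E[T_B]=c\).

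\emph{Sufficiency.} Suppose the three conditions hold. The pure contestant's condition says that \(A\) has no profitable pure deviation, including to zero, where we use \(\Pi_A(0;F_B)=\zeta_B/2\). The support inclusion says that every ratio \(B\) plays attains the maximum of \(\varphi(t)-x_A^\circ t\) over all \(t\ge0\). Hence no pure deviation of \(B\), including inactivity at \(t=0\), does better. Randomized deviations only average these payoffs, so the profile is a scalar equilibrium. By Theorem~\ref{thm:existence}, it is then an equilibrium of the original contest.

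Mean equality is not logically needed for sufficiency, since the other two conditions already imply it. It is listed because it is necessary and pins down the probabilities.

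\emph{Main obstacle.} The delicate step is justifying the cost-balance identity used for necessity. If it is not taken as given from the taxonomy proof, we would derive it directly. Perturb \(A\)'s effort by a scale factor \(z\) around \(z=1\). Because \(\varphi\) depends only on ratios, the derivative of \(A\)'s expected prize in \(z\) at \(z=1\) is \(\int_{t>0}\varphi'(1/t)(1/t)\,dF_{T_B}\). The first-order condition at the interior optimum \(z=1\) sets this equal to \(cx_A^\circ\).

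On the other side, each positive contact \(t\) satisfies \(\varphi'(t)=x_A^\circ\), so \(t\varphi'(t)=x_A^\circ t\). Using \(\varphi'(1/t)=t^2\varphi'(t)\), the integrand becomes \(t\varphi'(t)=x_A^\circ t\). Integrating gives \(x_A^\circ\mathbb E[T_B]=cx_A^\circ\), so \(\mathbb E[T_B]=c\). Zero ratios contribute zero on both sides.

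Two technical points must be checked. First, differentiating under the integral needs domination: \(u\mapsto u\varphi'(u)\) is bounded on \((0,\infty)\), because \(\varphi'\) is a ratio of polynomials in \(t^r\) behaving like \(t^{r-1}\) near zero and like \(t^{-r-1}\) at infinity. Second, the optimum at \(z=1\) is interior because \(x_A^\circ>0\).
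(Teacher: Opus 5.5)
Your proof is correct and follows the paper's argument. Necessity comes from the pure contestant's optimality, the mixer's contact condition, cost balance, and (for the \(B\)-pure row) the no-atom result of Theorem~\ref{thm:equilibrium-taxonomy}; sufficiency comes from the contact and global best-response conditions alone. Your direct derivation of \(\mathbb E[T_B]=c\) is the point-mass specialization of the paper's cost-balance identity~\eqref{eq:app-general-cost-equality}, and your remark that mean equality is redundant for sufficiency matches the paper, whose converse likewise never uses it.
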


The preceding argument proves necessity. Conversely, the contact condition makes the mixer's strategy a best response, and the  global best-response conditions, including inactivity, do the same for the pure contestant, proving sufficiency. This characterization yields an \emph{exhaustive procedure for identifying semi-pure equilibria}. Write \(g=\operatorname{cav}\varphi\). Figure~\ref{fig:mean-selected-effort} illustrates the procedure: solid curves show \(\varphi\), long-dashed segments show \(g\), dash-dotted lines show selected supporting lines, and filled dots are contacts.

\begin{figure}[H]
\centering
\begin{tikzpicture}[
  x=1cm,y=1cm,
  axis/.style={-{Latex[length=1.5mm]},thin},
  curve/.style={black,line width=.8pt},
  cav/.style={black,line width=1.25pt,dash pattern=on 4pt off 1.7pt},
  support/.style={black,line width=.7pt,dash pattern=on 3pt off 1.3pt on .7pt off 1.3pt},
  other/.style={black!55,line width=.7pt,dash pattern=on 3pt off 1.3pt on .7pt off 1.3pt},
  guide/.style={black!50,densely dotted,thin},
  contact/.style={circle,fill=black,inner sep=1.4pt},
  meanpoint/.style={circle,draw=black,fill=white,line width=.6pt,inner sep=1.6pt},
  every node/.style={font=\footnotesize,inner sep=1.5pt}
]
\node at (2.15,-.58) {\textbf{(a)}};
\node at (7.60,-.58) {\textbf{(b)}};
\node at (13.05,-.58) {\textbf{(c)}};

\begin{scope}[x=.66cm,y=.82cm]
  \draw[axis] (0,0) -- (6.55,0) node[right] {$t$};
  \draw[axis] (0,0) -- (0,3.62);
  \fill[black!10] (.6,.65) -- (1.5,1.325)
    plot[domain=1.5:.6,samples=45]
      (\x,{.2+.75*\x-3*(\x-.6)^2*(\x-1.5)^2}) -- cycle;
  \fill[black!10] (2.8,1.975) -- (4.1,2.3)
    plot[domain=4.1:2.8,samples=50]
      (\x,{1.275+.25*\x-.5*(\x-2.8)^2*(\x-4.1)^2}) -- cycle;
  \fill[black!10] (4.1,2.3) -- (5.7,2.7)
    plot[domain=5.7:4.1,samples=50]
      (\x,{1.275+.25*\x-.25*(\x-4.1)^2*(\x-5.7)^2}) -- cycle;
  \draw[other] (0,.2) -- (3.93,3.1475)
    node[above left,text=black!65] {$\ell_1$};
  \draw[support] (0,1.275) -- (6.38,2.87)
    node[above right] {$\ell$};
  \draw[curve,domain=0:.6,samples=35]
    plot (\x,{(17/12)*\x-(5/9)*\x*\x});
  \draw[curve,domain=.6:1.5,samples=50]
    plot (\x,{.2+.75*\x-3*(\x-.6)^2*(\x-1.5)^2});
  \draw[curve,domain=1.5:2.8,samples=50]
    plot (\x,{1.325+.75*(\x-1.5)-(5/26)*(\x-1.5)^2});
  \draw[curve,domain=2.8:4.1,samples=60]
    plot (\x,{1.275+.25*\x-.5*(\x-2.8)^2*(\x-4.1)^2});
  \draw[curve,domain=4.1:5.7,samples=60]
    plot (\x,{1.275+.25*\x-.25*(\x-4.1)^2*(\x-5.7)^2});
  \draw[curve,domain=5.7:6.35,samples=35]
    plot (\x,{2.7+.25*(1-exp(-(\x-5.7)))});
  \draw[cav] (.6,.65) -- (1.5,1.325);
  \draw[cav] (2.8,1.975) -- (5.7,2.7);
  \foreach \xx/\yy in {.6/.65,1.5/1.325,2.8/1.975,4.1/2.3,5.7/2.7}
    \node[contact] at (\xx,\yy) {};
  \draw[guide] (2.8,1.975) -- (2.8,0) node[below,text=black] {$t_1$};
  \draw[guide] (5.7,2.7) -- (5.7,0) node[below,text=black] {$t_3$};
  \draw[guide] (3.7,2.2) -- (3.7,0) node[below,text=black] {$m$};
  \node[meanpoint] at (3.7,2.2) {};

\end{scope}

\begin{scope}[shift={(5.5,0)}]
  \draw[axis] (0,0) -- (4.12,0) node[right] {$t$};
  \draw[axis] (0,0) -- (0,3.03);
  \fill[black!10] (.5,.8) -- (1.8,1.32)
    plot[domain=1.8:.5,samples=50]
      (\x,{.6+.4*\x-.8*(\x-.5)^2*(\x-1.8)^2}) -- cycle;
  \fill[black!10] (1.8,1.32) -- (3.4,1.96)
    plot[domain=3.4:1.8,samples=50]
      (\x,{.6+.4*\x-.4*(\x-1.8)^2*(\x-3.4)^2}) -- cycle;
  \draw[support] (.08,.632) -- (3.94,2.176) node[above right] {$\ell$};
  \draw[curve,domain=.15:.5,samples=25]
    plot (\x,{.8+.4*(\x-.5)-(4/13)*(\x-.5)^2});
  \draw[curve,domain=.5:1.8,samples=50]
    plot (\x,{.6+.4*\x-.8*(\x-.5)^2*(\x-1.8)^2});
  \draw[curve,domain=1.8:3.4,samples=50]
    plot (\x,{.6+.4*\x-.4*(\x-1.8)^2*(\x-3.4)^2});
  \draw[curve,domain=3.4:3.9,samples=25]
    plot (\x,{1.96+.4*(1-exp(-(\x-3.4)))});
  \draw[cav] (.5,.8) -- (3.4,1.96);
  \foreach \xx/\yy/\lab in {.5/.8/t_1,1.8/1.32/t_2,3.4/1.96/t_3} {
    \draw[guide] (\xx,\yy) -- (\xx,0) node[below,text=black] {$\lab$};
    \node[contact] at (\xx,\yy) {};
  }
  \draw[guide] (1.4,1.16) -- (1.4,0) node[below,text=black] {$m$};
  \node[meanpoint] at (1.4,1.16) {};

\end{scope}

\begin{scope}[shift={(11.25,0)},x=3.35cm,y=2.7cm]
  \fill[black!5] (0,0) -- (1,0) -- (0,1) -- cycle;
  \draw[black!40,thin] (1,0) -- (0,1);
  \draw[axis] (0,0) -- (1.12,0) node[right] {$p_2$};
  \draw[axis] (0,0) -- (0,1.08) node[right] {$p_3$};
  \node[below left] at (0,0) {$0$};
  \node[below] at (1,0) {$1$};
  \node[left] at (0,1) {$1$};

  \draw[black,line width=1.8pt] ({.9/1.3},0) -- (0,{.9/2.9});
  \node[contact] at ({.9/1.3},0) {};
  \node[contact] at (0,{.9/2.9}) {};
  \node[contact] at ({.45/1.3},{.45/2.9}) {};

\end{scope}

\end{tikzpicture}
{\normalsize\begin{spacing}{1}
\caption{Mean selection and uniqueness of the candidate pure effort (schematic).}
\label{fig:mean-selected-effort}
\end{spacing}}
\end{figure}
\afterfigurespace

\begin{enumerate}
\setcounter{enumi}{-1}
\item \emph{Fix the pure contestant.} Suppose \(A\) is pure, so \(m=c\). If \(B\) is pure, use \(m=1/c\).

\item \emph{Select the supporting chord (panel~(a)).} Find a supporting line with contact ratios on both sides of \(m\). Both \(\ell_1\) and \(\ell\) support \(\varphi\), but all contacts of \(\ell_1\) lie below \(m\), so no lottery over them can have the required mean. The contacts of \(\ell\) straddle \(m\). If no such line exists, there is no semi-pure equilibrium with the chosen pure contestant.

\item \emph{Determine the candidate pure effort (panel~(b)).} Panel~(b) enlarges the selected chord from panel~(a). Because \(g\) is a straight line near \(m\), its slope \(\sigma=g'(m)\) uniquely fixes \(x_A^\circ=g'(c)\) when \(A\) is pure, or \(x_B^\circ=g'(1/c)/c\) when \(B\) is pure. The open circle marks \((m,g(m))\), which need not be a contact.

\item \emph{Assign contact probabilities with the required mean (panel~(c)).} Consider every nondegenerate lottery on \(\Gamma(\sigma)\) with mean \(m\). In the illustrated three-contact case, \(t_1<m<t_2<t_3\). A point \((p_2,p_3)\) in the triangle specifies probabilities \(p_1=1-p_2-p_3\), \(p_2\), and \(p_3\) on these contacts. The mean restriction \((t_2-t_1)p_2+(t_3-t_1)p_3=m-t_1\) selects the thick segment. The horizontal-axis endpoint mixes over \(\{t_1,t_2\}\), the vertical-axis endpoint over \(\{t_1,t_3\}\), and interior points use all three contacts. These different lotteries share the same mean and hence the same candidate pure effort. When \(A\) mixes, exclude zero from the support.

\item \emph{Verify the pure contestant's deviations.} Keep exactly the lotteries satisfying the global  best-response conditions, including inactivity, in~\eqref{eq:semi-pure-requirements}. This identifies all semi-pure equilibria, even when the candidate family is a continuum. \end{enumerate}

\subsubsection{Two-sided mixed equilibrium}

Since normalized prizes are at most one, efforts above \(1/c\) for \(A\) or one for \(B\) are dominated by inactivity. Accordingly, let \(F_A\) and \(F_B\) be distributions supported on \([0,1/c]\) and \([0,1]\),  respectively.

Let \(\overline\pi_A\equiv\sup_{x\geq0}\Pi_A(x;F_B)\) and \(\overline\pi_B\equiv\sup_{y\geq0}\Pi_B(y;F_A)\). Mutual best responses require
\begin{align}
\Pi_A(x;F_B)&\leq\overline\pi_A &&\text{for every }x\geq0,
\label{eq:general-complementarity-A}\\
\Pi_B(y;F_A)&\leq\overline\pi_B &&\text{for every }y\geq0,
\label{eq:general-complementarity-B}
\end{align}
with equality \(F_A\)- and \(F_B\)-almost surely, respectively---that is, except on sets to which the corresponding strategy assigns zero probability. Necessity follows because an equilibrium strategy must put probability one on efforts attaining its maximal payoff. Conversely, equality almost surely makes each proposed strategy attain that payoff, while the inequalities rule out every pure deviation.

\begin{proposition}[Global best-response test for two-sided mixing]
\label{prop:general-two-sided}
A pair of
distributions \((F_A,F_B)\), neither of which is a point mass, is a two-sided mixed equilibrium if and only if it satisfies
\eqref{eq:general-complementarity-A} and~\eqref{eq:general-complementarity-B}, with equality
\(F_A\)- and \(F_B\)-almost surely, respectively.
\end{proposition}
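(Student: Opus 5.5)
The proof is a direct unpacking of the equilibrium definition in the scalar game, which Theorem~\ref{thm:existence} makes legitimate. By that theorem, every equilibrium of the original contest is uniform, and the scalar restriction has the same equilibrium set. So it suffices to characterize pairs of total-effort distributions \((F_A,F_B)\) that are mutual best responses in the scalar game, where payoffs are \(\Pi_A(\cdot;F_B)\) and \(\Pi_B(\cdot;F_A)\). The "two-sided mixed" label then matches the hypothesis that neither distribution is a point mass, by the definitions in Section~\ref{S2:equilibrium-forms}.

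\emph{Necessity.} Let \((F_A,F_B)\) be an equilibrium.
\begin{enumerate}[(i)]
\item First confirm that \(\overline\pi_A\) and \(\overline\pi_B\) are finite and that the restriction to \([0,1/c]\) and \([0,1]\) is harmless. Normalized prizes lie in \([0,1]\), so any effort above \(1/c\) for \(A\), or above \(1\) for \(B\), earns strictly less than inactivity, which earns at least \(0\). Hence such efforts receive zero probability in equilibrium.
\item By Fubini, valid because prizes are bounded and expected efforts are finite, \(A\)'s equilibrium payoff equals \(\int\Pi_A(x;F_B)\,dF_A(x)\). This is at most \(\overline\pi_A\).
\item The equilibrium payoff must equal \(\overline\pi_A\). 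Otherwise, there is some \(x\) with \(\Pi_A(x;F_B)\) above the equilibrium payoff, and that pure deviation is profitable. So the average of \(\Pi_A(\cdot;F_B)\) under \(F_A\) equals its supremum.
\item Since \(\Pi_A(\cdot;F_B)\le\overline\pi_A\) pointwise, which is exactly \eqref{eq:general-complementarity-A}, equality must hold \(F_A\)-almost surely.
\item The same argument for \(B\) gives \eqref{eq:general-complementarity-B} with equality \(F_B\)-almost surely.
\end{enumerate}

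\emph{Sufficiency.} Suppose the inequalities hold, with equality almost surely.
\begin{enumerate}[(i)]
\item Integrating the almost-sure equality shows that \(F_A\) attains payoff \(\overline\pi_A\) against \(F_B\).
\item Any pure deviation \(x\) earns \(\Pi_A(x;F_B)\le\overline\pi_A\).
\item Any mixed deviation \(G\) with finite mean earns \(\int\Pi_A\,dG\le\overline\pi_A\), again by Fubini and boundedness.
\item Hence \(F_A\) is a best response, and symmetrically \(F_B\) is a best response.
\end{enumerate}
Thus \((F_A,F_B)\) is a scalar equilibrium, and hence, by Theorem~\ref{thm:existence}, an equilibrium of the original contest. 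No deviation to nonuniform vectors needs separate treatment, because that theorem already transfers scalar equilibria to the full game.

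The only step needing care is measure-theoretic. At mutual inactivity, \(\mathcal U\) is discontinuous but bounded and Borel measurable, so the integrals defining \(\Pi_A\) and \(\Pi_B\) are well defined. The iterated-expectation identity in step (ii) of necessity therefore holds despite the discontinuity, and no continuity of \(\Pi_A\) is required. Attainment of the supremum is also not needed separately: in equilibrium it follows from the almost-sure equality, since \(F_A\) places probability one on points where \(\Pi_A(\cdot;F_B)=\overline\pi_A\).
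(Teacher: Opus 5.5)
Your proposal is correct and follows the paper's own argument. You use Theorem~\ref{thm:existence} to reduce to the scalar game; necessity then follows because an equilibrium strategy must put probability one on efforts attaining the maximal payoff, and sufficiency follows because almost-sure equality attains that payoff while the pointwise inequalities bound every pure deviation and, by averaging, every randomized one, and your Fubini and measurability remarks just make explicit what the paper leaves implicit.
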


These conditions include inactivity. Unlike the contact conditions for semi-pure equilibrium, they generally verify a proposed pair of distributions rather than construct it. The substantive support and participation restrictions come from Theorem~\ref{thm:equilibrium-taxonomy}.

\subsection{Contact Geometry, Uniqueness, and Multiplicity}\label{S3:uniqueness}

\begingroup
\clubpenalty=10000
\widowpenalty=10000

We first restrict which equilibrium types can coexist, then ask when one contestant must remain pure in every equilibrium and whether the rival's strategy is unique. The key is interchangeability: a unique pure best response to a rival's equilibrium strategy fixes one contestant's effort in every equilibrium; only then do the remaining contact probabilities determine uniqueness or multiplicity. Write \(\mathrm P\) for pure equilibrium, \(\mathrm{SP}_A\) and \(\mathrm{SP}_B\) for semi-pure equilibria with \(A\) and \(B\), respectively, pure, and \(\mathrm M\) for two-sided mixing.

\begin{corollary}[Coexistence of equilibrium types]\label{cor:type-coexistence}
For fixed model parameters, the number of distinct equilibrium types is one, two, or four, but never three. If exactly two types coexist, the pair is one of \(\{\mathrm P,\mathrm{SP}_A\}\), \(\{\mathrm P,\mathrm{SP}_B\}\), \(\{\mathrm{SP}_A,\mathrm M\}\), \(\{\mathrm{SP}_B,\mathrm M\}\).
At equal costs, the type set is \(\{\mathrm P\}\), \(\{\mathrm M\}\), or all four.
\end{corollary}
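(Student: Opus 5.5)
The plan rests on the product structure from Lemma~\ref{lem:core-reductions}\emph{(ii)}: the equilibrium set is \(\mathcal E_A\times\mathcal E_B\), and each equilibrium type is determined by whether the chosen strategy from each factor is a point mass. Theorem~\ref{thm:equilibrium-taxonomy} guarantees that every point-mass equilibrium strategy is positive. So partition \(\mathcal E_i=\mathcal D_i\cup\mathcal N_i\) into degenerate and nondegenerate strategies. Then \(\mathrm P\) is present iff \(\mathcal D_A,\mathcal D_B\neq\emptyset\); \(\mathrm{SP}_A\) iff \(\mathcal D_A,\mathcal N_B\neq\emptyset\); \(\mathrm{SP}_B\) iff \(\mathcal N_A,\mathcal D_B\neq\emptyset\); and \(\mathrm M\) iff \(\mathcal N_A,\mathcal N_B\neq\emptyset\). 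The type set is therefore the product of two nonempty subsets of \(\{\mathrm D,\mathrm N\}\), one per contestant.

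Counting is then immediate. If each contestant has exactly one kind, there is one type. If exactly one contestant has both kinds, there are two types. If both do, there are four. Three is impossible.

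For the pair classification, let \(A\) have both kinds and \(B\) only one. This gives \(\{\mathrm P,\mathrm{SP}_B\}\) if \(B\) has only \(\mathrm D\), or \(\{\mathrm{SP}_A,\mathrm M\}\) if \(B\) has only \(\mathrm N\). Symmetrically, if \(B\) has both kinds and \(A\) only one, we get \(\{\mathrm P,\mathrm{SP}_A\}\) or \(\{\mathrm{SP}_B,\mathrm M\}\). These are exactly the four listed pairs.

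For equal costs \(c=1\), the game is symmetric under swapping the contestants, because \(\varphi(t)+\varphi(1/t)=1\). Hence \(F\in\mathcal E_A\) iff \(F\in\mathcal E_B\); I would check this by mapping \(\Pi_A\) to \(\Pi_B\) using \(\mathcal U(x,y)=1-\mathcal U(y,x)\), which holds including the zero cases. So both contestants have the same kind-set. One kind gives \(\{\mathrm P\}\) or \(\{\mathrm M\}\), both kinds give all four types, and no pair is possible.

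The only delicate point is confirming that the taxonomy's type labels align with degeneracy of the marginals. \(\mathrm{SP}\) requires the pure contestant's effort to be positive, which Theorem~\ref{thm:equilibrium-taxonomy} gives because a contestant cannot be inactive with probability one. The symmetry map at \(c=1\) also needs care at the boundary \(\mathcal U(0,0)=1/2\), but it holds there directly.
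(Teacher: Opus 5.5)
Your proof is correct and follows the paper's argument: interchangeability (Lemma~\ref{lem:core-reductions}\emph{(ii)}) makes the equilibrium set \(\mathcal E_A\times\mathcal E_B\), so the type set is the product of each contestant's categories (point mass versus nondegenerate). This yields one, two, or four types and exactly the four listed pairs, and symmetry at \(c=1\) forces \(\mathcal E_A=\mathcal E_B\). The paper leaves implicit your checks that point-mass equilibrium strategies are positive (Theorem~\ref{thm:equilibrium-taxonomy}) and that \(\mathcal U(x,y)=1-\mathcal U(y,x)\) holds at the zero boundary, as well as the fact that each category set is nonempty because an equilibrium exists (Theorem~\ref{thm:existence}).
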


By Lemma~\ref{lem:core-reductions}\emph{(ii)}, equilibrium strategies can be paired across contestants. Each contestant's equilibrium-strategy set contains only point masses, only nondegenerate distributions, or both. All combinations of the available categories therefore occur, ruling out exactly three types. Exactly one set containing both categories gives the four listed pairs. At \(c=1\), symmetry makes the two sets identical. The corollary does not assert that every permitted type set is attainable. Nor does a single type establish uniqueness: mixtures of distinct equilibrium strategies remain equilibrium strategies, so nonuniqueness yields a continuum.

To exclude two-sided mixing, it suffices that either contestant's pure effort be her unique best response to the rival's strategy in one verified equilibrium, with the comparison including inactivity. Indeed, interchangeability pairs her strategy from any other equilibrium with that same rival strategy, forcing the same pure effort. Apply either tangent test in Proposition~\ref{prop:general-support}, or the pure contestant's test in Proposition~\ref{prop:general-semi-pure}, with strict inequalities at every alternative effort. This fixes one contestant's strategy, but need not fix the other's; excluding two-sided mixing does not by itself establish equilibrium uniqueness.

If both tangent inequalities in Proposition~\ref{prop:general-support} are strict away from \(c\) and \(1/c\), respectively, the same argument fixes both efforts and shows that equilibrium is unique and pure. Proportional prizes supply this benchmark for every finite \(J\): \(\varphi(t)=t^r/(1+t^r)\) is strictly concave for \(r\leq1\).

\begingroup
\paragraph{The pure-contestant best-response test.} We now replace the lottery-specific test of \(A\)'s deviations with a sufficient test for an entire contact family, checking each positive contact separately. Take the \(A\)-pure candidate \(X_A=\sigma>0\) from Sections~\ref{S3:pure-equilibrium}--\ref{S3:semi-pure-equilibrium}: \(\sigma=\varphi'(c)\) in the pure case, or \(\sigma=(\operatorname{cav}\varphi)'(c)\) in the semi-pure case. Let \(T_B=X_B/\sigma\) have distribution \(F_{T_B}\), supported on \(\Gamma(\sigma)\) with mean \(c\), and let \(F_B\) be the induced effort distribution. Contact support verifies \(B\)'s best response; the mean condition enforces cost balance. It remains to verify that \(A\) cannot gain by changing effort.

Against a positive contact \(u\), \(B\)'s effort is \(\sigma u\), so \(A\)'s proposed effort has ratio \(1/u\). Tangency and reward symmetry give \(\varphi'(u)=\sigma\) and \(\varphi'(1/u)=u^2\sigma\). The test requires the tangent at this reciprocal ratio to support the entire reward curve for every positive contact \(u\):
\[
\varphi(z)-u^2\sigma z\leq\varphi(1/u)-u\sigma
\qquad\text{for every }z\geq0.
\]
If \(A\) changes effort to \(s\sigma>0\), her ratio becomes \(s/u\). Substituting \(z=s/u\) bounds the reward change against contact \(u\) by \(u\sigma(s-1)\). Averaging gives \(\sigma(s-1)\mathbb E[T_B]=c\sigma(s-1)\), exactly her cost change; an inactive rival contributes no reward change.

Inactivity is checked separately: the inequalities at \(z=0\) give \(\Pi_A(\sigma;F_B)\geq\zeta_B\geq\zeta_B/2=\Pi_A(0;F_B)\), where \(\zeta_B=F_{T_B}(\{0\})\). Thus every mean-feasible contact lottery generates an equilibrium. These weak inequalities verify optimality, not uniqueness of \(A\)'s best response. If this sufficient test fails, a candidate may still pass the lottery-specific \(\Pi_A\) test.
\par\endgroup

\begingroup
\paragraph{Ruling out two-sided mixing.} Call the test \emph{strict} if, for every positive contact \(u\), the preceding inequality is strict except at \(z=1/u\): each reciprocal tangent touches the curve only there. Since \(c>0\), every mean-feasible lottery assigns positive probability to positive contacts, so averaging retains strictness against every other positive effort. At \(z=0\), the same strict inequalities give \(\Pi_A(\sigma;F_B)>\zeta_B\geq\Pi_A(0;F_B)\). Thus \(\sigma\) is \(A\)'s unique best response, including inactivity. Interchangeability then fixes \(A\)'s effort in every equilibrium, ruling out two-sided mixing. In any remaining equilibrium, \(B\) must use contacts with mean \(c\). The verified family therefore contains all equilibria. Appendix~\ref{app:proof-uniqueness-criteria} gives the proof.
\par\endgroup

\begin{proposition}[Equilibria supported by a finite contact set]
\label{prop:finite-contact}
Suppose there exists \(\sigma>0\) such that \(\Gamma(\sigma)\) is finite, satisfies the strict pure-contestant best-response test, and supports a distribution with mean \(c\). Then the complete equilibrium set consists exactly of the profiles \(X_A=\sigma\) and \(X_B=\sigma T_B\), where \(F_{T_B}\) ranges over all distributions supported on \(\Gamma(\sigma)\) with mean \(c\). The profile is pure when \(F_{T_B}\) is a point mass and semi-pure with \(A\) pure otherwise.
\end{proposition}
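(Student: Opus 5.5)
The proof has two directions: every listed profile is an equilibrium, and every equilibrium is listed. Both work in the scalar game, which Theorem~\ref{thm:existence} justifies.

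\emph{Sufficiency.} Fix $\sigma$ as in the hypothesis. Let $F_{T_B}$ be any distribution on $\Gamma(\sigma)$ with mean $c$, and set $X_A=\sigma$, $X_B=\sigma T_B$.

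Contestant $B$'s optimality is immediate. Against $A$'s fixed effort $\sigma$, her payoff at ratio $t$ is $\varphi(t)-\sigma t$. Every contact in $\Gamma(\sigma)$ attains the maximum of this function over $t\geq0$, and $t=0$ (inactivity) is included.

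For $A$, I would reproduce the averaging argument given before the statement. Take a deviation to $s\sigma$ with $s>0$. For each positive contact $u$, apply the reciprocal tangent inequality at $z=s/u$. This bounds the reward change against $u$ by $u\sigma(s-1)$. An inactive $B$ contributes no reward change, since any positive effort wins everything against her. Integrating over contacts gives a total reward change of at most $\sigma(s-1)\mathbb E[T_B]=c\sigma(s-1)$, which equals the cost change. Inactivity is handled by the $z=0$ inequalities, which give $\Pi_A(\sigma;F_B)\geq\zeta_B\geq\zeta_B/2=\Pi_A(0;F_B)$.

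The profile is pure if $F_{T_B}$ is a point mass (necessarily at $c$) and semi-pure with $A$ pure otherwise. The inequality $\zeta_A=0$ holds trivially because $A$ plays $\sigma>0$.

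\emph{Necessity via strictness.} Since $c>0$, the mean-$c$ requirement forces $F_{T_B}(\{u>0\})>0$. So for any $s\neq1$, at least one positive contact makes the tangent inequality strict with positive probability, and the averaged inequality becomes strict: $\Pi_A(s\sigma;F_B)<\Pi_A(\sigma;F_B)$.

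For $s=0$, the $z=0$ inequality is strict for each positive contact $u$, since $1/u\neq0$. This gives $\Pi_A(\sigma;F_B)>\zeta_B$, because the positive-contact part contributes strictly more than zero net of cost. Hence $\sigma$ is $A$'s unique best response to the particular equilibrium $F_B^\ast$ constructed above, for example any one mean-$c$ lottery, which exists by hypothesis.

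Now take an arbitrary equilibrium $(F_A,F_B)$. By Lemma~\ref{lem:core-reductions}(ii), $(F_A,F_B^\ast)$ is also an equilibrium, so $F_A$ is a best response to $F_B^\ast$ and must be the point mass at $\sigma$. Then $F_B$ is a best response to $A$'s effort $\sigma$, so $\operatorname{supp}F_{T_B}\subseteq\Gamma(\sigma)$. Finiteness of $\Gamma(\sigma)$ makes this support condition equivalent to $F_{T_B}$ being an arbitrary distribution on that finite set.

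Finally, the cost-balance identity $c\mathbb E[X_A]=\mathbb E[X_B]$ (equivalently, the mean equality in \eqref{eq:semi-pure-requirements}) gives $\mathbb E[T_B]=c$. So $(F_A,F_B)$ is one of the listed profiles.

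\emph{Main obstacle.} I expect the delicate step to be the strictness argument, in two places. The first is the passage from pointwise strict inequalities to strict inequality of the expectation, which needs positive mass on positive contacts; this is exactly where $c>0$ and the mean condition enter. The second is the inactivity comparison when $\zeta_B>0$. Here I would note that $\Pi_A(\sigma;F_B)-\zeta_B$ equals the integral over positive contacts of $\varphi(1/u)-u\sigma$, each term of which is strictly positive by the strict $z=0$ inequality. This gives $\Pi_A(\sigma;F_B)>\zeta_B\geq\zeta_B/2$.
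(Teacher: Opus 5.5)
Your proposal is correct and follows the paper's proof essentially step for step. The steps match: contact support gives $B$'s optimality, and averaging the strict reciprocal-tangent inequalities against a mean-$c$ lottery (which puts positive mass on positive contacts) makes $\sigma$ contestant $A$'s unique best response, including inactivity. Exhaustiveness then follows from interchangeability with one fixed mean-$c$ lottery, the contact-support restriction, and cost balance. The only cosmetic difference is that you lift the scalar equilibrium to the multi-battle game by citing Theorem~\ref{thm:existence}, whereas the paper invokes Lemma~\ref{lem:core-reductions} directly to rule out nonuniform deviations.
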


Once Proposition~\ref{prop:finite-contact} fixes \(A\)'s effort, only \(B\)'s contact probabilities remain to determine whether equilibrium is unique. One contact, necessarily at \(c\), gives a unique equilibrium, and it is pure. Two contacts \(t_-<c<t_+\) give the unique semi-pure lottery, with \(\Pr(T_B=t_+)=(c-t_-)/(t_+-t_-)\). With at least three contacts and \(\min\Gamma(\sigma)<c<\max\Gamma(\sigma)\), the total-mass and mean constraints leave at least one probability free, generating a continuum. A mean equal to an extreme contact instead forces the point mass at \(c\). For a \(B\)-pure candidate satisfying the analogous hypotheses, select the slope at mean \(1/c\) and use \(X_B=\sigma/c\), \(X_A=(\sigma/c)T_A\), with \(\mathbb E[T_A]=1/c\) and support in \(\Gamma(\sigma)\setminus\{0\}\). The following example shows how a uniquely fixed pure effort can coexist with a continuum of rival lotteries.

\begin{example}[Three contacts and a continuum of equilibria]
\label{ex:seven-multiplicity}
Let \(J=7\), \(r=1\), and
\[
\widetilde q_0=0,\quad
\widetilde q_1=\frac{6101}{75852},\quad
\widetilde q_2=\frac{538607}{2730672},\quad
\widetilde q_3=\frac{10209}{33712},
\]
with \(\widetilde q_{7-n}=1-\widetilde q_n\) for \(n=0,1,2,3\). Then \(\Gamma(\sigma)=\{0,1/9,1/4\}\) satisfies the strict pure-contestant best-response test for \(\sigma=612/1075\).
\end{example}

The schedule is admissible because \(0<\widetilde q_1<\widetilde q_2<\widetilde q_3<1/2\). Appendix~\ref{app:seven-multiplicity-proof} verifies the contacts and the strict best-response test. Proposition~\ref{prop:finite-contact} therefore gives the complete equilibrium family: \(A\) chooses \(\sigma\), while \(B\) assigns probabilities \((p_0,p_1,p_2)\) to \((0,\sigma/9,\sigma/4)\), with \(p_j\geq0\), \(p_0+p_1+p_2=1\), and \(p_1/9+p_2/4=c\). For every \(c\in(0,1/4)\), the feasible probability vectors form a nondegenerate line segment.

Figure~\ref{fig:coexistence-uniqueness}  takes \(c=1/9\).   Panel~(a)   identifies the  three contacts through the  scaled  gap \(10^5[\sigma t-\varphi(t)]\), and panel~(b) shows the reciprocal supporting  tangents. Let \(\delta_x\) denote unit mass at   \(x\). Under the total-effort distributions induced by \(P=\tfrac59\delta_0+\tfrac49\delta_{1/4}\) (solid) and \(Q=\delta_{1/9}\) (dashed), panel~(c) shows \(A\)'s unique payoff maximum at \(s=1\);  under \(P\),  inactivity  gives \(5/18\)  whereas the positive-effort limit is \(5/9\).    Panel~(d) shows the  mean-feasible segment joining \(P\) and \(Q\).

\begin{figure}[H]
\centering

{\normalsize\begin{spacing}{1}
\caption{Contact geometry, uniqueness, and coexistence.}
\label{fig:coexistence-uniqueness}
\end{spacing}}
\end{figure}
\afterfigurespace

The example links all three conclusions: \(A\)'s effort is fixed in every equilibrium, \(B\)'s probabilities vary along a segment, and the endpoint \(Q\) is pure while every other point is semi-pure. Thus the complete type set is \(\{\mathrm P,\mathrm{SP}_A\}\), as permitted by Corollary~\ref{cor:type-coexistence}; two-sided mixing is excluded. At \(c=1/9\), parameterize the segment by \((p_0,p_1,p_2)=(5\lambda/9,1-\lambda,4\lambda/9)\), \(0\leq\lambda\leq1\). Then \(\Pr(X_B>0)=1-5\lambda/9\), \(\mathbb E[X_B]=\sigma/9\), and \(\operatorname{Var}(X_B)=5\sigma^2\lambda/324\). Higher \(\lambda\) reduces participation from one to \(4/9\) and increases effort dispersion, while each contestant's expected effort, cost, prize share, and payoff remain unchanged (Proposition~\ref{prop:outcome-equivalence}).

\par\endgroup

\section{Applications of the General Results}\label{S4}

We characterize infinite-support two-sided mixing under high reward elasticity (Section~\ref{S4:high-elasticity}), establish a sharp schedule-robust threshold for pure equilibrium (Section~\ref{S4:low-discrimination}), examine how battlefield counts affect equilibrium forms and uniqueness (Section~\ref{S4:counts}), and derive the complete majority-rule equilibrium correspondence (Section~\ref{S4:majority}).

We begin by asking when different battlefield counts generate the same expected prize for every common per-battlefield win probability \(p\). For \(\boldsymbol a=(a_0,\ldots,a_K)\), write
\(\mathcal B_K[\boldsymbol a](p)=\sum_{n=0}^{K}a_n\binom{K}{n}p^n(1-p)^{K-n}\).
Equivalently, if \(W\sim\operatorname{Binomial}(K,p)\), then \(\mathcal B_K[\boldsymbol a](p)=\mathbb E[a_W]\).
For any \(J\) (odd or even), define the degree-elevated vector \(\boldsymbol q^+\) and, for even \(J\geq2\), define the degree-reduced vector \(\boldsymbol q^-\) as follows:
\[
\begin{aligned}
q_0^+&=q_0,\qquad
q_n^+=\frac{nq_{n-1}+(J+1-n)q_n}{J+1}
\quad(n=1,\ldots,J),\qquad
q_{J+1}^+=q_J,\\[.4ex]
q_0^-&=q_0,\qquad
q_n^-=\frac{Jq_n-nq_{n-1}^-}{J-n}
\quad(n=1,\ldots,J-1).
\end{aligned}
\]
The first formula represents the same expected-prize polynomial with one more battlefield; the second recursion represents it with one fewer.
\begin{lemma}[Cross-count identities]\label{lem:cross-count-identities}
\textup{(i) Elevation for any \(J\).}
For every admissible \(J\)-battlefield schedule \(\boldsymbol q\), its degree elevation
\(\boldsymbol q^+\) is admissible and
\(\mathcal B_{J+1}[\boldsymbol q^+]=\mathcal B_J[\boldsymbol q]\).

\textup{(ii) Reduction for even \(J\).}
If \(\boldsymbol q\) is symmetric, then
\(\mathcal B_{J-1}[\boldsymbol q^-]=\mathcal B_J[\boldsymbol q]\); however,
\(\boldsymbol q^-\) need not be admissible.
\end{lemma}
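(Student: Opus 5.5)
Part (i) is standard Bernstein degree elevation; part (ii) is its inversion, and symmetry is what makes the inversion consistent.

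\emph{Part (i).} I multiply \(\mathcal B_J[\boldsymbol q](p)\) by \(1=p+(1-p)\) and regroup by powers. The term \(q_n\binom{J}{n}p^n(1-p)^{J-n}\) contributes \(q_n\binom{J}{n}p^{n+1}(1-p)^{J-n}\) to index \(n+1\) and \(q_n\binom{J}{n}p^n(1-p)^{J+1-n}\) to index \(n\). Dividing by \(\binom{J+1}{n}\) and using \(\binom{J}{n-1}/\binom{J+1}{n}=n/(J+1)\) and \(\binom{J}{n}/\binom{J+1}{n}=(J+1-n)/(J+1)\) gives exactly the stated coefficients \(q_n^+\), including the endpoints \(q_0^+=q_0\) and \(q_{J+1}^+=q_J\).

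For admissibility of \(\boldsymbol q^+\):
\begin{itemize}
\item \emph{Convexity and monotonicity.} Each \(q_n^+\) is a convex combination of \(q_{n-1}\) and \(q_n\), so \(q_n^+\in[q_{n-1},q_n]\) and \(q_n^+\in[0,1]\). Hence \(q_n^+\le q_n\le q_{n+1}^+\).
\item \emph{Nonconstancy.} The endpoints are unchanged, so \(q_{J+1}^+=q_J>q_0=q_0^+\).
\item \emph{Symmetry.} Substitute \(m=J+1-n\). Then \(q_m^+=\bigl[(J+1-n)q_{J-n}+nq_{J+1-n}\bigr]/(J+1)\). Adding this to \(q_n^+\) and using \(q_{n-1}+q_{J+1-n}=1\) and \(q_n+q_{J-n}=1\) gives \(q_n^++q_m^+=1\).
\end{itemize}

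\emph{Part (ii).} Here \(J\) is even and \(\boldsymbol q\) is symmetric. I seek \(\boldsymbol a=(a_0,\ldots,a_{J-1})\) with \(\boldsymbol a^+=\boldsymbol q\) in the sense of part (i) with \(J-1\) in place of \(J\). This requires
\[
q_0=a_0,\qquad q_J=a_{J-1},\qquad q_n=\frac{na_{n-1}+(J-n)a_n}{J}\quad(1\le n\le J-1).
\]
Solving the middle equations forward gives exactly the recursion defining \(\boldsymbol q^-\). These are \(J+1\) equations in \(J\) unknowns, so the only issue is the last equation \(q_J=a_{J-1}\).

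Since \(\mathcal B\) is linear and the Bernstein basis is linearly independent, all \(J+1\) equations hold if and only if \(\mathcal B_J[\boldsymbol q]\) has degree at most \(J-1\). So the main step is to show that the leading coefficient vanishes. The coefficient of \(p^J\) in \(\mathcal B_J[\boldsymbol q]\) equals \(\sum_n(-1)^{J-n}\binom{J}{n}q_n\).

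To evaluate it, pair \(n\) with \(J-n\). For even \(J\) the signs \((-1)^{J-n}\) and \((-1)^{n}\) agree, so the sum becomes
\[
\tfrac12\sum_n(-1)^n\binom{J}{n}(q_n+q_{J-n})=\tfrac12\sum_n(-1)^n\binom{J}{n}=0.
\]
Once the degree drop is established, the forward recursion determines \(\boldsymbol q^-\) uniquely. The recursion solves the equations for \(n\le J-1\), and \(q_J=q^-_{J-1}\) then follows automatically, so \(\mathcal B_{J-1}[\boldsymbol q^-]=\mathcal B_J[\boldsymbol q]\).

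\emph{Non-admissibility.} A small counterexample suffices. Take \(J=2\) and \(\boldsymbol q=(0,1/2,1)\). The recursion gives \(\boldsymbol q^-=(0,1)\), which is admissible, so this case does not work. Next take \(J=4\) and \(\boldsymbol q=(0,0,1/2,1,1)\). The recursion gives
\[
q_1^-=\frac{4\cdot 0-0}{3}=0,\qquad q_2^-=\frac{4\cdot\tfrac12-2\cdot 0}{2}=1,\qquad q_3^-=\frac{4\cdot 1-3\cdot 1}{1}=1.
\]
So \(\boldsymbol q^-=(0,0,1,1)\), which is monotone, but \(q_1^-+q_2^-=1\) holds. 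I would keep searching until I find a schedule for which monotonicity fails or a coefficient leaves \([0,1]\).

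A cleaner attempt is \(\boldsymbol q=(0,1/2,1/2,1/2,1)\). The recursion gives
\[
q_1^-=\frac{2-0}{3}=\frac23,\qquad q_2^-=\frac{2-\tfrac43}{2}=\frac13.
\]
Here \(q_2^-<q_1^-\), so \(\boldsymbol q^-\) is not monotone. This is the required counterexample.
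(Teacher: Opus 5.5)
Your proof is correct. Part~(i) differs from the paper only in presentation, but part~(ii) follows a genuinely different route.

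In part~(i), you obtain the elevation coefficients algebraically by multiplying by \(p+(1-p)\). The paper instead discards one of \(J+1\) battlefields uniformly at random and applies \(\boldsymbol q\) to the remaining \(J\). The admissibility checks are the same in both.

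In part~(ii), the paper verifies the overdetermined elevation system directly. It solves the recursion for \(1\le n<J/2\) and extends by reflection \(q^-_{J-1-n}=1-q^-_n\). It checks the middle equation via \(q_{J/2}=1/2\), notes that the upper-half equations mirror the lower-half ones, and gets the endpoint equation from \(q^-_{J-1}=1-q_0=q_J\).

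You instead show that the top coefficient \(\sum_n(-1)^{J-n}\binom{J}{n}q_n\) vanishes by pairing \(n\) with \(J-n\). Hence \(\mathcal B_J[\boldsymbol q]\) lies in the span of the degree-\((J-1)\) Bernstein basis. Your part~(i) computation, which holds for arbitrary coefficient vectors, and linear independence then give a solution of all \(J+1\) elevation equations. Uniqueness of the forward recursion identifies that solution with \(\boldsymbol q^-\).

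Each approach buys something:
\begin{itemize}
\item Your argument shows exactly where evenness enters: for odd \(J\) the pairing is an empty identity, and indeed \(\mathcal B_1[(0,1)]=p\) cannot be reduced. It also avoids the two-halves bookkeeping.
\item The paper's argument is fully constructive. Along the way it shows that \(\boldsymbol q^-\) inherits the symmetry \(q^-_n+q^-_{J-1-n}=1\). The lemma does not require this, and your approach also yields it by applying \(p\mapsto1-p\) and linear independence once more.
\end{itemize}

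Your final counterexample \((0,\tfrac12,\tfrac12,\tfrac12,1)\mapsto(0,\tfrac23,\tfrac13,1)\) is the paper's own. In a write-up, omit the exploratory attempts, since \((0,0,1,1)\) is admissible and so proves nothing.
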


For part~\emph{(i)}, start with \(J+1\) independent battle outcomes with common win probability \(p\). Consider the following prize-allocation procedure: discard one battlefield uniformly at random, independently of the outcomes, and apply \(\boldsymbol q\) to the remaining \(J\). Given \(n\) wins (\(1\leq n\leq J\)), deletion leaves \(n-1\) wins with probability \(n/(J+1)\) and \(n\) otherwise, so the conditional expected prize is \(q_n^+\). The retained outcomes remain \(J\) independent trials with win probability \(p\), so the expected prize is unchanged. The elevation formula preserves the endpoint prizes and gives \(q_n^+\leq q_n\leq q_{n+1}^+\) for \(0\leq n\leq J\) and \(q_{J+1-n}^+=1-q_n^+\), ensuring monotonicity and symmetry, hence admissibility. Part~\emph{(ii)} reverses this averaging: for elevation from \(J-1\) to \(J\), \(Jq_n=nq_{n-1}^-+(J-n)q_n^-\), giving the stated recursion. Start from \(q_0^-=q_0\), solve it for \(1\leq n<J/2\), and obtain the remaining coefficients by reflection, \(q_{J-1-n}^-=1-q_n^-\). The middle equation holds because \(q_{J/2}=1/2\); the relation \(q_{J-n}=1-q_n\) makes the upper-half equations mirror the lower-half equations, and the final endpoint is \(q_{J-1}^-=1-q_0=q_J\). Thus the construction satisfies the stated recursion and all elevation equations, proving the reward identity.

When admissible schedules with different battlefield counts generate the same expected-prize polynomial, Theorem~\ref{thm:existence} gives the same set of equilibrium distributions over total effort, holding \(r\) and \(c\) fixed. In equilibrium, each total is divided equally among the battlefields, so per-battlefield effort scales inversely with the battlefield count. Unlike elevation, reduction may fail to preserve admissibility: the admissible four-battlefield schedule \(\boldsymbol q=(0,\tfrac12,\tfrac12,\tfrac12,1)\) reduces to \(\boldsymbol q^-=(0,\tfrac23,\tfrac13,1)\), which is not monotone.

\subsection{High Reward Elasticity and Two-Sided Mixing with Infinite Support}\label{S4:high-elasticity}

For \(\varphi=\varphi_{J,r}(\,\cdot\,;\widetilde{\boldsymbol q})\), let \(\mathcal S(\varphi)\) be the set of normalized total-effort distributions \(F\) such that \((F,F)\) is a scalar equilibrium at \(c=1\). An equilibrium exists by Theorem~\ref{thm:existence}; symmetry and interchangeability then allow either strategy to be paired with itself, so \(\mathcal S(\varphi)\) is nonempty. Let \(\varepsilon_\varphi(t)\equiv t\varphi'(t)/\varphi(t)\) denote the elasticity of expected reward \(\varphi(t)\) with respect to the effort ratio \(t\). The \emph{high-elasticity condition} is
\begin{equation}\label{eq:high-elasticity}
\varepsilon_\varphi(t)>1
\qquad\text{for every }t\in(0,1].
\end{equation}

Recall that \(\delta_x\) denotes unit mass at effort \(x\), so \(\delta_0\) denotes inactivity.

\begin{proposition}[High-elasticity equilibrium set]\label{prop:high-elasticity}
Under~\eqref{eq:high-elasticity}, every \(F\in\mathcal S(\varphi)\) has countably infinite positive support with zero as its only accumulation point. For every \(c\in(0,1]\), the equilibrium set consists exactly of the profiles \(F_A=F_1\) and \(F_B=(1-c)\delta_0+cF_2\), where \(F_1,F_2\in\mathcal S(\varphi)\).
\end{proposition}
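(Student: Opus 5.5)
The plan has two parts. First I would establish the support structure for the symmetric game at \(c=1\), and show that this equilibrium fully dissipates the prize. Then I would transfer to general \(c\) using interchangeability (Lemma~\ref{lem:core-reductions}\emph{(ii)}) and the cost-balance identity \(c\mathbb E[X_A]=\mathbb E[X_B]\).

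\emph{Step 1: infinite support at \(c=1\).} Fix \(F\in\mathcal S(\varphi)\). By Theorem~\ref{thm:equilibrium-taxonomy} with \(c=1\), neither contestant has an atom at zero. Hence \(\Pi(x;F)=\int\varphi(x/y)\,dF(y)-x\) for \(x>0\), the payoff from zero is \(0\), and the equilibrium payoff is \(\overline\pi\geq0\).

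Suppose the positive support had a smallest point \(x_0>0\). It is a best response, so \(\Pi(x_0;F)=\overline\pi\geq0\), and since \(x_0>0\) lies interior to \([0,\infty)\), \(\Pi'(x_0;F)=0\). Every \(y\in\operatorname{supp}F\) satisfies \(y\geq x_0\), so \(x_0/y\in(0,1]\), and~\eqref{eq:high-elasticity} gives \(\varphi'(x_0/y)/y>\varphi(x_0/y)/x_0\). Integrating yields
\[
\Pi'(x_0;F)>\frac{1}{x_0}\int\varphi(x_0/y)\,dF(y)-1=\frac{\Pi(x_0;F)}{x_0}\geq0,
\]
which is a contradiction. (Differentiation under the integral is justified by boundedness of \(\varphi'(t)/y\) on the relevant range, since \(y\ge x_0\).)

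Hence \(0\) is an accumulation point of the positive support. By Theorem~\ref{thm:equilibrium-taxonomy} this support is then countably infinite with zero as its only accumulation point. It also cannot be finite, and a point mass would be a positive point mass, which the same argument excludes.

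\emph{Step 2: zero payoff.} As \(x\downarrow0\), \(\Pi(x;F)\to0\) by dominated convergence, because there is no atom at zero and \(\varphi(0)=0\). Support points accumulate at \(0\), and \(\Pi\) is continuous on \((0,\infty)\), so each of them attains \(\overline\pi\). Therefore \(\overline\pi=0\), which means \(\int\mathcal U\,d(F\otimes F)=1/2\) forces \(\mathbb E_F[X]=1/2\). This holds for every \(F\in\mathcal S(\varphi)\).

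\emph{Step 3: sufficiency for \(c\in(0,1]\).} Take \(F_1,F_2\in\mathcal S(\varphi)\) and \(G=(1-c)\delta_0+cF_2\). For \(x>0\), \(\Pi_A(x;G)=(1-c)+c\,\Pi^{\rm sym}(x;F_2)\), where \(\Pi^{\rm sym}\) is the payoff in the symmetric game. At \(x=0\) it equals \((1-c)/2<1-c\). So \(A\)'s best responses to \(G\) are exactly the best responses to \(F_2\) in the symmetric game, and by interchangeability at \(c=1\) this set contains \(F_1\).

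Since \(F_1\) has no atom at zero, \(\Pi_B(y;F_1)=\Pi^{\rm sym}(y;F_1)\le0\), with equality on \(\operatorname{supp}F_2\) and at \(y=0\). Thus \(G\) is a best response to \(F_1\).

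\emph{Step 4: necessity.} By Step 3 and Lemma~\ref{lem:core-reductions}\emph{(ii)}, any equilibrium \(B\)-strategy \(G\) can be paired with some \(F_1\in\mathcal S(\varphi)\). Write \(G=\zeta\delta_0+(1-\zeta)G_+\), where \(G_+\) has no atom at zero.

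\begin{enumerate}
\item \(B\)'s payoff against \(F_1\) coincides with the symmetric payoff, so \(G_+\) best-responds to \(F_1\) in the symmetric game.
\item \(A\)'s positive-effort payoff against \(G\) is an increasing affine transform of her symmetric payoff against \(G_+\), and \(F_1\) has no atom at zero. So \(F_1\) best-responds to \(G_+\).
\item By items 1 and 2, \((F_1,G_+)\) is a \(c=1\) equilibrium, hence \(G_+\in\mathcal S(\varphi)\) by interchangeability.
\item Cost balance gives \((1-\zeta)\cdot\tfrac12=c\cdot\tfrac12\), so \(\zeta=1-c\).
\end{enumerate}

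Symmetrically, let \(F\) be any equilibrium \(A\)-strategy, paired with \(G=(1-c)\delta_0+cF_2\). Then \(F\) best-responds to \(F_2\) in the symmetric game (using Step 3's affine relation, with no atom at zero for \(A\)). Also \(F_2\) best-responds to \(F\), since \(B\)'s payoff is symmetric. So \(F\in\mathcal S(\varphi)\).

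\emph{Main obstacle.} The delicate point is Step 1's strict inequality. It needs \(x_0/y\le1\) for all rival support points, which is why the argument targets the smallest positive support point. It also needs \(\overline\pi\ge0\) together with the absence of an atom at zero. I would verify carefully that Theorem~\ref{thm:equilibrium-taxonomy} delivers this absence for both players at \(c=1\), and that differentiability of \(\Pi\) at \(x_0\) holds (analyticity on positive efforts, as used in the proof of Theorem~\ref{thm:equilibrium-taxonomy}).
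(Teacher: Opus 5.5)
Steps 1--3 and the \(A\)-half of Step~4 are sound and follow the paper's argument closely. The gap is in the \(B\)-half of Step~4, where the argument is circular.

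\textbf{Item~2 needs \(\zeta=1-c\), which you have not yet shown.} For a general equilibrium \(B\)-strategy \(G=\zeta\delta_0+(1-\zeta)G_+\) and \(x>0\),
\[
\Pi_A(x;G)=\zeta+(1-\zeta)\int\mathcal U(x,y)\,dG_+(y)-cx
=\zeta+(1-\zeta)\,\Pi^{\mathrm{sym}}(x;G_+)+(1-\zeta-c)\,x .
\]
This is an increasing affine transform of \(\Pi^{\mathrm{sym}}(x;G_+)\) only when \(\zeta=1-c\). Otherwise \(F_1\) is a best response to \(G_+\) at cost ratio \(c/(1-\zeta)\), not at cost one, so item~2 does not follow. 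Yet you derive \(\zeta=1-c\) only in item~4. That step uses \(\mathbb E_{G_+}[X]=1/2\), which needs \(G_+\in\mathcal S(\varphi)\) from item~3, and item~3 rests on item~2. Your Step~3 identity worked only because the positive part there already had weight \(c\).

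\textbf{How to break the circle.} Determine \(\zeta\) independently before item~2; your own tools suffice.
\begin{enumerate}
\item Since \(F_1\in\mathcal S(\varphi)\), its support contains points \(x_k\downarrow0\). Each is a best response to \(G\), so bounded convergence gives \(A\)'s equilibrium payoff as \(\lim_k\Pi_A(x_k;G)=\zeta\).
\item \(B\)'s payoff against \(F_1\) is the symmetric-game payoff, whose supremum is \(0\), so \(B\) earns \(0\).
\item Summing the two payoffs and using cost balance with \(\mathbb E_{F_1}[X]=1/2\) gives \(A\)'s payoff \(1-c-0=1-c\). Alternatively, invoke Proposition~\ref{prop:outcome-equivalence} with the profiles from Step~3.
\end{enumerate}
Hence \(\zeta=1-c\). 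Your affine relation in item~2 then holds, and items~3--4 go through.

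This is essentially how the paper proceeds: it pins down \(F_B(\{0\})=1-c\) from \(A\)'s indifference along support points tending to zero, together with her common payoff \(1-c\), and only afterwards shows \(F_2\in\mathcal S(\varphi)\). For that last step it uses an integral-equality argument rather than your interchangeability route, but either works once \(\zeta\) is fixed.
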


In a symmetric equal-cost equilibrium, condition~\eqref{eq:high-elasticity} rules out a smallest positive support point. Since there is no atom at zero (Theorem~\ref{thm:equilibrium-taxonomy}), such an effort would face only weakly larger rival efforts, placing every realized own-to-rival effort ratio in \((0,1]\). The elasticity condition and nonnegative equilibrium payoff would then make a small proportional increase in effort profitable.\footnote{Let \(a>0\) be a hypothetical smallest positive support point and \(Y\sim F\). Since \(F(\{0\})=0\), \(Y\geq a\) almost surely, so \(a/Y\in(0,1]\). Write \(R(x)=\mathbb E[\varphi(x/Y)]\). Nonnegative equilibrium payoff implies \(R(a)\geq a\), while condition~\eqref{eq:high-elasticity} gives \(aR'(a)=\mathbb E[(a/Y)\varphi'(a/Y)]>R(a)\geq a\). Thus \(R'(a)>1\): marginal expected reward exceeds the unit marginal cost, so increasing effort slightly is profitable.} Hence finite support is impossible; Theorem~\ref{thm:equilibrium-taxonomy} then gives countably infinite positive support with zero as its only accumulation point.

Cost asymmetry operates through participation: \(A\) always participates, while \(B\) participates with probability \(c\); both conditional effort distributions belong to \(\mathcal S(\varphi)\). Mean normalized total efforts are \((1/2,c/2)\) and payoffs \((1-c,0)\). Holding the schedule and \(r\) fixed, lower \(c\) leaves \(A\)'s mean effort unchanged but reduces \(B\)'s participation and both contestants' expected effort costs. This parallels asymmetric one-battle Tullock contests with discriminatory power above two \citep{ewerhart2015mixed}, but here the victory-count prize schedule generates this structure despite \(r\leq1\) on each battlefield. Appendix~\ref{app:proof-high-elasticity} proves the correspondence.

Here and below, majority rule means \(q_n=1\) for \(n>J/2\), \(q_n=0\) for \(n<J/2\), and \(q_{J/2}=1/2\) when \(J\) is even. Write \(\varphi_{J,r}^{\mathrm{maj}}\) for the majority-rule reward.
For paired counts \(J\in\{2N+1,2N+2\}\), \(N\geq0\), condition~\eqref{eq:high-elasticity} holds exactly when \(r>r_N^{\mathrm{maj}}\equiv2^{2N+1}/[(2N+1)\binom{2N}{N}]\).\footnote{For \(\varphi=\varphi_{2N+1,r}^{\mathrm{maj}}\), differentiation gives \(\varepsilon_\varphi(t)=\frac{r(2N+1)\binom{2N}{N}}{(1+t^r)\sum_{j=0}^N\binom{2N+1}{N+1+j}t^{rj}}\).
The denominator is strictly increasing in \(t\). Since \(\sum_{j=0}^N\binom{2N+1}{N+1+j}=2^{2N}\), the minimum on \(0<t\leq1\) is \(\varepsilon_\varphi(1)=r(2N+1)\binom{2N}{N}/2^{2N+1}\). The even-count majority schedule is the degree elevation of the odd-count schedule, so Lemma~\ref{lem:cross-count-identities}\emph{(i)} gives the same bound.}

\begin{example}[Seven- and eight-battlefield majority]\label{ex:majority-high-elasticity}
By Lemma~\ref{lem:cross-count-identities}, the two counts share the reward function \(\varphi_{7,r}^{\mathrm{maj}}(t)=\varphi_{8,r}^{\mathrm{maj}}(t)=\frac{t^{4r}(35+21t^r+7t^{2r}+t^{3r})}{(1+t^r)^7}\).
Seven and eight are the first battlefield counts for which this bound lies below one: \(r_3^{\mathrm{maj}}=32/35\). Thus condition~\eqref{eq:high-elasticity} holds for \(32/35<r\leq1\). At \(r=1\), equal-cost equilibrium distributions have countably infinite positive support accumulating only at zero; with \(c=1/2\), \(B\) is inactive with probability \(1/2\) and otherwise uses such a distribution.\footnote{For \(r=c=1\), Appendix~\ref{app:j7-numerical-F} provides a numerical approximation.}
\end{example}

\subsection{A Sharp Schedule-Robust Threshold}\label{S4:low-discrimination}

The next proposition gives the sharp discriminatory-power threshold ensuring that equilibrium is unique and pure for every admissible schedule and cost ratio. Let \(r_J^{\mathrm{all}}\equiv1/\lceil J/2\rceil\).

\begin{proposition}[Sharp schedule-robust pure-equilibrium threshold]
\label{prop:low-discrimination}
For every admissible prize schedule and cost ratio, equilibrium is unique and pure whenever \(0<r\leq r_J^{\mathrm{all}}\). This bound is sharp: for every \(J\geq3\) and \(r>r_J^{\mathrm{all}}\), majority rule has no pure equilibrium for all sufficiently small \(c\).
\end{proposition}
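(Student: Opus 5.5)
The plan is to prove strict concavity of \(\varphi=\varphi_{J,r}\) on \((0,\infty)\) whenever \(r\leq r_J^{\mathrm{all}}\). Strict concavity makes every tangent line a strict supporting line. Then both inequalities in \eqref{eq:general-pure-support} hold, strictly away from \(c\) and \(1/c\), including at the endpoint \(0\) by continuity of \(\varphi\) at \(0\). By the discussion following Corollary~\ref{cor:type-coexistence}, this fixes both efforts in every equilibrium, so the equilibrium is unique and pure.

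To get concavity, I differentiate the binomial representation. Write \(d_m=\widetilde q_m-\widetilde q_{m-1}\geq0\) for \(m=1,\ldots,J\). Admissibility gives the symmetry \(d_m=d_{J+1-m}\) and \(\sum_m d_m=1\). Using \(\frac{d}{dp}\Pr(W\geq m)=J\binom{J-1}{m-1}p^{m-1}(1-p)^{J-m}\), with \(p=s/(1+s)\) and \(s=t^r\), a short computation gives
\[
\varphi'(t)=\frac{rJ}{t}\sum_{m=1}^{J}d_m\binom{J-1}{m-1}\frac{s^{m}}{(1+s)^{J+1}} .
\]
I then group \(m\) with \(J+1-m\). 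These terms have equal weight \(d_m\) and equal binomial coefficient, so each pair contributes a positive multiple of \(g_m(s)=s^{-1/r}(s^m+s^{J+1-m})/(1+s)^{J+1}\). Here \(m\leq\lceil J/2\rceil\), and when \(J\) is odd the middle index counts once. It therefore suffices to show that each \(g_m\) is strictly decreasing in \(s\), because \(t\mapsto s\) is increasing. The logarithmic derivative is
\[
\frac{s\,g_m'(s)}{g_m(s)}=-\frac1r+h_m(s),\qquad
h_m(s)=\frac{m s^m+(J+1-m)s^{J+1-m}}{s^m+s^{J+1-m}}-\frac{(J+1)s}{1+s}.
\]
The key claim is \(h_m(s)<\lceil J/2\rceil\) for all \(s>0\), with \(h_m(0^+)=m\leq\lceil J/2\rceil\) attained only in the limit. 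This gives \(sg_m'/g_m<0\) whenever \(1/r\geq\lceil J/2\rceil\).

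I expect the bound on \(h_m\) to be the main obstacle. My first attempt uses the symmetry \(h_m(1/s)=(J+1)-\text{avg}-(J+1)/(1+s)=\text{avg}'-(J+1)s/(1+s)+\ldots\). More directly, writing \(a=m\), \(b=J+1-m\), and \(w=s^{b-a}\), one has \(h_m(s)=\frac{a+bw}{1+w}-\frac{(a+b)s}{1+s}\). This reduces the problem to showing \(\frac{(b-a)w}{1+w}<\frac{(a+b)s}{1+s}+(\lceil J/2\rceil-a)\), which I would check by comparing \(w=s^{b-a}\) with \(s\). For \(s\leq1\), use \(w\leq s\) and \(b-a\leq a+b\). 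For \(s>1\), use the bound \(\frac{(b-a)w}{1+w}<b-a\) together with \(\frac{(a+b)s}{1+s}>\frac{a+b}{2}\), and note \(b-a-(a+b)/2\leq\lceil J/2\rceil-a\). A limiting check at \(s\to0\) confirms that the threshold cannot be relaxed. The proof of concavity then finishes.

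For sharpness, take majority rule with \(J\geq3\) and \(r>1/\lceil J/2\rceil\). The smallest \(m\) with \(d_m>0\) is \(m_0=\lceil J/2\rceil\): this is \((J+1)/2\) for odd \(J\), and \(J/2\) for even \(J\), where \(q_{J/2}=1/2\). Hence \(\varphi(t)=Ct^{rm_0}(1+o(1))\) near \(0\), and the derivative formula shows \(\varphi'\) is strictly increasing on some \((0,\delta)\). So \(\varphi\) is strictly convex there. For \(c<\delta\), the tangent line at \(c\) lies strictly below \(\varphi\) at nearby \(t\neq c\), which violates the first inequality of \eqref{eq:general-pure-support}. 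By Proposition~\ref{prop:general-support} the unique pure candidate then fails, so no pure equilibrium exists.
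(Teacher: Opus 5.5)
Your proposal is correct and follows the paper's proof. Pairing the symmetric terms $m$ and $J+1-m$ of the Bernstein derivative and showing $h_m<\lceil J/2\rceil$ is exactly the paper's pairwise bound $zG_k'/G_k<\lceil J/2\rceil-1$; your $h_m$ is that elasticity plus one, because you absorbed the factor $1/t$. The strict tangent test combined with interchangeability is the uniqueness argument the main text invokes. The one variation is in sharpness: you show that $c$ is a strict local minimum of $B$'s payoff because majority-rule $\varphi$ is strictly convex near zero, whereas the paper shows that the tangent at $c$ has negative intercept, so $B$ prefers inactivity. Both arguments rest on the same leading term $t^{r\lceil J/2\rceil}$, and your local convexity in fact implies the paper's negative intercept.
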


FKL establish uniqueness and purity for \(r\leq2/(J+1)\), a sharp bound for odd \(J\). For even \(J\), their majority-rule counterexample requires \(r>2/J\) and sufficiently small \(c\), leaving \(2/(J+1)<r\leq2/J\) unresolved. The proposition closes this gap.

Let \(\Delta_n\equiv\widetilde q_n-\widetilde q_{n-1}\) be the normalized prize increment from the \(n\)th victory. Monotonicity and symmetry give \(\Delta_n\geq0\) and \(\Delta_n=\Delta_{J+1-n}\). Pairing these increments gives a schedule-independent bound on the elasticity of marginal reward \(\varphi_{J,r}'(t)\) with respect to the effort ratio \(t\): \(d\log\varphi_{J,r}'(t)/d\log t=t\varphi_{J,r}''(t)/\varphi_{J,r}'(t)<r\lceil J/2\rceil-1\) for \(t>0\). Thus \(\varphi_{J,r}\) is strictly concave for \(r\leq r_J^{\mathrm{all}}\). The global tangent test and interchangeability then yield the unique equilibrium, with positive per-battlefield efforts \(e_A^*=\Lambda\varphi_{J,r}'(c)/J\) and \(e_B^*=ce_A^*\).

Sharpness follows from participation incentives. Under majority rule, the first rewarded victory count is \(\lceil J/2\rceil\). Above the threshold, sufficiently small \(c\) makes \(B\) prefer inactivity to the sole pure candidate, ruling out pure equilibrium. The threshold is therefore sharp across schedules, but exceeding it need not force mixing: proportional rewards still ensure that equilibrium is unique and pure throughout \(r\leq1\). Appendix~\ref{app:proof-low-discrimination} proves the bound and sharpness.

\subsection{Equilibrium Forms by Battlefield Count}\label{S4:counts}
The preceding result is uniform over prize schedules but applies only at low discriminatory power. We now ask how battlefield count limits the number and location of optimal effort targets over the full range \(r\leq1\).
\begin{theorem}[Uniqueness through six battlefields]
\label{thm:lowcount-refinements}
If \(J\leq6\), equilibrium is unique and either pure or semi-pure, with any mixer using at most two effort levels. Through five battlefields, only the higher-cost contestant can mix; with six, either contestant can be the sole mixer.
\end{theorem}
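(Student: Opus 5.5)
Our plan is to work entirely in the polynomial variable \(s=t^r\). Since \(\varphi_{J,r}(t)=\varphi_{J,1}(t^r)\), we write \(\varphi(t)=\psi(t^r)\) with \(\psi(s)=\sum_n\binom Jn\widetilde q_ns^n/(1+s)^J\). By Lemma~\ref{lem:cross-count-identities}\emph{(i)}, every admissible schedule with \(J\) odd is also represented by an admissible schedule with \(J+1\) battlefields. So the case \(J\leq5\) reduces to \(J\in\{2,4\}\) for the uniqueness claims, and \(J=6\) covers \(J=5\).

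The core step is to bound the number of positive solutions of the tangency equation \(\varphi'(t)=\sigma\). Written in \(s\), the derivative is
\[
\varphi'(t)=\frac{r s^{(r-1)/r}}{(1+s)^{J+1}}\sum_{n=1}^{J}\binom{J}{n}\,n\,\Delta_n\,s^{n-1}\cdot(\text{binomial weights}),
\]
that is, \(r s^{1-1/r}\) times \(P(s)/(1+s)^{J+1}\). Here \(P\) has nonnegative coefficients that are symmetric under \(\Delta_n=\Delta_{J+1-n}\). Applying Descartes' rule of signs and the symmetry \(\varphi'(1/t)=t^2\varphi'(t)\), we will show the following. The function \(t\mapsto\log\varphi'(t)\) has at most a limited number of monotone pieces, so for each \(\sigma\) the supporting line has at most two positive contacts when \(J\leq6\). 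We also treat the contact at zero: the origin can be a contact only when \(\varphi\) is flat to high order at zero, i.e.\ \(\Delta_1=\dots=\Delta_k=0\), which forces the remaining structure. The count of sign changes in the coefficients of \(\sigma(1+s)^{J+1}-rs^{1-1/r}P(s)\), after reducing to a polynomial by a substitution valid for \(r=1\) and dominated for \(r<1\), gives at most three solutions of \(\varphi'=\sigma\). Of these, at most two can be contacts of a supporting line, since a middle tangency point lies on a convex stretch.

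Given \(|\Gamma(\sigma)|\leq2\) for every \(\sigma\), we then apply Proposition~\ref{prop:finite-contact}. We take \(\sigma=(\operatorname{cav}\varphi)'(c)\), which is the candidate from the semi-pure procedure, or \(\varphi'(c)\) if \(c\) is a contact. We verify the \emph{strict} pure-contestant test: each reciprocal tangent at \(1/u\), for contacts \(u\), touches \(\varphi\) only at \(1/u\). This uses the same contact bound combined with reward symmetry. The reflection \(t\mapsto1/t\) maps the concave/convex pattern of \(\varphi\) to itself, so a double contact for slope \(u^2\sigma\) would produce too many tangencies for \(\varphi'\). Proposition~\ref{prop:finite-contact} then yields a unique equilibrium with \(A\) pure, and \(B\) mixing over at most two levels with the unique lottery \(\Pr(T_B=t_+)=(c-t_-)/(t_+-t_-)\). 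When the mean \(c\) lies on a convex piece but the test fails for \(A\), we run the mirror argument with the \(B\)-pure candidate at mean \(1/c\). For \(J\leq5\), we show that \(\operatorname{cav}\varphi\) coincides with \(\varphi\) on \([1,\infty)\). The reason is that with at most five battlefields the inflection structure forces concavity for \(t\geq1\), via the bound \(t\varphi''/\varphi'<r\lceil J/2\rceil-1\) refined on \(t\geq1\). Hence \(1/c\geq1\) is always a unique contact, and \(A\) never mixes. For \(J=6\), we exhibit an admissible schedule that makes \(B\) pure and \(A\) the sole two-point mixer, by checking the chord condition at \(1/c\) explicitly.

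The main obstacle is the sign-change count for general \(r<1\), where \(\varphi'\) is not polynomial in \(t\). We would first work in \(s\), treating \(\sigma\) times the Jacobian \(s^{1/r-1}\) as a perturbation, and use generalized Descartes for real exponents, as in the taxonomy proof. If that bound is not tight enough, we would fall back on analyzing \(Q_J(p)\) directly as a polynomial in \(p\) of degree at most \(J\), with the symmetry \(Q(1-p)=1-Q(p)\).
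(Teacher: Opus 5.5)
Your overall architecture (at most one chord, strict reciprocal-deviation inequalities, then Proposition~\ref{prop:finite-contact}) matches the paper's. The gap is in the geometry that makes the reciprocal test strict and, for \(J=6\), keeps the chord from straddling \(t=1\).

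\textbf{The reciprocal test.} Your justification, that \(t\mapsto1/t\) maps the concave/convex pattern of \(\varphi\) to itself, is false. Differentiating \(\varphi(t)+\varphi(1/t)=1\) twice gives \(\varphi''(1/u)=-u^3[2\varphi'(u)+u\varphi''(u)]\). So curvature at \(1/u\) depends on whether \(u\varphi''(u)/\varphi'(u)\) exceeds \(-2\), not on the sign of \(\varphi''(u)\). In Example~\ref{ex:six-lower-cost-mixing}, \(\varphi_6\) is convex near \(t=2\) yet strictly concave on all of \((0,1)\).

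More fundamentally, no contact count can deliver the test. You must show that the tangent at \(1/u\) supports \(\varphi\) at all, i.e.\ that \(1/u\) never falls under a chord where \(\operatorname{cav}\varphi>\varphi\). The missing ingredient is that, for every admissible schedule with \(J\leq6\), the tangent at \(t=1\) is a strict global upper bound on \(\varphi\). The paper verifies the \(r=1\) gap by direct factorization for \(J=3,4\) and at the vertices of the triangle \(0\leq\widetilde q_1\leq\widetilde q_2\leq1/2\) for \(J=5,6\). It then transfers the bound to \(r<1\) via \(t^r-1\leq r(t-1)\). This fact is specific to low counts: under seven-battlefield majority rule with \(r=1\), the tangent at one has intercept \(1/2-140/256<0\).

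The strict tangent excludes \(1\) from the closed chord, so the chord lies wholly on one side of one. Combined with the location of the curvature changes, it makes \(\varphi=\operatorname{cav}\varphi\) strictly concave on the other side, which is where the reciprocals of the chord endpoints land. Without it you cannot rule out a straddling chord. That is exactly the case your fallback (``run the mirror \(B\)-pure argument'') leaves open, with two-sided mixing not excluded.

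The same omission breaks your \(J\leq5\) step. Concavity on \([1,\infty)\) would not give \(\operatorname{cav}\varphi=\varphi\) there, since a chord starting below one could end above one. Also, refining the elasticity bound from the proof of Proposition~\ref{prop:low-discrimination} to \(t\geq1\) gives only \(t\varphi''/\varphi'\leq r-1+r(J-3)/2\). That is \(3r/2-1\) for \(J=4\) and \(2r-1\) for \(J=5\), neither of which need be negative.

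\textbf{The contact bound.} Your method does not establish it. Descartes' rule on the monomial coefficients of \(\sigma(1+s)^{J+1}-JK_J(s)\) (your equation at \(r=1\)) is too weak for \(J=6\). Take the admissible schedule \(\widetilde{\boldsymbol q}=(0,0.45,0.45,\tfrac12,0.55,0.55,1)\) and \(\sigma=0.1\). The coefficients are \((-2.6,0.7,-0.9,0.5,3.5,-0.6,0.7,0.1)\), with five sign changes, although \(\varphi\) is strictly concave there and the equation has one root.

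The paper instead counts sign changes of Bernstein coefficients of the curvature numerator \((r-1)g'(z)+rzg''(z)\). These coefficients are affine in \(r\), so \(r<1\) needs no perturbation argument. For \(J=6\) it switches to the reciprocal-symmetric variable \(w=(z-1)/(z+1)\). This separates the curvature equations on the two sides of one and shows that positive coefficients can occur on at most one side: \(\widetilde q_1>5/14\) is required above one and \(\widetilde q_1<10/41\) below. That one-sidedness is the substantive six-battlefield fact.

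Your degree-elevation reduction and the final appeal to Proposition~\ref{prop:finite-contact} are sound, but they rest on these two unproved geometric facts. Separately, your origin criterion is wrong: zero can be a contact at \(r=1\) even when \(\Delta_1>0\), as in the three-battlefield case \(\omega>3/4\), so flatness at the origin is not the right criterion.
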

The argument applies the semi-pure construction in Section~\ref{S3:semi-pure-equilibrium}: the chord's slope determines the pure contestant's effort, and its contacts determine the mixer's own-to-rival ratios. When \(A\) is pure, these are \(T_B=X_B/X_A\), with mean \(c\); when \(B\) is pure, they are \(T_A=X_A/X_B\), with mean \(1/c\). Appendix~\ref{app:proof-battlefield-count} shows that, for \(J\leq6\), the least concave majorant has at most one nontrivial supporting chord, with contacts \(t_-<t_+\). Contacts below one support mixing by \(B\) exactly when \(t_-<c<t_+\); contacts above one support mixing by \(A\) exactly when \(t_-<1/c<t_+\). Through five battlefields, any such contacts lie below one; with six, both contacts lie below one or both above one---never on opposite sides.

The mean condition uniquely fixes the two contact probabilities, with equality at an endpoint yielding a pure strategy. The appendix's strict reciprocal-deviation inequalities complete the pure contestant's best-response check. Interchangeability, as in Section~\ref{S3:uniqueness}, then fixes that contestant's effort in every equilibrium, so the contact and mean restrictions give uniqueness. If no chord straddles the relevant mean, equilibrium is unique and pure.

\paragraph{One and two battlefields.}
For \(J=1,2\), admissibility fixes the normalized schedules \((0,1)\) and \((0,1/2,1)\). Both induce the standard scalar Tullock reward \(\varphi(t)=t^r/(1+t^r)\), so the unique equilibrium is pure, with \(e_A^*=\Lambda r c^{r-1}/[J(1+c^r)^2]\) and \(e_B^*=ce_A^*\).
Adding the second battlefield leaves total effort unchanged; equalization divides that total equally between the two battlefields.

\paragraph{Three battlefields.}\label{S4:three}
The normalized schedule and reward function are
\[
\widetilde{\boldsymbol q}_{3,\omega}=(0,1-\omega,\omega,1),
\qquad
\varphi_{3,\omega,r}(t)=
\frac{t^{3r}+3\omega t^{2r}+3(1-\omega)t^r}{(1+t^r)^3},
\quad \omega\in[1/2,1].
\]

Proposition~\ref{prop:low-discrimination} establishes that equilibrium is unique and pure for all \(c\) when \(r\leq1/2\). At \(r=1\), equilibrium is also unique and pure if \(\omega\leq3/4\). At \(r=1\) and \(\omega>3/4\), the contacts are \(c_-(\omega,1)=0\) and \(c_+(\omega,1)=-(3\omega-1)+\sqrt{9\omega^2+6\omega-8}\): the higher-cost contestant mixes between inactivity and one positive effort for \(c<c_+\), and equilibrium is pure otherwise. For majority rule, \(\omega=1\), the same cutoff classification holds for \(1/2<r<1\), with \(c_-(1,r)=0\) and \(c_+(1,r)=(\sqrt{1+6r}-2)^{1/r}\).

For \(r\in(0,1)\) and \(\omega<1\), any nontrivial supporting chord has two positive contacts below one, \(c_-(\omega,r)<c_+(\omega,r)\). When such a chord exists, the unique equilibrium is semi-pure for \(c_-<c<c_+\), with \(B\) mixing between two positive efforts, and pure otherwise, including the cutoffs. Without a chord, equilibrium is pure for all \(c\).

For example, \(r=4/5\) and \(\omega=0.9\) give the schedule \((0,0.1,0.9,1)\) and cutoffs \(c_-\approx0.012985\), \(c_+\approx0.083530\). This pure--semi-pure--pure pattern as \(c\) increases lies outside the \(r=1\) three-battlefield benchmark studied by \citet{luwangzhou2024electoral}. Appendix~\ref{app:proof-battlefield-count} gives the contact equations and numerical procedure.

\paragraph{Four battlefields.}
A symmetric four-battlefield schedule has the form
\(\widetilde{\boldsymbol q}_{4,a}=(0,a,1/2,1-a,1)\), where \(a\in[0,1/2]\). By Lemma~\ref{lem:cross-count-identities}, its degree-reduced vector is \(\widetilde{\boldsymbol q}_{4,a}^{-}=(0,4a/3,1-4a/3,1)=\widetilde{\boldsymbol q}_{3,\,1-4a/3}\), and hence
\[
\varphi_{4,a,r}(t)
=\frac{4at^r+3t^{2r}+4(1-a)t^{3r}+t^{4r}}{(1+t^r)^4}
=\varphi_{3,\,1-4a/3,\,r}(t).
\]
When \(a\leq3/8\), the degree-three coefficients describe an admissible three-battlefield schedule, so the three-battlefield analysis transfers under \(\omega=1-4a/3\). When \(a>3/8\), the reduced vector is not admissible, but the common reward function is strictly concave and therefore ensures that equilibrium is unique and pure.

For example, applying Lemma~\ref{lem:cross-count-identities}\emph{(i)} to the preceding three-battlefield schedule \((0,0.1,0.9,1)\) yields \((0,0.075,1/2,0.925,1)\). The two schedules induce the same reward function and hence the same two-cutoff equilibrium classification at \(r=4/5\).

\paragraph{Five battlefields.}\label{S4:five}

A normalized schedule has two free parameters:
\(\widetilde{\boldsymbol q}=(0,\widetilde q_1,\widetilde q_2,1-\widetilde q_2,1-\widetilde q_1,1)\), with \(0\leq\widetilde q_1\leq\widetilde q_2\leq1/2\), and
\[
\varphi_{5,r}(t)=
\frac{5\widetilde q_1t^r+10\widetilde q_2t^{2r}
+10(1-\widetilde q_2)t^{3r}+5(1-\widetilde q_1)t^{4r}+t^{5r}}
{(1+t^r)^5}.
\]
For \(r\leq1/3\), equilibrium is unique and pure.
At \(r=1\), semi-pure equilibrium with three or four battlefields necessarily involves inactivity, whereas five battlefields can support mixing between two positive effort levels. For the schedule \((0,0.11,0.11,0.89,0.89,1)\), the unique chord has contacts \(t_-\approx0.0082\) and \(t_+\approx0.6409\). At \(c=0.20\), the higher-cost contestant mixes between the corresponding efforts. Appendix~\ref{app:five-boundary-algebra} gives exact polynomial systems for the \(r=1\) concavity boundary, the switch from inactivity--positive mixing to two-positive mixing, and the associated contact points. These systems make the phase boundaries directly computable from the prize schedule.

\paragraph{Six battlefields.}\label{S4:six}

Example~\ref{ex:six-lower-cost-mixing} illustrates mixing by the lower-cost contestant.
\begin{example}[Six battlefields: the lower-cost contestant mixes]\label{ex:six-lower-cost-mixing}
Let \(J=6\), \(r=1\), \(c=1/2\), \(\Lambda=1\), and \(\widetilde{\boldsymbol q}=(0,\tfrac12,\tfrac12,\tfrac12,\tfrac12,\tfrac12,1)\). The unique chord has contacts \(t_-\approx1.06518\) and \(t_+\approx3.06644\), slope \(\sigma\approx0.0444009\), and mixing probability \(\lambda=(2-t_-)/(t_+-t_-)\approx0.46711\). The unique equilibrium distributions of normalized total effort are \(F_B=\delta_{2\sigma}, F_A=(1-\lambda)\delta_{2\sigma t_-}+\lambda\delta_{2\sigma t_+}\).

\end{example}
 The schedule rewards both avoiding a shutout and completing a sweep, creating two distant positive effort targets. Appendix~\ref{app:six-lower-cost-mixing} verifies the construction.

\paragraph{Seven battlefields: multiplicity and two-sided mixing.}\label{S4:complexity}

 Example~\ref{ex:seven-multiplicity} exhibits three contacts with a supporting line and the equilibrium continuum illustrated in Figure~\ref{fig:coexistence-uniqueness}.

\begin{example}[Seven battlefields: unique finite two-sided mixing]
\label{ex:finite-two-sided}
Let \(J=7\), \(r=c=\Lambda=1\), and \(\widetilde{\boldsymbol q}=(0,\tfrac12,\tfrac12,\tfrac12,\tfrac12,\tfrac12,\tfrac12,1)\), so the prize is split unless one contestant sweeps. Appendix~\ref{app:finite-two-sided} derives algebraic values \(a,s\) and the corresponding mixing probability. Numerically, \(a\approx0.0349\), \(s\approx3.96\), and \(F\approx0.936\delta_a+0.064\delta_{sa}\); the profile \((F,F)\) is the unique equilibrium. The appendix verifies that only \(a\) and \(sa\) are best responses to \(F\). Interchangeability confines every equilibrium distribution to these efforts; outcome equivalence fixes its mean and probabilities. The two effort levels balance the cost of contesting a sweep against the gain from preventing or completing one. With \(J=7\) and \(r=c=1\), this rule permits an exact two-point equilibrium, whereas majority rule requires infinite support.
\end{example}

Degree elevation (Lemma~\ref{lem:cross-count-identities}\emph{(i)}) extends Examples~\ref{ex:seven-multiplicity} and~\ref{ex:finite-two-sided} to every \(J\geq7\), preserving their total-effort equilibrium sets. Together with Theorem~\ref{thm:lowcount-refinements}, this establishes seven as the sharp minimum battlefield count for multiplicity or two-sided mixing.

\subsection{The Majority Rule}\label{S4:majority}

The final application gives the complete majority-rule correspondence. The characterization recovers KP's transition for odd battlefield counts and equal costs, extends it to arbitrary costs, and uses odd--even equivalence to cover the paired even counts. Recall that majority rule awards the prize to the contestant who wins more than half of the battlefields; when \(J\) is even, a tie splits the prize equally. Write \(\mathbb I\{\text{statement}\}=1\) when the statement is true and zero otherwise. Then
\[
q_n^{\mathrm{maj},J}=\mathbb I\{n>J/2\}+\tfrac12\mathbb I\{n=J/2\},
\qquad
\varphi_{J,r}^{\mathrm{maj}}(t)
=\frac{\sum\nolimits_{n=0}^{J}q_n^{\mathrm{maj},J}\binom{J}{n}t^{rn}}
{(1+t^r)^J}.
\]
The tie term appears only for even \(J\); under risk neutrality, the equal split is equivalent to an independent fair coin flip. Pascal's identity gives, for every \(N\geq0\), \(\varphi_{2N+2,r}^{\mathrm{maj}}(t)=\varphi_{2N+1,r}^{\mathrm{maj}}(t)\).

Because the even-count contest values ties at the average of winning and losing, the binomial recursion for the additional battlefield leaves expected reward unchanged from the paired odd count. Thus paired counts have identical total-effort equilibria, mixing distributions, expected prize shares, and payoffs; only per-battlefield effort differs.

\subsubsection{Equilibrium Regimes}\label{S4:majority-regimes}

Figure~\ref{fig:majority-regions} in the Introduction previews the three regimes. By odd--even equivalence, it suffices to characterize one game from each count pair. Write the pair as \(J\in\{2N+1,2N+2\}\), where \(N\geq0\). In the next display, \(C_N\) counts the ways to select a battlefield and split the remaining \(2N\) outcomes into \(N\) wins and \(N\) losses, making the selected battlefield pivotal. \(D_N\) is each contestant's normalized total effort in the equal-cost pure candidate when \(r=1\); the two \(r\)-thresholds are the schedule-robust bound derived in Section~\ref{S4:low-discrimination} and the majority-specific high-elasticity bound derived in Section~\ref{S4:high-elasticity}:
\[
\resizebox{0.98\textwidth}{!}{$\displaystyle
C_N\equiv(2N+1)\binom{2N}{N},\quad
D_N\equiv\frac{C_N}{2^{2N+2}},\quad
r_N^{\mathrm{all}}\equiv r_J^{\mathrm{all}}\equiv\frac1{N+1},\quad
r_N^{\mathrm{maj}}\equiv\frac1{2D_N},\quad
P_N(z)\equiv\sum_{j=0}^{N}\binom{2N+1}{N+1+j}z^j.
$}
\]

Let \(\mathcal S_{J,r}\equiv\mathcal S(\varphi_{J,r}^{\mathrm{maj}})\) and write \(\varphi\equiv\varphi_{J,r}^{\mathrm{maj}}\). At the sole pure candidate, \(B\)'s payoff relative to inactivity is \(\varphi(c)-c\varphi'(c)=\varphi(c)[1-\varepsilon_\varphi(c)]\). When \(r_N^{\mathrm{all}}<r\leq\min\{r_N^{\mathrm{maj}},1\}\), reward elasticity \(\varepsilon_\varphi(t)\) strictly decreases on \((0,\infty)\), from \(r(N+1)>1\) near zero to \(2rD_N\leq1\) at one. It therefore equals one at a unique \(\tau_{N,r}\in(0,1]\); equivalently,
\[
rC_N=(1+\tau_{N,r}^{r})P_N(\tau_{N,r}^{r}).
\]
Since \(\varphi(c)>0\), the pure candidate satisfies \(B\)'s participation constraint exactly when \(c\geq\tau_{N,r}\).

\begin{theorem}[Equilibrium under Majority-rule]\label{thm:majority-regimes}
Fix \(N\geq0\), \(J\in\{2N+1,2N+2\}\), \(c\in(0,1]\), and \(r\in(0,1]\).
\begin{enumerate}[(i)]
\item If \(r\leq r_N^{\mathrm{all}}\), equilibrium is unique and pure.
\item If \(r_N^{\mathrm{all}}<r\leq\min\{r_N^{\mathrm{maj}},1\}\), equilibrium is unique: it is pure when \(c\geq\tau_{N,r}\) and semi-pure when \(c<\tau_{N,r}\).
\item If \(r_N^{\mathrm{maj}}<r\leq1\), the equilibrium set consists exactly of the profiles \(F_A=F_1\) and \(F_B=(1-c)\delta_0+cF_2\), where \(F_1,F_2\in\mathcal S_{J,r}\).
\end{enumerate}
\end{theorem}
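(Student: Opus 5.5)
We treat the three regimes separately, reducing everywhere to the scalar game via Theorem~\ref{thm:existence}. By Pascal's identity, \(\varphi_{2N+2,r}^{\mathrm{maj}}=\varphi_{2N+1,r}^{\mathrm{maj}}\), so it suffices to work with \(J=2N+1\) and \(\varphi(t)=\varphi^{\mathrm{maj}}_{2N+1}(t^r)\).

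Part~(i) is immediate from Proposition~\ref{prop:low-discrimination}, since \(r_J^{\mathrm{all}}=1/\lceil J/2\rceil=1/(N+1)\) for both counts in the pair.

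Part~(iii) follows from Proposition~\ref{prop:high-elasticity}. The footnote computation in Section~\ref{S4:high-elasticity} shows that condition~\eqref{eq:high-elasticity} holds exactly when \(r>r_N^{\mathrm{maj}}\). The identity \(r_N^{\mathrm{maj}}=1/(2D_N)\) is a rewriting of \(2^{2N+1}/C_N\). Hence the stated profile set is the equilibrium set.

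Part~(ii) is the substantive case.

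First, I would establish the shape of \(\varphi\) in terms of \(s=t^r\). Write
\[
\varphi'(t)=\frac{rC_N\,s^{N}}{t(1+s)^{2N+1}}.
\]
This is the majority-rule derivative: only pivotal outcomes matter. Hence
\[
\varepsilon_\varphi=\frac{rC_N}{(1+s)P_N(s)},
\]
which is strictly decreasing. Its limit at zero is \(r(N+1)>1\), since \(P_N(0)=\binom{2N+1}{N+1}\) and \(C_N=(N+1)\binom{2N+1}{N+1}\). Its value at one is \(2rD_N\le1\).

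Next, I would compute \(\varphi''\). The sign of \(t\varphi''/\varphi'\) is that of \(rN-1-(r(2N+1))s/(1+s)\), which decreases in \(s\). So \(\varphi\) is convex on \([0,\hat t]\) and concave afterwards, with \(\hat t>0\) because \(rN-1+r>0\). For an S-shaped function with \(\varphi(0)=0\), the least concave majorant is an origin chord from \(0\) to a tangency point \(\tau\) satisfying \(\varphi(\tau)=\tau\varphi'(\tau)\), that is, \(\varepsilon_\varphi(\tau)=1\). This \(\tau\) is exactly \(\tau_{N,r}\le1\). Beyond \(\tau\), \(\operatorname{cav}\varphi=\varphi\), and there are no other chords.

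Case \(c\ge\tau\). The tangent at \(c\) lies in the concave region beyond the chord, so it supports \(\varphi\) and touches only at \(c\). This gives the first inequality of Proposition~\ref{prop:general-support}, strictly. For the second inequality, the slope is \(c^2\sigma_c=\varphi'(1/c)\) and the tangency point is \(1/c\ge1\ge\tau\), so the same argument applies. Uniqueness then follows from the strict-tangent remark in Section~\ref{S3:uniqueness}.

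Case \(c<\tau\). Take \(\sigma=\varphi'(\tau)\), with contact set \(\Gamma(\sigma)=\{0,\tau\}\) straddling \(c\). The unique mean-\(c\) lottery puts probability \(c/\tau\) on \(\tau\) and the rest on \(0\). I would then check the strict pure-contestant test required by Proposition~\ref{prop:finite-contact}. The only positive contact is \(u=\tau\), and the reciprocal tangent is at \(1/\tau\ge1\), again in the strictly concave region. It therefore supports \(\varphi\) strictly away from \(1/\tau\), including at \(z=0\), where the tangent intercept \(\varphi(1/\tau)-\tau\sigma=1-\varphi(\tau)-\tau\varphi'(\tau)=1-2\varphi(\tau)\) must exceed \(0\). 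This holds because \(\varphi(\tau)\le\varphi(1)=1/2\), with strictness since \(\tau\varphi'(\tau)<\varphi(1)\) when \(\tau<1\). Proposition~\ref{prop:finite-contact} then gives uniqueness, and the profile is semi-pure because the lottery is nondegenerate.

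The main obstacle is the global S-shape claim. We need that \(\varphi\) is concave on all of \([\hat t,\infty)\) and that the reciprocal tangent lies above \(\varphi\) on the convex part near zero. I would handle both by noting that on the convex region \(\varphi\) lies below its origin chord, which in turn lies below any supporting tangent of the concave part with slope at most \(\sigma\). For the boundary case \(\tau=1\), which occurs when \(r=r_N^{\mathrm{maj}}\), I would check the degenerate equalities separately.
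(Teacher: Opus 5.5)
Your proposal follows the paper's proof essentially step for step. Parts~(i) and~(iii) are read off from Propositions~\ref{prop:low-discrimination} and~\ref{prop:high-elasticity}. Part~(ii) combines the single crossing of $\varepsilon_\varphi$ through one, the convex--concave shape, the origin chord to $\tau_{N,r}$, the reciprocal tangent at $1/u\geq1$, and interchangeability. Two minor slips are easily fixed, but the boundary case you defer is a real, if small, gap.

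\textbf{The slips.} Your derivative is off by the factor $s/(1+s)$. The correct formula is $\varphi'(t)=rC_Ns^{N+1}/[t(1+s)^{2N+2}]$, so $t\varphi''(t)/\varphi'(t)=[(r(N+1)-1)-(r(N+1)+1)s]/(1+s)$.
\begin{itemize}
\item Your displayed sign expression equals $rN-1$ at $s=0$. This is negative in part of regime~(ii), for example at $N=1$, $r=4/5$, and would make $\varphi$ globally concave. The inequality you actually invoke, $r(N+1)-1>0$, is the right one, and your elasticity formula is correct.
\item At $c=\tau_{N,r}$, the tangent at $c$ is the origin chord, so it also touches zero. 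The strict-tangent remark therefore does not apply there. Treat $c=\tau_{N,r}$ as the paper does, in the chord branch, where the mean condition forces the point mass on $\tau_{N,r}$.
\end{itemize}

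\textbf{The gap.} When $\tau_{N,r}=1$ (i.e.\ $r=r_N^{\mathrm{maj}}\leq1$), the intercept $1-2\varphi(\tau_{N,r})$ is zero. The strict pure-contestant test then fails at $z=0$, so Proposition~\ref{prop:finite-contact} cannot be invoked. Indeed, at $c=1$ contestant~$A$ is genuinely indifferent between $\sigma$ and inactivity.

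The fix, which is how the paper closes the argument, is to separate positive deviations from inactivity:
\begin{itemize}
\item The reciprocal tangent (here the line $t/2$) is still strict at every $z>0$, $z\neq1$. So $\sigma$ is $A$'s unique positive best response.
\item When $c<1$, inactivity is excluded directly: it yields $\zeta_B/2<\zeta_B\leq\Pi_A(\sigma;F_B)$, with $\zeta_B=1-c>0$.
\item When $c=1$, inactivity is excluded by Theorem~\ref{thm:equilibrium-taxonomy}, which rules out an atom at zero for $A$.
\end{itemize}
Interchangeability then fixes $A$'s effort in every equilibrium, and the contact and mean restrictions determine $B$'s strategy.
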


Parts~\emph{(i)} and~\emph{(iii)} follow directly from Propositions~\ref{prop:low-discrimination} and~\ref{prop:high-elasticity}, respectively. We now prove part~\emph{(ii)}.

Because \(d\log[\varphi(t)/t]/d\log t=\varepsilon_\varphi(t)-1\), the ratio \(\varphi(t)/t\) rises up to \(\tau_{N,r}\) and falls thereafter. The line \(\sigma_{N,r}t\), where \(\sigma_{N,r}\equiv\varphi(\tau_{N,r})/\tau_{N,r}=\varphi'(\tau_{N,r})\), therefore supports \(\varphi\) at \(0\) and \(\tau_{N,r}\). Since \(\varphi\) is strictly convex and then strictly concave,\footnote{Writing \(\eta=r(N+1)>1\), differentiation gives \(t\varphi''(t)/\varphi'(t)=[(\eta-1)-(\eta+1)t^r]/(1+t^r)\), whose numerator changes sign exactly once, from positive to negative.} its least concave majorant equals \(\sigma_{N,r}t\) on \([0,\tau_{N,r}]\) and \(\varphi(t)\) thereafter.

Figure~\ref{fig:majority-origin-chord} illustrates this geometry for \(J=3,4\) and \(r=1\), where \(\tau_{N,r}=\sqrt7-2\). The horizontal axis is \(t=X_B/X_A\), with \(X_A>0\). The thick segment in the left panel is the origin chord, and its dashed extension is the supporting line \(\sigma_{N,r}t\); the dash-dotted lines are tangents at \(c\).

\begin{figure}[H]
\centering
\begin{tikzpicture}[
  x=1cm,y=1cm,
  axis/.style={-{Latex[length=1.5mm]},black,thin},
  reward/.style={black,line width=.85pt},
  chord/.style={black,line width=1.8pt},
  extension/.style={black,line width=.7pt,dash pattern=on 4pt off 2pt},
  tangent/.style={black,line width=.8pt,dash pattern=on 3.2pt off 1.4pt on .7pt off 1.4pt},
  candidate/.style={black!52,line width=.75pt,dash pattern=on 3.2pt off 1.4pt on .7pt off 1.4pt},
  guide/.style={black!40,line width=.45pt,densely dotted},
  contact/.style={circle,fill=black,draw=black,inner sep=1.6pt},
  hollow/.style={circle,fill=white,draw=black!52,line width=.65pt,inner sep=1.4pt},
  every node/.style={font=\footnotesize,inner sep=1.5pt,text=black}
]
\path[use as bounding box] (-.45,-1.05) rectangle (15.05,4.4);
\pgfmathsetmacro{\majchordtau}{sqrt(7)-2}
\pgfmathsetmacro{\majchordsigma}{(10+7*sqrt(7))/54}
\pgfmathsetmacro{\majchordheight}{\majchordsigma*\majchordtau}

\begin{scope}[x=4.2cm,y=4.5cm]
  \fill[black!9] (0,0) -- (\majchordtau,\majchordheight)
    plot[domain=\majchordtau:0,samples=101]
      (\x,{(\x)^2*(3+\x)/(1+\x)^3}) -- cycle;
  \draw[axis] (0,0) -- (1.53,0);
  \draw[axis] (0,-.09) -- (0,.85);
  \draw[guide] (.5,0) -- (.5,{7/27});
  \draw[guide] (\majchordtau,0) -- (\majchordtau,\majchordheight);
  \draw[reward] plot[domain=0:1.5,samples=151,smooth]
    (\x,{(\x)^2*(3+\x)/(1+\x)^3});
  \draw[chord] (0,0) -- (\majchordtau,\majchordheight);
  \draw[extension] (\majchordtau,\majchordheight)
    -- (1.48,{1.48*\majchordsigma});
  \draw[candidate] (0,{-1/27}) -- (.62,{-1/27+16*.62/27});
  \node[hollow,inner sep=1.1pt] at (0,{-1/27}) {};
  \node[contact,inner sep=1.3pt] at (0,0) {};
  \node[contact] at (\majchordtau,\majchordheight) {};
  \node[hollow] at (.5,{7/27}) {};
  \foreach \u in {.5,\majchordtau}
    \draw[thin] (\u,-.008) -- (\u,.008);
  \node[below] at (.5,-.022) {\(c\)};
  \node[below] at (\majchordtau,-.022) {\(\tau_{N,r}\)};
  \node[above left] at (-.03,.025) {\(0\)};
  \node[right] at (1.56,0) {\(t\)};
  \node[above left] at (1.31,{1.31*\majchordsigma+.025}) {\(\sigma_{N,r}t\)};
  \node[below right] at (1.23,{1.23^2*(3+1.23)/(1+1.23)^3-.025}) {\(\varphi(t)\)};
  \node at (.74,.92) {\(c=1/2<\tau_{N,r}\)};
  \node at (.74,-.175) {\(\varphi(c)-c\varphi'(c)=-1/27<0\)};
\end{scope}

\begin{scope}[xshift=8cm,x=4.2cm,y=4.5cm]
  \draw[axis] (0,0) -- (1.53,0);
  \draw[axis] (0,-.09) -- (0,.85);
  \draw[guide] (1,0) -- (1,.5);
  \draw[reward] plot[domain=0:1.5,samples=151,smooth]
    (\x,{(\x)^2*(3+\x)/(1+\x)^3});
  \draw[tangent] (0,{1/8}) -- (1.5,{1/8+3*1.5/8});
  \node[hollow,draw=black!75,inner sep=1.1pt] at (0,{1/8}) {};
  \node[contact] at (1,.5) {};
  \foreach \u in {\majchordtau,1}
    \draw[thin] (\u,-.008) -- (\u,.008);
  \node[below] at (\majchordtau,-.022) {\(\tau_{N,r}\)};
  \node[below] at (1,-.022) {\(c\)};
  \node[above left] at (-.03,.025) {\(0\)};
  \node[right] at (1.56,0) {\(t\)};
  \node[below right] at (1.23,{1.23^2*(3+1.23)/(1+1.23)^3-.025}) {\(\varphi(t)\)};
  \node at (.74,.92) {\(c=1>\tau_{N,r}\)};
  \node at (.74,-.175) {\(\varphi(c)-c\varphi'(c)=1/8>0\)};
\end{scope}
\end{tikzpicture}
{\normalsize\begin{spacing}{1}
\caption{Origin-chord and tangent geometry for Theorem~\ref{thm:majority-regimes}\emph{(ii)}.}
\label{fig:majority-origin-chord}
\end{spacing}}
\end{figure}
\afterfigurespace

For \(c\leq\tau_{N,r}\), the supporting line \(\sigma_{N,r}t\) has contacts \(\{0,\tau_{N,r}\}\), as in the left panel; when \(c<\tau_{N,r}\), the tangent at \(c\) has negative intercept and rules out the pure candidate. Taking \(X_A=\sigma_{N,r}\), the mean condition \(\mathbb E[X_B]=cX_A\) gives \(B\) probability \(c/\tau_{N,r}\) of effort \(\tau_{N,r}\sigma_{N,r}\) and the remaining probability of inactivity. At \(c=\tau_{N,r}\), all mass is on the positive contact, so the lottery reduces to the pure candidate.

For \(c>\tau_{N,r}\), the tangent at \(c\) has a single contact, as in the right panel. The corresponding normalized total efforts are \(X_A^*=\varphi'(c)=rC_Nc^{r(N+1)-1}/(1+c^r)^{2N+2}\) and \(X_B^*=cX_A^*\). These contact conditions make \(B\)'s strategy a best response in both cases.

To verify \(A\)'s best response, note that \(B\)'s positive contact is at most one. Its reciprocal therefore lies at or beyond \(\tau_{N,r}\), on the strictly concave part of the majorant. The tangent there supports \(\varphi\) globally, with no other positive contact. Reward symmetry and the test in Section~\ref{S3:uniqueness} make \(A\)'s proposed effort uniquely optimal among positive efforts. If \(B\)'s inactivity probability is \(\zeta_B\), the same inequality at zero gives \(A\) a payoff of at least \(\zeta_B\), compared with \(\zeta_B/2\) from inactivity. Both profiles are therefore equilibria.

Finally, interchangeability pairs \(A\)'s strategy from any equilibrium with the constructed \(B\) strategy. Theorem~\ref{thm:equilibrium-taxonomy} excludes an atom at zero for \(A\), so her unique positive best response fixes her effort in every equilibrium. This remains true at \(\tau_{N,r}=1\), although the reciprocal tangent also touches zero. The contact and mean restrictions then uniquely determine \(B\)'s distribution.

For \(J=7,8\), the first count pair admitting two-sided mixing, equilibrium is unique and pure for \(r\leq1/4\), unique and pure or semi-pure for \(1/4<r\leq32/35\), and two-sided mixed for \(32/35<r\leq1\). At \(r=c=1\), exact equilibrium requires infinite support. An eight-atom approximation puts about \(99.77\%\) of its mass at normalized total efforts \(0.5224\) and \(0.02365\), with probabilities \(0.9552\) and \(0.04251\). The same approximation applies to both counts. Appendix~\ref{app:j7-numerical-F} gives the full distribution and deviation check.

\subsubsection{Relation to KP}\label{S4:method-comparison}

Specializing to odd \(J\), majority rule, equal costs, and the same Tullock exponent gives KP's simultaneous-election game. With effort divided equally across battlefields, KP's per-battlefield campaign effort corresponds to \(X/J\) in our normalized units. KP prove equalization and payoff equivalence, identify the pure candidate and threshold, and establish a symmetric mixed equilibrium above that threshold with full rent dissipation. Our additional contribution in this benchmark is the structure of the mixed-equilibrium correspondence.

In this mixed region, zero is not an atom. Against each fixed positive rival effort, sufficiently small own efforts yield negative payoff. The corresponding cutoff depends on the rival's effort and may vanish as rival support approaches zero, so it does not exclude support points converging to zero. Our proof uses the uniform derivative bound in Appendix~\ref{app:proof-general-taxonomy} to differentiate expected payoff at positive efforts, and bounded convergence as effort approaches zero. The global equilibrium inequalities yield the necessary-and-sufficient equal-cost set \(\mathcal S_{J,r}\times\mathcal S_{J,r}\). Each component has no atom at zero, mean normalized effort one half, and countably infinite positive support with zero as its only accumulation point. This recovers KP's full rent dissipation and payoff equivalence, determines support geometry, and extends the characterization to asymmetric costs and paired even counts.

\section{Concluding Remarks}\label{S5}

The central finding is that equilibrium exists despite the discontinuity at mutual inactivity, and every equilibrium equalizes effort across battlefields, reducing the unrestricted vector game to a scalar total-effort contest. We derive necessary-and-sufficient global tests for pure, semi-pure, and two-sided mixed equilibria, characterize participation and support, and prove that contestant-specific expected efforts, prize shares, costs, and payoffs are invariant across multiple equilibria.

The results show that an aggregation rule affects not only effort intensity but also the shape of expected rewards and therefore participation, who mixes, and how many effort levels are used. The sharp threshold \(r\leq1/\lceil J/2\rceil\) guarantees that equilibrium is unique and pure for every admissible schedule; uniqueness persists through six battlefields, while seven is the first count permitting multiplicity or two-sided mixing. Under majority rule, paired odd and even counts generate the same total-effort game, and greater discriminatory power moves the game from a region where equilibrium is unique and pure, through a cost-dependent pure-or-semi-pure region, to two-sided mixing. The six--seven boundary is geometric: a seventh battlefield first allows a third contact or two-sided mixing.

The existence, equalization, support-taxonomy, and global verification results extend to the common-power cost functions described in Section~\ref{S2:prizes} after replacing \(r\) by \(r/\gamma\) and \(c\) by \(d_A/d_B\). Statements about expected physical effort and the count-specific applications are stated here for linear costs. The remaining restrictions are two contestants, identical battlefields, symmetric fixed-prize schedules, and \(r\leq1\). Heterogeneous battlefields, asymmetric aggregation rules, additional contestants, general convex costs beyond that reparameterization, and stronger discrimination may weaken equalization, strategy pairing, or the contact geometry used here. Future work can establish which parts of the characterization survive in these environments.\par

\appendix
\numberappendixobjectsbysection
\section{Proofs for Sections \ref{S2} and \ref{S3}}\label{app:paper-core}

\subsection{Proofs for Section~\ref{S2:reduction}}
\label{app:core-reductions}
\label{app:existence-scalar-reduction}

\begin{proof}[Proof of Lemma~\ref{lem:core-reductions}]
Let \(\mathbf 1_J\) be the \(J\)-vector of ones and write \(\mathcal V\equiv\{(E/J)\mathbf 1_J:E\geq0\}\) and \(\mathcal V_+\equiv\mathcal V\setminus\{\mathbf 0\}\) for the sets of uniform and positive uniform effort vectors. For contestant \(i\in\{A,B\}\), let \(-i\) denote her opponent and \(F_i\) her mixed strategy. Because \(\mathcal V\) is closed, \(F_i(\mathcal V)=1\) is equivalent to \(\operatorname{supp}F_i\subseteq\mathcal V\).

Part~\emph{(i)} is immediate for \(J=1\). For \(J\geq2\), we extend KP's pairwise equalization argument (Appendix Lemma~1) using FKL's general-schedule method. Set \(\Delta_n\equiv q_n-q_{n-1}\), and first suppose the opponent exerts the same positive effort \(a\) on every battlefield. Monotonicity gives \(\Delta_n\geq0\). Let \(G_a(\mathbf e_i)\) denote contestant \(i\)'s expected gross reward from effort vector \(\mathbf e_i\), and write \(p(x)=x^r/(x^r+a^r)\). Hold all but two own efforts \(x\) and \(y\) fixed. If \(W'\) is the number of victories on the remaining \(J-2\) battlefields, put \(\delta_1=\mathbb E[\Delta_{W'+1}]\) and \(\delta_2=\mathbb E[\Delta_{W'+2}]\). Up to a term independent of \(x\) and \(y\),
\[
\begin{aligned}
F(x,y)&=\delta_1\{p(x)[1-p(y)]+[1-p(x)]p(y)\}+(\delta_1+\delta_2)p(x)p(y),\\
F_x-F_y&=\delta_1\{p'(x)[1-p(y)]-p'(y)[1-p(x)]\}
+\delta_2\{p'(x)p(y)-p'(y)p(x)\}.
\end{aligned}
\]
Both \(p'(x)/p(x)=ra^r/[x(x^r+a^r)]\) and \(p'(x)/[1-p(x)]=rx^{r-1}/(x^r+a^r)\) are strictly decreasing on \((0,\infty)\) when \(r\leq1\). Hence \(x>y>0\) makes both brackets in \(F_x-F_y\) nonpositive, so \((x-y)(F_x-F_y)\leq0\). When every effort component is positive, every value \(0,\ldots,J-2\) of \(W'\) occurs with positive probability. Since \(\sum_{n=1}^{J}\Delta_n=q_J-q_0>0\), this implies \(\delta_1+\delta_2>0\); the inequality is therefore strict when \(x\ne y\). Thus transferring effort from the larger component to the smaller strictly raises gross reward until they coincide, while preserving total effort. Transfer from above-mean to below-mean components until one reaches the mean at each step. This reaches the uniform vector in finitely many strictly improving steps, proving uniqueness among positive vectors with a fixed total.

To include zero components, let \(\mathbf e_i\) be nonuniform with total \(E_i>0\), set \(\mathbf u_i=(E_i/J)\mathbf 1_J\), and define \(\mathbf e_i^\varepsilon=(1-\varepsilon)\mathbf e_i+\varepsilon\mathbf u_i\) for \(0<\varepsilon<1\). The weak pairwise comparison extends to zero components by continuity. Transfers from above-mean to below-mean components, stopping at their targets in \(\mathbf e_i^\varepsilon\), therefore give \(G_a(\mathbf e_i)\leq G_a(\mathbf e_i^\varepsilon)\). Since \(\mathbf e_i^\varepsilon\) is positive and nonuniform, the preceding strict comparison gives \(G_a(\mathbf e_i^\varepsilon)<G_a(\mathbf u_i)\). Averaging over positive uniform rival efforts preserves this strict gain; against an inactive rival, uniform allocation wins every battlefield and is weakly better. Equalizing each realized vector therefore preserves expected cost and strictly improves any mixed strategy assigning positive probability to nonuniform vectors whenever the rival participates with positive probability.

For part~\emph{(ii)}, we adapt the interchangeability argument in KP (p.~1104) and \citet[Appendix Lemma~A.1]{ewerhart2017revenue} to normalized asymmetric costs. Let\nobreak\hspace{.25em}\(\mathcal G_i\) denote contestant \(i\)'s expected normalized gross reward and\nobreak\hspace{.25em}\(K_i\) her expected normalized effort cost. Budget balance gives \(\mathcal G_A+\mathcal G_B=1\). Starting from \(\widehat\pi_i=\mathcal G_i-K_i\), define \(\widetilde\pi_i=\widehat\pi_i+K_{-i}\).
Each added term depends only on the opponent's strategy, so best responses are unchanged, while \(\widetilde\pi_A+\widetilde\pi_B=1\). Let \(u\equiv\widetilde\pi_A\), and let \(v_F=u(F_A,F_B)\) and \(v_G=u(G_A,G_B)\) for two equilibria \(F\) and \(G\). Their equilibrium best-response inequalities give \(v_F\leq u(F_A,G_B)\leq v_G\leq u(G_A,F_B)\leq v_F\). Thus equality holds throughout, and both cross-pairs are equilibria. Linearity in mixing probabilities also makes each equilibrium-strategy set convex.

For part~\emph{(iii)}, the cross-equilibrium support step parallels \citet[Lemma~2]{ewerhartsun2018equilibrium}. Fix an arbitrary equilibrium \(F\) and contestant \(i\). Part~\emph{(ii)} pairs \(F_i\) with \(\bar F_{-i}\) from the equilibrium in the premise. Since \(\bar F_{-i}(\mathcal V)=1\) and \(\bar F_{-i}(\mathcal V_+)>0\), part~\emph{(i)} gives \(F_i(\mathcal V)=1\). This holds for both contestants, proving the claim.
\end{proof}

\begin{proof}[Proof of Theorem~\ref{thm:existence}]
\emph{Existence.} We first construct an equilibrium of the uniform-effort restriction. Let \(x\) and \(y\) denote the normalized total efforts of \(A\) and \(B\). Write \(\varphi=\varphi_{J,r}(\,\cdot\,;\widetilde{\boldsymbol q})\), and set \(K_c(x,y)=\mathcal U(x,y)-cx+y\); the transformed payoffs are \(K_c\) and \(1-K_c\). For mixed strategies, \(K_c(x,F_B)\) and \(K_c(F_A,y)\) denote the corresponding expected payoffs. Since \(0\leq\mathcal U\leq1\), efforts above \(1/c\) for \(A\) and one for \(B\) are dominated by inactivity. Restrict actions accordingly. For \(\varepsilon>0\), replace \(\mathcal U\) by \(\mathcal U_\varepsilon(x,y)=\varphi((x+\varepsilon)/(y+\varepsilon))\), which removes the discontinuity at \((0,0)\), and denote the resulting transformed payoff by \(K_c^\varepsilon\). The smoothed payoff is continuous on closed, bounded effort intervals. By the minimax theorem for continuous constant-sum games, the game allowing randomized strategies has an equilibrium \((F_A^\varepsilon,F_B^\varepsilon)\) and a transformed equilibrium payoff \(v_\varepsilon\) for \(A\). The same domination bounds hold for \(\mathcal U_\varepsilon\), so this equilibrium also rules out deviations to efforts outside the restricted intervals.
The values are uniformly bounded, and compactness of distributions on these fixed intervals gives a sequence \(\varepsilon\downarrow0\) along which \(v_\varepsilon\to v\) and \((F_A^\varepsilon,F_B^\varepsilon)\) converges weakly to \((F_A,F_B)\), meaning that expectations of bounded continuous functions converge.

For each fixed \(x>0\), mutual inactivity is avoided, and \(K_c^\varepsilon(x,\cdot)\) converges uniformly on \([0,1]\) to the continuous function \(K_c(x,\cdot)\): the largest payoff error over rival efforts tends to zero. Together with weak convergence, this gives \(K_c^\varepsilon(x,F_B^\varepsilon)\to K_c(x,F_B)\). Similarly, for each fixed \(y>0\), \(K_c^\varepsilon(F_A^\varepsilon,y)\to K_c(F_A,y)\). Passing to the limit in the equilibrium inequalities yields \(K_c(x,F_B)\leq v\leq K_c(F_A,y)\) for all \(x,y>0\).

We now rule out profitable deviations to zero effort. Let \(\zeta_i=F_i(\{0\})\). As positive effort tends to zero, the normalized expected prize remains one against an inactive rival and tends to zero against any fixed positive rival effort. Bounded prizes allow these limits to be averaged. Zero effort instead yields one half against an inactive rival. Hence
\[
\begin{aligned}
K_c(0,F_B)&=\mathbb E[X_B]+\zeta_B/2
\leq\lim_{X_A\downarrow0}K_c(X_A,F_B)=\mathbb E[X_B]+\zeta_B\leq v,\\
K_c(F_A,0)&=1-c\mathbb E[X_A]-\zeta_A/2
\geq\lim_{X_B\downarrow0}K_c(F_A,X_B)=1-c\mathbb E[X_A]-\zeta_A\geq v.
\end{aligned}
\]
The inequalities therefore hold for all \(x,y\geq0\). Integrating them over \(F_A\) and \(F_B\) gives \(K_c(F_A,F_B)\leq v\leq K_c(F_A,F_B)\), hence equality. Averaging over any admissible alternative strategy also rules out randomized deviations. These expectations are finite because \(|K_c(x,y)|\leq1+cx+y\) and admissible strategies have finite expected effort. Thus \((F_A,F_B)\) is an equilibrium of the total-effort game.

\emph{Uniformity and equilibrium correspondence.} Against an always-inactive rival, normalized payoff has supremum one, approached by positive efforts tending to zero but unattained by any pure or mixed strategy. Thus every scalar equilibrium has positive participation by both contestants. The equalization argument in Lemma~\ref{lem:core-reductions}\emph{(i)} compares each realized effort vector with its uniform counterpart at the same total effort. Nonuniform deviations therefore cannot improve on scalar deviations, so every scalar equilibrium is also an equilibrium of the original contest.

In particular, the scalar equilibrium constructed above yields a uniform equilibrium of the original contest. Lemma~\ref{lem:core-reductions}\emph{(iii)} therefore implies that every equilibrium of the original contest is uniform.

Restricting to uniform allocations therefore removes no equilibrium, proving the correspondence.
\end{proof}

\subsection{Proofs for Section~\ref{S2:equilibrium-forms}}\label{app:proof-general-taxonomy}

The next lemma counts zeros of a finite sum of real powers. A zero has \emph{multiplicity} \(m\) if the function's first nonzero derivative there is the \(m\)th derivative.
\begin{lemma}\label{lem:generalized-polynomial-zeros}
Let \(N\geq1\), let \(\lambda_1<\cdots<\lambda_N\) be real exponents, and let \(a_1,\ldots,a_N\in\mathbb R\). If \(g(t)=\sum_{k=1}^{N}a_kt^{\lambda_k}\) is not identically zero, then \(g\) has at most \(N-1\) positive zeros, counted with multiplicity.
\end{lemma}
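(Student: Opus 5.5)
I would prove this generalized Descartes rule of signs by induction on \(N\), using Rolle's theorem after dividing out the smallest power. For \(N=1\), \(g(t)=a_1t^{\lambda_1}\) with \(a_1\neq0\) has no positive zeros, so the bound \(N-1=0\) holds. Before the inductive step, I would discard all terms with \(a_k=0\). This can only lower the number of terms, and \(g\) is unchanged. So assume every \(a_k\ne0\).

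For the inductive step, let \(Z(g)\) be the number of positive zeros of \(g\) counted with multiplicity. Set \(h(t)=t^{-\lambda_1}g(t)=a_1+\sum_{k\ge2}a_kt^{\lambda_k-\lambda_1}\). Multiplying by the smooth, nonvanishing factor \(t^{-\lambda_1}\) on \((0,\infty)\) preserves zeros and their multiplicities. To check multiplicities, write \(g=t^{\lambda_1}h\) and apply Leibniz's rule: if \(h\) vanishes to order \(m\) at \(t_0>0\), then the derivatives of \(g\) of order below \(m\) vanish there, and \(g^{(m)}(t_0)=t_0^{\lambda_1}h^{(m)}(t_0)\neq0\).

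Next, differentiate to get
\[
h'(t)=\sum_{k=2}^{N}a_k(\lambda_k-\lambda_1)t^{\lambda_k-\lambda_1-1}.
\]
This is a sum of \(N-1\) distinct real powers with nonzero coefficients, since \(\lambda_k>\lambda_1\) for \(k\ge2\). In particular \(h'\not\equiv0\), because \(N\ge2\) after the reduction. By the induction hypothesis, \(Z(h')\le N-2\).

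It remains to show \(Z(h)\le Z(h')+1\). Suppose \(h\) has distinct positive zeros \(s_1<\cdots<s_p\) with multiplicities \(m_1,\ldots,m_p\). Each \(s_i\) is a zero of \(h'\) of multiplicity \(m_i-1\). Rolle's theorem also supplies a zero of \(h'\) strictly between consecutive zeros \(s_i\) and \(s_{i+1}\), giving \(p-1\) further zeros distinct from the \(s_i\). Hence
\[
Z(h')\ge\sum_i(m_i-1)+(p-1)=Z(h)-1,
\]
so \(Z(g)=Z(h)\le N-1\).

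The main obstacle is bookkeeping rather than analysis. First, a zero of infinite multiplicity must be excluded. Each function here is real-analytic on \((0,\infty)\) and not identically zero, so every zero has finite multiplicity. Second, the induction should be phrased for finitely many zeros: if infinitely many zeros existed, apply the same Rolle argument to any finite subcollection of size \(N\) to obtain a contradiction. Everything else is routine.
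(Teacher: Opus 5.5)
Your proof is correct and follows the paper's argument essentially step for step: you discard zero coefficients, divide by \(t^{\lambda_1}\), differentiate to get \(N-1\) terms, and combine the induction hypothesis with Rolle's theorem and multiplicity counting, using analyticity for finite multiplicities and finite subcollections to exclude infinitely many zeros. The one step you leave implicit is why \(h'\not\equiv0\); the paper justifies it in one line by dividing by the lowest power \(t^{\lambda_2-\lambda_1-1}\) and letting \(t\downarrow0\), which leaves the nonzero coefficient \(a_2(\lambda_2-\lambda_1)\).
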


\begin{proof}

Real powers are analytic for \(t>0\), so \(g\) is analytic there. Since \(g\) is not identically zero, each zero has finite multiplicity. Discard zero coefficients and relabel the remaining terms; it suffices to prove the bound when all \(N\) coefficients are nonzero. We use induction on the number of terms. For \(N=1\), \(g(t)=a_1t^{\lambda_1}\) is nonzero for \(t>0\), so it has no positive zeros.

Suppose the result holds for sums with \(N-1\) nonzero terms, and consider \(g\) with \(N\geq2\) terms. Define \(h(t)=t^{-\lambda_1}g(t)=a_1+\sum_{k=2}^{N}a_kt^{\lambda_k-\lambda_1}\). Multiplication by \(t^{-\lambda_1}>0\) preserves positive zeros and their multiplicities. Differentiating removes the constant term and gives \(h'(t)=\sum_{k=2}^{N}a_k(\lambda_k-\lambda_1)t^{\lambda_k-\lambda_1-1}\). Its smallest-power coefficient is \(a_2(\lambda_2-\lambda_1)\ne0\); dividing by the corresponding power of \(t\) and taking \(t\downarrow0\) shows that \(h'\) is not identically zero. This is a sum of \(N-1\) nonzero terms with distinct exponents, so the induction hypothesis bounds its positive zeros by \(N-2\), counted with multiplicity.

For any finite collection of positive zeros \(t_1<\cdots<t_s\) of \(h\), let \(m_j\) be their multiplicities and \(M=\sum_jm_j\). The derivative has multiplicity \(m_j-1\) at each \(t_j\), and Rolle's theorem gives an additional zero between consecutive points. Thus \(h'\) has at least \(M-1\) positive zeros, so \(M-1\leq N-2\) and \(M\leq N-1\). This bound applies to every finite collection, also ruling out infinitely many zeros.
\end{proof}

\begin{proof}[Proof of Theorem~\ref{thm:equilibrium-taxonomy}]
By Theorem~\ref{thm:existence}, we work with normalized total efforts. Fix an equilibrium with distributions \(F_A,F_B\), and write \(\zeta_i=F_i(\{0\})\).

\paragraph{Step 1: Participation and cost balance.}
Neither contestant is always inactive, as shown in the existence proof.
Nor can both contestants have an atom at zero: if inactivity is a best response while the opponent is inactive with positive probability, replacing zero by a sufficiently small positive effort raises expected gross reward by at least one half of that probability, weakly raises it against every positive rival effort, and has arbitrarily small cost. Hence \(\zeta_A\zeta_B=0\).

For \(x,y>0\), define \(m(x,y)\equiv x\mathcal U_x(x,y)=-y\mathcal U_y(x,y)=rp(1-p)Q_J'(p)\), where \(p=x^r/(x^r+y^r)\) and subscripts denote partial derivatives. Since the polynomial \(Q_J'\) is bounded on \([0,1]\) and \(p(1-p)\leq1/4\), \(|m(x,y)|\leq C\) for some finite \(C\). When exactly one effort is zero, \(p\) is zero or one, so we extend this formula by setting \(m=0\). Set \(m(0,0)=0\) as well; since \(\zeta_A\zeta_B=0\), this value has no effect on equilibrium averages. For \(x\geq a>0\), \(|\mathcal U_x(x,y)|\leq C/a\) for every rival effort \(y\). This uniform bound allows us to differentiate expected prize by averaging derivatives. The same argument applies to \(B\).
Equilibrium assigns probability one to best responses. Continuity of expected payoff at positive efforts then makes every positive support point a best response: otherwise a neighborhood of suboptimal efforts would receive positive probability.

At positive efforts \(x\) and \(y\) in the respective equilibrium supports, expected marginal prize equals marginal cost. For \(B\), the normalized prize is \(1-\mathcal U(x,y)\). Multiplying each first-order condition by own effort gives
\[
\int \mathcal U_x(x,y)\,dF_B(y)=c
\Rightarrow \int m(x,y)\,dF_B(y)=cx,
\enspace
-\int \mathcal U_y(x,y)\,dF_A(x)=1
\Rightarrow \int m(x,y)\,dF_A(x)=y.
\]
The first identity on the right also holds at \(x=0\), and the second at \(y=0\): both sides are then zero by the definition of \(m\). No first-order condition at zero is needed.
Averaging these right-hand identities over \(F_A\) and \(F_B\), respectively, gives the same average of \(m\); boundedness of \(m\) allows us to exchange the order of integration. Hence
\begin{equation}\label{eq:app-general-cost-equality}
c\mathbb E[X_A]
=\iint m(x,y)\,dF_A(x)dF_B(y)
=\mathbb E[X_B]
\equiv\chi.
\end{equation}

\paragraph{Step 2: Which contestant may be inactive?}
Suppose the lower-cost contestant \(A\) had \(\zeta_A>0\). Then \(\zeta_B=0\), and inactivity gives \(A\) an equilibrium payoff of zero. If \(A\) copies \(B\)'s strategy by drawing independently from \(F_B\), symmetry and budget balance give her an expected prize of \(1/2\); \(A\)'s expected cost is \(c\mathbb E[X_B]=c\chi\). Optimality would require \(\chi\geq1/(2c)\geq1/2\). But \(B\)'s payoff equals the unit prize minus both expected costs, \(1-2\chi\), because \(A\) earns zero. Inactivity gives her at least \(\zeta_A/2>0\), so \(\chi<1/2\). This contradiction proves \(\zeta_A=0\). If \(c=1\), relabeling the contestants also gives \(\zeta_B=0\). Thus only the strictly higher-cost contestant can have an atom at zero.

\paragraph{Step 3: Positive support is discrete.}
Fix either contestant and the opponent's equilibrium distribution. By Step~1, every positive support point yields the same equilibrium payoff. Expected payoff is real analytic on \((0,\infty)\).{\footnote{Real analyticity means a convergent local power-series representation. Fix \(x_0>0\), put \(u=(x/x_0)^r-1\), and set \(a_y=x_0^r/(x_0^r+y^r)\in(0,1)\) for \(y>0\). The win probability is \(a_y(1+u)/(1+a_yu)\). Reward is polynomial of degree at most \(J\) in this probability, hence has denominator \((1+a_yu)^J\) and a polynomial numerator with coefficients uniformly bounded in \(y\). For \(|u|\leq1/2\), the reciprocal denominator's binomial series has absolute sum \((1-|a_yu|)^{-J}\leq2^J\). Together with the bounded numerator coefficients, this uniform bound permits termwise expectation, yielding a convergent power series for expected reward. Since \(u\) is analytic in \(x\) near \(x_0\), expected reward is analytic; subtracting linear cost preserves analyticity. An inactive rival adds only a constant for \(x>0\).}} An analytic function taking the same value at distinct points accumulating inside an interval must be constant throughout that interval. If positive support points accumulated at a positive effort, this identity theorem would make payoff constant on \((0,\infty)\). But bounded prizes and positive marginal cost imply that payoff tends to \(-\infty\) as effort grows, a contradiction.

The domination bounds from the existence proof confine the supports to \([0,1/c]\) for \(A\) and \([0,1]\) for \(B\). Thus only finitely many support points lie above each \(\varepsilon>0\); otherwise compactness would produce a positive accumulation point. Taking \(\varepsilon=1/n\) expresses the positive support as a countable union of finite sets. It is therefore finite or countably infinite; if infinite, it can accumulate only at zero, which need not be an atom.

\paragraph{Step 4: Linking the two supports.}
Suppose the opponent has finite positive support \(y_1,\ldots,y_L\). With \(\widetilde\Delta_{n+1}=\widetilde q_{n+1}-\widetilde q_n\), the effort-weighted marginal reward \(m\) from Step~1 is
\[
m(x,y_\ell)=rJ\sum_{n=0}^{J-1}\widetilde\Delta_{n+1}\binom{J-1}{n}
\frac{x^{r(n+1)}y_\ell^{r(J-n)}}{(x^r+y_\ell^r)^{J+1}}.
\]
For either contestant \(i\), write \(w_\ell=F_{-i}(\{y_\ell\})>0\), and let \(d=c\) for \(A\) and \(d=1\) for \(B\). Prize symmetry makes the displayed expression valid with \(x\) denoting either contestant's own effort. The weighted first-order condition is \(R_d(x)\equiv\sum_{\ell=1}^{L}w_\ell m(x,y_\ell)-dx=0\); a possible rival atom at zero contributes nothing.

Multiplying \(R_d(x)=0\) by \(\prod_{\ell=1}^{L}(x^r+y_\ell^r)^{J+1}\) yields a finite sum of powers \(x^\lambda\). The cleared expression is not identically zero: the multiplier is positive, and \(R_d(x)\) is \(x\) times the derivative of expected payoff. An identity would therefore make payoff constant on \((0,\infty)\), contradicting bounded reward and unbounded effort cost. Lemma~\ref{lem:generalized-polynomial-zeros} therefore gives only finitely many positive roots.

Thus a semi-pure mixer has finite support. In a two-sided mixed equilibrium, the positive supports are either both finite or both countably infinite, with zero as their only possible accumulation point. The pure, semi-pure, and two-sided cases exhaust all equilibria.
\end{proof}

\subsection{Proofs for Section~\ref{S3:uniqueness}}
\label{app:finite-support-criteria}
\label{app:proof-uniqueness-criteria}
\label{app:no-uniform-support}

\begin{proof}[Proof of Proposition~\ref{prop:finite-contact}]
Fix \(F_{T_B}\) supported on \(\Gamma(\sigma)\) with mean \(c\). Let \(F_B\) denote the induced distribution of \(X_B=\sigma T_B\). Against \(X_A=\sigma\), contestant~\(B\)'s payoff depends on her effort ratio \(t=X_B/\sigma\) through \(\varphi(t)-\sigma t\), so every distribution supported on \(\Gamma(\sigma)\) is a best response. It remains to check contestant~\(A\).

If \(A\) changes her total effort from \(\sigma\) to \(z\sigma\), where \(z>0\), her normalized payoff is \(\Pi_A(z\sigma;F_B)=F_{T_B}(\{0\})+\int_{u>0}\varphi(z/u)\,dF_{T_B}(u)-c\sigma z\); against contact \(u\), her own-to-rival ratio is \(z/u\).
The strict pure-contestant best-response test gives \(\varphi(z/u)-u\sigma z<\varphi(1/u)-u\sigma\) whenever \(z\neq1\). Averaging and using \(\int u\,dF_{T_B}(u)=c\) gives \(\Pi_A(z\sigma;F_B)<\Pi_A(\sigma;F_B)\). Thus \(\sigma\) is \(A\)'s unique positive best response. The zero deviation is also strictly worse: evaluating the same inequality at zero gives \(\varphi(1/u)-u\sigma>0\) at every positive contact. Since \(F_{T_B}\) has mean \(c>0\), it assigns positive probability to at least one such contact, and therefore \(\Pi_A(\sigma;F_B)=F_{T_B}(\{0\})+\int_{u>0}[\varphi(1/u)-u\sigma]\,dF_{T_B}(u)>F_{T_B}(\{0\})/2\), the payoff from inactivity. Lemma~\ref{lem:core-reductions} therefore rules out nonuniform positive-total deviations by either contestant. Hence \((X_A,X_B)=(\sigma,\sigma T_B)\), \(T_B\sim F_{T_B}\), is an equilibrium of the multi-battle contest.

Choose one such ratio distribution \(F_{T_B}^0\), and let \(F_B^0\) be the induced distribution of total effort \(X_B=\sigma T_B\). If \((F_A,F_B)\) is any other equilibrium, constant-sum interchangeability pairs \(F_A\) with \(F_B^0\). The unique-best-response conclusion above forces \(F_A\) to place probability one on \(\sigma\). The support of contestant~\(B\)'s ratio \(T_B=X_B/\sigma\) must then lie in \(\Gamma(\sigma)\). Finally, the cost-balance identity \(c\mathbb E[X_A]=\mathbb E[X_B]\), with \(X_A=\sigma\), gives \(\mathbb E[T_B]=c\), as in the main-text argument for Proposition~\ref{prop:general-semi-pure}.  This proves that the stated family contains all equilibria.
\end{proof}

\begingroup

\phantomsection\label{app:seven-multiplicity-proof}
\begin{proof}[Verification of Example~\ref{ex:seven-multiplicity}]
Let \(\sigma=612/1075\), \(t_1=1/9\), and \(t_2=1/4\). Substituting the schedule from the example gives
\[
\sigma t-\varphi(t)
=\frac{3t(t-t_1)^2(t-t_2)^2(1428t^3+8519t^2+19729t+20388)}{7525(1+t)^7}\geq0.
\]
Hence the supporting line touches the reward curve exactly at \(0,t_1,t_2\). For each positive contact \(u\in\{t_1,t_2\}\), the reciprocal-deviation gap factors as
\[
[\varphi(1/u)-u\sigma]-[\varphi(z)-u^2\sigma z]
=\frac{(z-1/u)^2\mathcal P_u(z)}{(1+z)^7},
\]
where direct expansion gives degree-six polynomials \(\mathcal P_u\) whose coefficients are strictly positive for \(u\in\{1/9,1/4\}\).\footnote{In ascending powers, \(\mathcal P_{1/9}\): \((313/29025\), \(6413/90300\), \(41519/216720\), \(188611/650160\), \(7657/54180\), \(476/9675\), \(68/9675)\); \(\mathcal P_{1/4}\): \((769/17200\), \(327631/1083600\), \(548711/650160\), \(879127/650160\), \(10036/13545\), \(1071/4300\), \(153/4300)\).}
The inactivity comparisons are also strict:
\(\varphi(1/t_1)-t_1\sigma=939/1075>0\) and
\(\varphi(1/t_2)-t_2\sigma=769/1075>0\).
Thus Proposition~\ref{prop:finite-contact} applies. Every distribution on \(\{0,t_1,t_2\}\) with mean \(c\) generates an equilibrium with \(X_A=\sigma\) and \(X_B=\sigma T_B\), and these are all equilibria. For \(c\in(0,1/4)\), the requirement that probabilities sum to one and the mean condition leave one probability free, producing the stated line segment.
\end{proof}
\endgroup

\section{Proofs and Computations for Section \ref{S4}}\label{app:application-proofs}

\subsection{High Reward Elasticity and Two-Sided Mixing with Infinite Support}\label{app:proof-high-elasticity}

\begin{proof}[Proof of Proposition~\ref{prop:high-elasticity}]
\emph{Equal costs.} First set \(c=1\). Relabeling turns any equilibrium \((F_A,F_B)\) into \((F_B,F_A)\); interchangeability then gives the symmetric equilibrium \((F_A,F_A)\). Write \(F=F_A\). Theorem~\ref{thm:equilibrium-taxonomy} gives \(F(\{0\})=0\) and bounded support. If bounded away from zero, the positive support would be compact and have a smallest point \(x_*>0\). Writing \(Y\sim F\), indifference and the first-order condition at \(x_*\) would give
\[
\int\varphi(x_*/Y)\,dF(Y)=x_*+u,
\qquad
\int\frac{x_*}{Y}\varphi'(x_*/Y)\,dF(Y)=x_*,
\]
where \(u\geq0\) is equilibrium payoff. Because \(0<x_*/Y\leq1\), condition~\eqref{eq:high-elasticity} makes the second integrand strictly larger than the first. Yet the second integral equals \(x_*\), while the first equals \(x_*+u\geq x_*\), a contradiction. Hence the support contains points \(x_k\downarrow0\). Since \(F(\{0\})=0\), \(\varphi(x_k/Y)\to0\) almost surely; bounded convergence and \(x_k\to0\) therefore make the common payoff \(u\) equal zero. Two independent draws from \(F\) give each contestant expected prize one half; zero payoff then implies \(\int x\,dF(x)=1/2\). The taxonomy theorem makes the positive support countably infinite with zero as its only accumulation point. Thus \(\mathcal S(\varphi)\neq\varnothing\).

\emph{Asymmetric costs.} For \(F\in\mathcal S(\varphi)\), write
\(D_F(x)\equiv\int\mathcal U(x,y)\,dF(y)-x\).
The equal-cost equilibrium conditions say \(D_F(x)\leq0\) for every \(x\geq0\), with equality \(F\)-almost surely. Interchangeability therefore makes the equal-cost equilibrium set exactly \(\mathcal S(\varphi)^2\). Now take \(F_1,F_2\in\mathcal S(\varphi)\) and set \(F_A=F_1\), \(F_B=(1-c)\delta_0+cF_2\). Every positive deviation \(x\) by \(A\) yields normalized payoff \(1-c+cD_{F_2}(x)\leq1-c\), with equality \(F_1\)-almost surely; inactivity is no better. Against \(F_1\), contestant \(B\)'s payoff is \(D_{F_1}(y)\leq0\), with equality at zero and \(F_2\)-almost surely. Hence every displayed profile is an equilibrium.

\emph{Exhaustiveness.} Conversely, because the displayed profiles are equilibria, Proposition~\ref{prop:outcome-equivalence} implies that every equilibrium has mean efforts \((1/2,c/2)\) and payoffs \((1-c,0)\). Contestant \(B\)'s deviation payoff against \(F_A\) is \(D_{F_A}(y)\), so optimality and her zero equilibrium payoff give \(D_{F_A}(y)\leq0\) for every \(y\). If \(X,X'\) are independent draws from \(F_A\), prize symmetry gives \(\mathbb E[\mathcal U(X,X')]=1/2\); since \(\mathbb E[X]=1/2\), \(\int D_{F_A}(x)\,dF_A(x)=1/2-\mathbb E[X]=0\). The pointwise inequality must therefore hold with equality \(F_A\)-almost surely, proving \(F_A\in\mathcal S(\varphi)\).

Every member of \(\mathcal S(\varphi)\) has support points \(x_k\downarrow0\). For every \(y>0\), \(\mathcal U(x_k,y)\to0\), whereas \(\mathcal U(x_k,0)=1\). Contestant \(A\)'s indifference and bounded convergence therefore make her limiting payoff equal \(F_B(\{0\})\). Because her equilibrium payoff is \(1-c\), \(F_B(\{0\})=1-c\). Write \(F_B=(1-c)\delta_0+cF_2\). The residual distribution \(F_2\) has no atom at zero. The mean of \(F_2\) is \((c/2)/c=1/2\). For every \(x>0\), \(A\)'s deviation condition becomes \(1-c+c[\int\mathcal U(x,y)\,dF_2(y)-x]\leq1-c\), so \(D_{F_2}(x)\leq0\); at \(x=0\), the absence of an atom at zero gives \(D_{F_2}(0)=0\). Integrating against \(F_2\) and using symmetry and its mean \(1/2\) again gives equality \(F_2\)-almost surely. Hence \(F_2\in\mathcal S(\varphi)\), proving exhaustiveness.
\end{proof}

\subsection{A Sharp Schedule-Robust Threshold}\label{app:proof-low-discrimination}

\begin{proof}[Proof of Proposition~\ref{prop:low-discrimination}]
\emph{A marginal-reward bound.} Put \(h=\lceil J/2\rceil\), \(z=t^r\), and \(\widetilde\Delta_n=\widetilde q_n-\widetilde q_{n-1}\). Bernstein differentiation gives \(\varphi_{J,r}'(t)=Jrt^{r-1}H_J(z)\),\,\(H_J(z)=\frac{K_J(z)}{(1+z)^{J+1}}\),\,and\,\(K_J(z)=\sum_{k=0}^{J-1}\binom{J-1}{k}\widetilde\Delta_{k+1}z^k\).
Writing \(a_k=\binom{J-1}{k}\widetilde\Delta_{k+1}\), the coefficients of \(K_J\) are nonnegative and, by reward symmetry, satisfy \(a_k=a_{J-1-k}\). Pairing symmetric terms gives
\begin{equation}\label{eq:app-marginal-elasticity-bound}
\frac{zH_J'(z)}{H_J(z)}<h-1 \qquad(z>0).
\end{equation}
To verify the bound, let \(d=J-1\). A nonzero pair indexed by \(k<d-k\) contributes \(G_k(z)=a_k(z^k+z^{d-k})/(1+z)^{J+1}\), whose log elasticity is \(zG_k'/G_k=k+(d-2k)z^{d-2k}/(1+z^{d-2k})-(J+1)z/(1+z)\). For \(0<z\leq1\), put \(\ell=d-2k>0\). Using \(z^\ell/(1+z^\ell)\leq1/2\) in the odd case and \(z^\ell/(1+z^\ell)\leq z/(1+z)\) in the even case gives
\[
\frac{zG_k'(z)}{G_k(z)}\leq
\begin{cases}
k+\ell/2-(J+1)z/(1+z),&J=2h-1,\\
k-(2k+2)z/(1+z),&J=2h,
\end{cases}
<h-1.
\]
When present, the unpaired midpoint term obeys the same bound. For \(z\geq1\), \(zK_J'/K_J\leq d\) and \((J+1)z/(1+z)\geq(J+1)/2\). Since the log elasticity of a positive sum is the value-weighted average of the component elasticities, \eqref{eq:app-marginal-elasticity-bound} follows. Consequently, \(d\log\varphi_{J,r}'(t)/d\log t=r-1+r zH_J'(z)/H_J(z)<rh-1\).

\emph{Purity below the threshold.} If \(r\leq1/h\), \(\varphi_{J,r}\) is strictly concave. Each contestant faces positive rival effort with positive probability (Theorem~\ref{thm:equilibrium-taxonomy}). Reward is strictly concave in own effort against a positive rival effort and concave against inactivity. Expected payoff is therefore strictly concave after subtracting linear cost, so neither contestant can mix. Existence and the sole pure candidate derived in Section~\ref{S3:pure-equilibrium} then give uniqueness and the per-battlefield efforts stated in Section~\ref{S4:low-discrimination}.

\emph{Sharpness.} Let \(n_{\min}\) be the first victory count that raises the prize above its minimum. Symmetry gives \(n_{\min}\leq h\), with equality under majority rule. Since \(\widetilde q_n=0\) for \(n<n_{\min}\), dividing the reward formula by \(t^{rn_{\min}}\) gives, as \(t\downarrow0\),
\[
\frac{\varphi_{J,r}(t)}{t^{rn_{\min}}}
=\frac{\sum_{n=n_{\min}}^J\binom{J}{n}\widetilde q_n\,t^{r(n-n_{\min})}}{(1+t^r)^J}
\longrightarrow C\equiv\binom{J}{n_{\min}}\widetilde q_{n_{\min}}>0.
\]
Every term with \(n>n_{\min}\) vanishes and the denominator tends to one, so the first rewarded victory count determines the leading term. Differentiating the finite-sum formula also gives \(\varphi_{J,r}'(t)/t^{rn_{\min}-1}\to rn_{\min}C\). At the sole pure candidate, contestant \(B\)'s gain over inactivity therefore satisfies
\[
\frac{\varphi_{J,r}(c)-c\varphi_{J,r}'(c)}{c^{rn_{\min}}}
\longrightarrow C(1-rn_{\min})\qquad(c\downarrow0).
\]
Under majority rule, \(n_{\min}=h\), so the gain is negative for all sufficiently small \(c\) whenever \(r>1/h\). Hence no pure equilibrium exists there, proving sharpness.
\end{proof}

\subsection{Equilibrium Form by Battlefield Count}\label{app:proof-battlefield-count}

\begin{lemma}[One-chord geometry through six battlefields]
\label{lem:low-count-chord-geometry}
If \(J\in\{3,4,5\}\), the least concave majorant of \(\varphi_{J,r}\) differs from the reward curve on at most one open interval, whose endpoints satisfy \(0\leq t_-<t_+<1\). If \(J=6\), there is again at most one such interval, and its two endpoints lie either both below one or both above one. In either case, on the side of \(t=1\) containing the reciprocals of the positive chord endpoints, the reward curve equals its least concave majorant and is strictly concave; hence each reciprocal-deviation inequality is strict away from its reciprocal contact ratio.
\end{lemma}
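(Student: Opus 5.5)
The approach is to work in the variable \(z=t^r\) and study convexity through the marginal-reward representation from the proof of Proposition~\ref{prop:low-discrimination}. There, \(\varphi_{J,r}'(t)=Jrt^{r-1}H_J(z)\) with \(H_J=K_J/(1+z)^{J+1}\), and \(K_J\) is a palindromic polynomial of degree \(J-1\) with nonnegative coefficients \(a_k=a_{J-1-k}\). Strict concavity of \(\varphi\) at \(t\) is equivalent to \(\psi(z)\equiv r-1+rzH_J'(z)/H_J(z)<0\). Writing \(zH_J'/H_J=zK_J'/K_J-(J+1)z/(1+z)\), the sign of \(\varphi''\) equals the sign of a polynomial
\[
S(z)\equiv (r-1)(1+z)K_J(z)+r\bigl[z(1+z)K_J'(z)-(J+1)zK_J(z)\bigr],
\]
of degree at most \(J\). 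For \(r=1\) this is an explicit polynomial in the free increments \(\Delta_1,\dots,\Delta_{\lceil J/2\rceil}\).

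The first key step is to bound the number of sign changes of \(S\) on \((0,\infty)\). The goal is to show that the set where \(\varphi\) is strictly convex is a single interval, or at most one interval on each side of \(t=1\) when \(J=6\), and that \(\varphi\) is strictly concave near \(z=1\) when \(J\le5\). I would use Descartes' rule of signs on the coefficients of \(S\). The bound \(zH_J'/H_J<h-1\) from \eqref{eq:app-marginal-elasticity-bound} and reciprocal symmetry (\(\varphi(t)+\varphi(1/t)=1\), so the curvature at \(1/t\) is the negative-mirror of that at \(t\)) should constrain the sign pattern. Specifically:
\begin{itemize}
\item Convexity on \((0,\tau)\) near \(0\) corresponds to concavity on \((1/\tau,\infty)\) after reflection.
\item For \(J\le5\), I would show that \(S(1)<0\). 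At \(z=1\), \(zK_J'/K_J=(J-1)/2\) by palindromicity, so \(\psi(1)=r-1+r[(J-1)/2-(J+1)/2]=-1<0\). Thus \(\varphi\) is concave at \(t=1\) for every \(J\).
\end{itemize}
Hence the task reduces to counting convex intervals in \((0,1)\), which by the reflection identity determine the intervals in \((1,\infty)\). For degree \(J\le5\), Descartes' rule together with \(S(1)<0\) and the reflected root structure should leave at most one inflection pair below one. This means the convex region below one is a single interval starting at \(0\) or inside \((0,1)\).

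The second step converts curvature into majorant geometry. If \(\varphi\) is convex on a single interval \(I\subset(0,1)\) and concave elsewhere, then \(\operatorname{cav}\varphi\) differs from \(\varphi\) on at most one interval \((t_-,t_+)\supset I\). The reason is that on \([1,\infty)\) the function is concave and its reflected convex piece lies at \(1/I\subset(1,\infty)\). Showing that the chord bridging \(I\) cannot reach past \(1\) takes extra care. I would argue that \(\varphi(t)\le1/2\) for \(t\le1\), and that concavity at \(1\) with slope \(\varphi'(1)\) gives the tangent at \(1\) as a supporting line locally. The harder case is when convexity also occurs above one, and it must be excluded for \(J\le5\); I would check it by Descartes on \(S\) restricted via \(z\mapsto1/z\). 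For \(J=6\), the extra degree allows one convex interval in \((0,1)\) or one in \((1,\infty)\), but not both. This is the main obstacle: proving that two simultaneous convex regions are impossible for degree six, i.e.\ that \(S\) cannot have four positive roots with the required pattern. I would first try to parametrize the three free increments, normalize \(\sum\Delta=1\), and show via a discriminant or direct Sturm analysis that the sign-change count is at most two. Failing that, I would use the coefficient inequalities induced by monotonicity.

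Finally, for the reciprocal-deviation strictness, the positive chord endpoints lie on one side of \(1\). Their reciprocals therefore lie on the other side, where \(\varphi=\operatorname{cav}\varphi\) and \(\varphi\) is strictly concave. The tangent at such a point then touches \(\operatorname{cav}\varphi\) only there, and since \(\varphi\le\operatorname{cav}\varphi\), it touches \(\varphi\) only there. This yields the strict inequality away from the reciprocal contact. The one point to check is that the tangent does not coincide with the chord line, which would require a reciprocal point to be a chord endpoint; this is excluded because the chord lies strictly on the opposite side of \(1\).
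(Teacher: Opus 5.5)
\textbf{Gap 1: placing the chord below one.} Your outline has the same architecture as the paper: the sign of \(\varphi''\) via a degree-\(J\) polynomial, a sign-change count, a single bridging chord, then reciprocal strictness. Your computation \(\psi(1)=-1\), i.e.\ \(\varphi''(1)=-\varphi'(1)\), and your final reciprocal step are correct. The step that keeps the chord below one, however, fails as proposed.

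The facts you invoke, \(\varphi\leq1/2\) on \((0,1]\) and concavity at \(1\), hold for every admissible schedule and every \(J\). So does the curvature pattern you aim for. Under seven-battlefield majority rule with \(r=1\), \(t\varphi''(t)/\varphi'(t)=(3-5t)/(1+t)\), so \(\varphi\) is convex exactly on \((0,3/5)\subset(0,1)\) and concave on \((3/5,\infty)\). Yet \(\varepsilon_\varphi(1)=140/128>1\), so the origin chord touches at some \(\tau>1\).

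What is missing is the \(J\)-specific fact that the tangent at one is a \emph{strict global} majorant: \(\varphi(t)<\varphi(1)+\varphi'(1)(t-1)\) for \(t\neq1\). In the seven-battlefield example this already fails at \(t=0\), since \(\varphi'(1)=140/256>1/2\). The paper proves it in three steps:
\begin{enumerate}[(i)]
\item the tangent gap of \(g\) at \(z=1\) is affine in the schedule parameters, so it suffices to check the three vertices of \(0\leq a\leq b\leq1/2\) explicitly;
\item it passes to \(r<1\) via \(g(t^r)<g(1)+g'(1)(t^r-1)\leq\Phi(1)+\Phi'(1)(t-1)\);
\item it uses the strictly decreasing \(D(\sigma)=V_+(\sigma)-V_-(\sigma)\), with \(D(\Phi'(1))>0\), to get \(\sigma^*>\Phi'(1)\) and hence \(t_+<1\).
\end{enumerate}
The same tangent is what forces both six-battlefield contacts onto the side of the convexity interval, which your plan does not address. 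It is also what gives \(\operatorname{cav}\varphi=\varphi\) on the reciprocal side, which your last paragraph presupposes.

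\textbf{Gap 2: the curvature count.} This is only announced, and the tools you name are too weak. Ordinary Descartes on the monomial coefficients of \(S\) overcounts. For \(J=5\), \(r=1\), \(a=b=0.3\), the coefficients of \(S(z)/z\) are \((-1.8,4.8,-9.6,1.2,-0.6)\), four sign changes, although \(\varphi\) is concave.

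Reflection does not transfer the pattern on \((0,1)\) to \((1,\infty)\) either. Differentiating \(\varphi(t)+\varphi(1/t)=1\) twice gives \(\varphi''(1/t)=-t^3[t\varphi''(t)+2\varphi'(t)]\). So convexity at \(t\) forces concavity at \(1/t\), but not conversely. For the schedule \((0,\tfrac12,\ldots,\tfrac12,1)\) of Example~\ref{ex:six-lower-cost-mixing}, \(\varphi\) is concave on \((0,1]\), while its convexity interval lies above one.

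The paper handles this as follows:
\begin{enumerate}[(i)]
\item It counts roots only on \((0,1)\), with Descartes' rule in the Bernstein basis, obtaining at most two sign changes for \(J\leq5\).
\item It proves \(g''<0\) on \([1,\infty)\) separately; for \(J=5\) this is again a vertex check.
\item For \(J=6\), it folds reciprocal ratios together with \(w=(z-1)/(z+1)\), \(s=w^2\). This yields two degree-five Bernstein sequences \(h^\pm\) whose positive entries require \(a>5/14\) and \(a<10/41\), respectively.
\end{enumerate}
That incompatibility is what makes convexity one-sided. It is exactly the argument your proposed discriminant or Sturm analysis would have to supply.
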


\begin{proof}
\emph{A root-count bound.} We use Descartes' rule of signs in the Bernstein basis. If a nonzero real degree-\(d\) polynomial is written in the Bernstein basis on \((0,1)\) as \(p(z)=\sum_{k=0}^{d}b_k\binom{d}{k}z^k(1-z)^{d-k}\), the number of roots in \((0,1)\), counted with multiplicity, is no larger than the number of sign changes in its coefficient sequence \((b_0,\ldots,b_d)\) after zeros are deleted, and the difference is an even integer.

\emph{Three through five battlefields.}
Write \(\Phi(t)=g(t^r)\) and \(z=t^r\). The chain rule gives
\[
\Phi''(t)=rt^{r-2}\bigl[(r-1)g'(z)+rzg''(z)\bigr].
\]
For \(J=3\), write the normalized schedule as \((0,\alpha,1-\alpha,1)\), with \(0\leq\alpha\leq1/2\). Thus \(\alpha=1-\omega\) in the parameterization of Section~\ref{S4:three}. For a normalized four-battlefield schedule \((0,a,1/2,1-a,1)\), the recursion in Lemma~\ref{lem:cross-count-identities} gives the degree-three coefficients \((0,\alpha,1-\alpha,1)\), where \(\alpha=4a/3\in[0,2/3]\). Both counts therefore use \(g_\alpha(z)=[3\alpha z+3(1-\alpha)z^2+z^3]/(1+z)^3\).
The numerator governing \(\Phi''\) has degree-three Bernstein coefficients
\[
\alpha(r-1),\quad
\frac{-2+4r-8r\alpha}{3},\quad
\frac{4r-6+\alpha(6-8r)}{3},\quad
-4+4\alpha.
\]
The first, third, and fourth are nonpositive, with the last two strictly negative. Hence the sequence has at most two sign changes, so \(\Phi''\) has at most two zeros on \((0,1)\).

\emph{The five-battlefield curvature bound.} For \(J=5\), put \(a=\widetilde q_1\) and \(b=\widetilde q_2\), so \(0\leq a\leq b\leq1/2\), and define \(P(z)=az^4+4(b-a)z^3+6(1-2b)z^2+4(b-a)z+a\).
Direct differentiation gives \(g'(z)=5P(z)/(1+z)^6\). Thus \(P(z)>0\), and the sign of \(g''\) can be checked after removing its positive denominator.
Up to a positive factor, the curvature numerator is
\(K_r(z)=(r-1)(1+z)P(z)+rz[(1+z)P'(z)-6P(z)]\).
Its six Bernstein coefficients are the convex combination
\((1-2b)V_0+2(b-a)V_1+2aV_2\), where the three weights are nonnegative and sum to one:
\[
\begin{aligned}
V_0&=\left(0,0,\frac{3(3r-1)}5,\frac{6(3r-2)}5,\frac{6(3r-5)}5,-12\right),\\
V_1&=\left(0,\frac{2(2r-1)}5,\frac{4r-5}5,\frac{4r-10}5,\frac{4r-20}5,-8\right),\\
V_2&=\left(\frac{r-1}{2},-\frac35,-\frac{7+5r}{10},-\frac{4+5r}{5},-(1+r),-2\right).
\end{aligned}
\]
Thus the first coefficient is nonpositive, the last two are negative, and the third is at least the fourth. More than two sign changes would require the third coefficient to be negative and the fourth positive, which is impossible. Hence \(\Phi''\) again has at most two zeros on \((0,1)\).

\emph{A strict supporting tangent at one.} For all three counts, \(g'(z)>0\), \(g''(z)<0\) on \([1,\infty)\), and the tangent to \(g\) at one lies strictly above \(g(z)\) for \(z\neq1\). For \(J=3,4\), direct simplification gives
\[
8(1+z)^3\{g_\alpha(1)+g_\alpha'(1)(z-1)-g_\alpha(z)\}
=(z-1)^2[1+3\alpha+(8-12\alpha)z+(3-3\alpha)z^2]>0.
\]
For \(J=5\), curvature and the tangent gap are affine in \((a,b)\). Since \(0\leq a\leq b\leq1/2\) is the triangle generated by \((0,0),(0,1/2),(1/2,1/2)\), positivity at these vertices implies positivity throughout. At the three vertices, the numerators of \(-g''\), after removing positive factors, are respectively
\(6z(2z-1)\), \(3z^3-3z^2+5z-1\), and \(z^4/2-z^3+3/2\), all positive for \(z\geq1\); the corresponding tangent-gap numerators after removing \((z-1)^2\) are
\[
30z^4+148z^3+256z^2+44z+2,\qquad
20z^4+88z^3+96z^2+104z+12,\qquad
5z^4-2z^3+16z^2+34z+27.
\]
The first two are coefficientwise positive. The third is strictly positive for every \(z\geq0\), since
\[
5z^4-2z^3+16z^2+34z+27
=z^2\left[5\left(z-\frac15\right)^2+\frac{79}{5}\right]+34z+27>0.
\]
 Because \(g\) is increasing and strictly concave on \([1,\infty)\), while \(t\mapsto t^r\) is increasing and concave, \(\Phi(t)=g(t^r)\) is strictly concave there. Moreover,
\[
\Phi(t)=g(t^r)
<g(1)+g'(1)(t^r-1)
\leq\Phi(1)+\Phi'(1)(t-1)
\qquad(t\ne1),
\]
where the last inequality is \(t^r-1\leq r(t-1)\). Thus the tangent to \(\Phi\) at one is a strict global upper bound.

\emph{Constructing and locating the chord.} Here a curvature change means a sign change of \(\Phi''\), and a concave or convex piece is a maximal interval  of concavity or convexity; isolated zeros without a sign change do not separate pieces. The root count leaves only three patterns: concave; convex then concave; or concave, then convex, then concave. In the last case, let \(I_-\) and \(I_+\) be the closures of the first and last concave pieces; in the convex--concave case, take \(I_-=\{0\}\) and let \(I_+\) be the closure of the final concave piece. Define \(V_\pm(\sigma)=\max_{t\in I_\pm}\{\Phi(t)-\sigma t\}\).
  Each \(V_\pm(\sigma)\) is the intercept of the lowest line of slope \(\sigma\) lying above the corresponding piece. For every \(\sigma>0\), the maxima exist because \(\Phi\) is bounded and \(-\sigma t\) tends to minus infinity; the value functions are continuous in \(\sigma\). Let \(D(\sigma)=V_+(\sigma)-V_-(\sigma)\). If \(\sigma_2>\sigma_1\), choose maximizers \(t_+^2\in I_+\) at \(\sigma_2\) and \(t_-^1\in I_-\) at \(\sigma_1\). Optimality gives
\[
D(\sigma_2)-D(\sigma_1)
\leq-(\sigma_2-\sigma_1)(t_+^2-t_-^1)<0,
\]
because every point of \(I_+\) lies to the right of every point of \(I_-\). Thus \(D\) is strictly decreasing. It is positive for sufficiently small \(\sigma\), because \(V_+(\sigma)\to1\) whereas \(\max_{t\in I_-}\Phi(t)<1\), and negative for sufficiently large \(\sigma\), because \(I_+\) is bounded away from zero. Hence exactly one slope \(\sigma^*\) joins the two pieces. The corresponding line lies above each concave piece by construction and above the intervening convex piece because a convex function attains its maximum over an interval at an endpoint. Replacing \(\Phi\) by this line between its contacts gives the unique supporting chord, with the least concave majorant strictly above \(\Phi\) in its interior. At \(\sigma=\Phi'(1)\), the strict global tangent bound at one gives \(D(\sigma)>0\), so \(\sigma^*>\Phi'(1)\). Because \(\Phi'\) decreases on the final concave piece, its upper contact satisfies \(\Phi'(t_+)=\sigma^*>\Phi'(1)\), and hence \(t_+<1\).

\emph{Six battlefields.}
It is enough to show that all curvature changes occur on at most one side of the unit ratio, with at most two changes on that side.
Write the normalized schedule as \((0,a,b,1/2,1-b,1-a,1)\), where \(0\leq a\leq b\leq1/2\), and again write \(\Phi(t)=g(t^r)\). Let \(z=t^r\), \(w=(z-1)/(z+1)\), and \(s=w^2\).
The map \(w=(z-1)/(z+1)\) sends \(z>0\) to \((-1,1)\), and \(s=w^2\) assigns the same value in \([0,1)\) to reciprocal ratios, because replacing \(z\) by \(1/z\) changes \(w\) to \(-w\). For \(M(z)=zg'(z)\), direct simplification gives
\[
M(z)=\frac3{32}(1-w^2)R(s),\qquad
R(s)=5(4a-5b+1)s^2+10(3b-1)s+5-4a-5b.
\]
The Bernstein coefficients of \(R\) are \(5-4a-5b\), \(10b-4a\), and \(16a\). They are nonnegative, and the first is positive, so \(R(s)>0\) for \(0\leq s<1\). The chain rule gives
\[
t^2\Phi''(t)=rM(z)[rD(w)-1],
\qquad
D(w)=\frac{w\{(1-s)R'(s)-R(s)\}}{R(s)}.
\]

\emph{Separating the two sides of one.} Thus  \(\Phi''(t)=0\) exactly when \(rD(w)=1\). Write \(w=\varepsilon v\), where \(v=|w|\in[0,1)\) and \(\varepsilon\in\{-1,1\}\) records the side of \(z=1\). The curvature equation on that side is \(H_\varepsilon(v)=0\), where
\[
H_\varepsilon(v)=r\varepsilon v\{(1-v^2)R'(v^2)-R(v^2)\}-R(v^2),
\qquad 0\leq v\leq1.
\]
Put \(C=5-4a-5b\), \(d=4a+35b-15\), and \(e=16a+25b-15\). The degree-five Bernstein coefficients are
\[
\begin{alignedat}{3}
h_0^\varepsilon&=-C,\quad
&h_1^\varepsilon&=-C+\varepsilon(r/5)d,\quad
&h_2^\varepsilon&=4a+2b-4+\varepsilon(2r/5)d,\\
h_3^\varepsilon&=4a-4b-2+\varepsilon(2r/5)e,\quad
&h_4^\varepsilon&=-8b+\varepsilon(16r/5)(6a-5b),\quad
&h_5^\varepsilon&=-16a(1+\varepsilon r).
\end{alignedat}
\]

\emph{Bounding the coefficient signs.} Maximizing each potentially positive coefficient over feasible \((r,b)\) gives
\begingroup\small\compactdisplays
\[
h_1^+>0\Rightarrow a>\tfrac5{12},\;
h_2^+>0\Rightarrow a>\tfrac5{14},\;
h_3^+>0\Rightarrow a>\tfrac{25}{52},\;
h_2^->0\Rightarrow a<\tfrac5{24},\;
h_3^->0\Rightarrow a<\tfrac{10}{41},\;
h_4^->0\Rightarrow a<\tfrac5{24}.
\]
\endgroup
For each plus coefficient, a positive value requires its coefficient on \(r\) to be positive; its maximum therefore occurs at \((r,b)=(1,1/2)\). For the minus coefficients it occurs at \(r=1\), with \(b=a\) for \(h_2^-,h_3^-\) and \(b=1/2\) for \(h_4^-\). The remaining coefficients satisfy \(h_4^+\leq-24b/5\leq0\) and \(h_1^-<0\). At these maxima, \(h_1^+,h_2^+,h_3^+,h_2^-,h_3^-,h_4^-\) reduce, respectively, to \(24a/5-2\), \(28a/5-2\), \(52a/5-5\), \(2-48a/5\), \(4-82a/5\), and \(4-96a/5\), which yields the six displayed thresholds.

\emph{At most two sign changes on each side.} The positive entries within each \mbox{sequence} form one block. For the \(+\) \mbox{sequence},
\(h_2^+-2h_1^+=6-4a-8b\geq0\), so \(h_1^+>0\) implies \(h_2^+>0\). If \(h_3^+>0\), the displayed bound gives \(a>25/52\), and direct subtraction gives
\(h_2^+-h_3^+\geq(26/5)a-2>0\); hence \(h_2^+>0\) as well. For the \(-\) sequence, put \(k=1/r\) and define \(h_2^-=(2r/5)L_2\), \(h_3^-=(2r/5)L_3\), and \(h_4^-=-(8r/5)L_4\).
Direct substitution gives \(L_2-L_3=L_4+5k(2b-1)\). If \(h_2^->0\) and \(h_4^->0\), then \(L_2>0>L_4\); since \(2b-1\leq0\), the identity gives \(L_3>L_2>0\), and hence \(h_3^->0\). Thus each coefficient sequence has at most two sign changes.

\emph{Curvature changes on only one side.} The \(+\) sequence can contain a positive coefficient only if \(a>5/14\), whereas the \(-\) sequence can contain one only if \(a<10/41\). Because \(10/41<5/14\), at most one sequence has any positive coefficient. The all-nonpositive sequence has no interior root; Bernstein sign variation bounds the other by two. Since \(H_+\) and \(H_-\) represent the curvature equations above and below \(t=1\), respectively, all curvature changes occur on one side of the unit ratio, with at most two changes on that side.

\emph{A strict supporting tangent at one.} At \(r=1\), the gap between the tangent at one and the reward curve is affine in \((a,b)\). The admissible set \(0\leq a\leq b\leq1/2\) is the triangle generated by \((0,0),(0,1/2),(1/2,1/2)\), so it is enough to check those vertices. After removing the common positive factors, the three remaining polynomials are
\[
30t^4+148t^3+256t^2+44t+2,\qquad
15t^4+58t^3+16t^2+134t+17,\qquad
3t^4-14t^3+16t^2+14t+29.
\]
The first two have strictly positive coefficients, while the third equals
\(3(t^2-7t/3-1)^2+(17/3)t^2+26>0\). Thus the tangent to \(g\) at one is a strict global majorant. For \(0<r\leq1\),
\[
\Phi(t)=g(t^r)
<g(1)+g'(1)(t^r-1)
\leq\Phi(1)+\Phi'(1)(t-1)
\qquad(t\ne1),
\]
so the same tangent property holds for \(\Phi\).

\emph{Bounding the convexity interval.} Prize symmetry also gives \(\Phi''(1)=-\Phi'(1)<0\). The convexity interval cannot extend to infinity. Let \(n_{\min}\) be the first victory count with \(\widetilde q_{n_{\min}}>0\). For some \(C>0\), the small-ratio expansion and symmetry give \(1-\Phi(t)=\Phi(1/t)=Ct^{-rn_{\min}}[1+o(1)]\) as \(t\to\infty\), where \(o(1)\) denotes a term tending to zero. Because this expression comes from a finite rational function of \(t^r\), its asymptotic expansion may be differentiated term by term.
Hence \(\Phi''(t)=-Crn_{\min}(rn_{\min}+1)t^{-rn_{\min}-2}[1+o(1)]<0\) for all sufficiently large \(t\). Since all curvature changes lie on one side of the unit ratio, with at most two changes there, \(\Phi\) is either concave or has one connected convexity interval, wholly below or wholly above one.

\emph{Locating the chord.} The same \(V_+-V_-\) argument produces at most one chord. If the convexity interval lies below one, the final concave piece contains \(t=1\). At slope \(\Phi'(1)\), the strict tangent at one gives \(V_+>V_-\); hence the joining slope is larger than \(\Phi'(1)\), forcing both contacts below one. If the convexity interval lies above one, the first concave piece contains \(t=1\), so the same tangent gives \(V_+<V_-\). The joining slope is then smaller than \(\Phi'(1)\), and the decreasing derivative on the first concave piece forces both contacts above one.

\emph{Strict reciprocal-deviation inequalities.} Finally, reward symmetry maps any positive chord endpoint \(u\) to the reciprocal point \(1/u\) on the side where \(\operatorname{cav}\Phi=\Phi\) and \(\Phi\) is strictly concave, and gives \(\Phi'(1/u)=u^2\Phi'(u)\). The strict supporting tangent at \(1/u\) is exactly the reciprocal-deviation inequality. This proves the lemma.
\end{proof}

\begin{proof}[Proof of Theorem~\ref{thm:lowcount-refinements}]
For \(J=1,2\), admissibility yields \(\varphi(t)=t^r/(1+t^r)\), which is strictly concave for \(0<r\leq1\); the global tangent test and interchangeability show that equilibrium is unique and pure. For \(J\in\{3,4,5\}\), Lemma~\ref{lem:low-count-chord-geometry} gives at most one chord below one and strict reciprocal inequalities. If there is no chord, or if \(c\notin[t_-,t_+]\), Proposition~\ref{prop:general-support}, strict best-response inequalities, and interchangeability show that equilibrium is unique and pure. At a chord endpoint, Proposition~\ref{prop:finite-contact} still applies: the mean condition forces a point mass, again giving the pure equilibrium. For \(c\in(t_-,t_+)\), the two-contact construction gives the unique semi-pure lottery, and Proposition~\ref{prop:finite-contact} makes it exhaustive. Because both contacts lie below one, only the higher-cost contestant can mix.

For \(J=6\), the lemma again gives at most one chord, now wholly below or wholly above one. A below-one chord yields the preceding \(A\)-pure branch; an above-one chord yields its role-reversed \(B\)-pure branch, in which the lower-cost contestant mixes. The two branches cannot coexist because there is only one chord. Away from the chord, the tangent inequalities are strict and imply that equilibrium is unique and pure. Example~\ref{ex:six-lower-cost-mixing} shows that the lower-cost contestant can be the sole mixer.
\end{proof}

\paragraph{Three- and four-battlefield cutoffs.}
\begin{proof}[Derivation]
Set \(r=1\) and write \(\varphi=\varphi_{3,\omega,1}\) as in Section~\ref{S4:three}. Direct differentiation gives
\[
\varphi''(t)=
\frac{3\{2(4\omega-3)+(8-14\omega)t-2(1-\omega)t^2\}}{(1+t)^5}.
\]
If \(\omega\leq3/4\), the numerator is negative for every \(t>0\), so \(\varphi\) is strictly concave and Proposition~\ref{prop:general-support} shows that equilibrium is unique and pure for every \(c\).

Suppose \(\omega>3/4\). At a positive tangency point \(t\), the gain over inactivity has the sign of
\[
R_\omega(t)=t^2+(6\omega-2)t+9-12\omega,
\qquad
\varphi(t)-t\varphi'(t)=\frac{t^2R_\omega(t)}{(1+t)^4}.
\]
Its unique positive root is \(c_+(\omega,1)=-(3\omega-1)+\sqrt{9\omega^2+6\omega-8}\).
Since \(d[\varphi(t)/t]/dt=-[\varphi(t)-t\varphi'(t)]/t^2\), the ratio \(\varphi(t)/t\) rises up to \(c_+(\omega,1)\) and falls thereafter. Hence the unique supporting chord satisfying \(\operatorname{cav}\varphi>\varphi\) in its interior joins the origin to the positive contact. The two-contact result gives mixing between inactivity and that positive contact exactly when \(c<c_+(\omega,1)\); otherwise the equilibrium is pure.

Now let \(r\in(0,1)\) and write \(\varphi=\varphi_{3,\omega,r}\). Direct differentiation gives \(\varphi'(t)=3r t^{r-1}\big[(1-\omega)+2(2\omega-1)t^r+(1-\omega)t^{2r}\big]/(1+t^r)^4\). If \(\omega<1\), then \(\varphi(t)/t\sim3(1-\omega)t^{r-1}\to\infty\) as \(t\downarrow0\), ruling out a finite-slope supporting chord from zero. By Lemma~\ref{lem:low-count-chord-geometry}, any nontrivial chord therefore has contacts \(0<c_-<c_+<1\). They satisfy \(\varphi'(c_-)=\varphi'(c_+)\) and \(\varphi(c_-)-c_-\varphi'(c_-)=\varphi(c_+)-c_+\varphi'(c_+)\), equating the tangents' slopes and intercepts. When a chord exists, solve these two equations numerically, selecting the distinct pair for which \(\varphi(t)\leq\varphi(c_-)+\varphi'(c_-)(t-c_-)\) for every \(t\geq0\). This global-support condition selects the chord of the least concave majorant. The mean-ratio condition gives upper-contact probability \((c-c_-)/(c_+-c_-)\); together with the lemma's strict reciprocal inequalities, this yields the unique semi-pure equilibrium inside the cutoff interval and pure equilibrium at its endpoints. Outside the interval, or if no chord exists, the same lemma shows that equilibrium is unique and pure.

For majority rule, \(\omega=1\), reward elasticity is \(t\varphi'(t)/\varphi(t)=6r/[(1+t^r)(3+t^r)]\). For \(r>1/2\), this decreases through one at \(t^r=\sqrt{1+6r}-2\). Hence \(\varphi(t)/t\) has its unique maximum there, proving that the origin chord is globally supporting and giving \(c_-(1,r)=0\) and \(c_+(1,r)=(\sqrt{1+6r}-2)^{1/r}\).

Lemma~\ref{lem:cross-count-identities} transfers the same conclusion to four battlefields.
\end{proof}

\paragraph{Five-battlefield boundary algebra.}
\label{app:five-boundary-algebra}
\begin{proof}[Derivation]
Set \(r=1\), write \(a=\widetilde q_1\), \(b=\widetilde q_2\), and let \(\varphi=\varphi_{5,1}\), where \(0\leq a\leq b\leq1/2\). Define
\[
\widehat h_{a,b}(t)\equiv\frac{\varphi'(t)}5
=\frac{at^4+4(b-a)t^3+6(1-2b)t^2+4(b-a)t+a}{(1+t)^6}.
\]

The curvature identity is
\[
\varphi''(t)=-\frac{10\mathcal C_{a,b}(t)}{(1+t)^7},
\]
where \(\mathcal C_{a,b}(t)=at^4+(6b-8a)t^3+(12-30b+6a)t^2+(22b-10a-6)t+5a-2b\).
Lemma~\ref{lem:low-count-chord-geometry} therefore leaves three possibilities. If \(\mathcal C_{a,b}(t)\geq0\) for every \(t\geq0\), then \(\operatorname{cav}\varphi=\varphi\). If the unique chord begins at zero, its positive contact \(t_+\in(0,1)\) maximizes \(\varphi(t)/t\) and satisfies \(R_{a,b}(t_+)=0\), where
\[
R_{a,b}(t)=t^4+(6-10a)t^3+(15+15a-30b)t^2+(60b-20)t+25a-10b.
\]
The equation \(R_{a,b}(t)=0\) is exactly \(t\varphi'(t)=\varphi(t)\), the tangency condition for a chord from the origin.
The relevant root in \((0,1)\) is uniquely selected by requiring the line through \((0,0)\) and \((t_+,\varphi(t_+))\) to lie weakly above \(\varphi\) everywhere. In the remaining nonconcave region, the chord has two positive contacts \(0<t_-<t_+<1\), uniquely determined by
\[
\widehat h_{a,b}(t_-)=\widehat h_{a,b}(t_+)
=\frac{\varphi(t_+)-\varphi(t_-)}{5(t_+-t_-)}.
\]

On the boundary between globally concave and nonconcave rewards, an interior minimum of \(\mathcal C_{a,b}\) just reaches zero, so the defining equations are \(\mathcal C_{a,b}(t)=0,\ \mathcal C_{a,b}'(t)=0,\ t\in(0,1)\). When the minimum of \(\mathcal C_{a,b}\) instead occurs at \(t=0\), the boundary segment is \(b=\frac52a,\ a\in[\frac{2}{15},\frac15]\). The boundary between an origin chord and a chord with two positive contacts is \(R_{a,b}(t)=0,\ \widehat h_{a,b}(t)=a,\ t\in(0,1)\); the second equation marks the lower contact's departure from zero.

For fixed \(t\), each boundary system is linear in \((a,b)\), giving a rational one-parameter description wherever its determinant is nonzero. Substituting \(a=b=0.11\) into the two-contact tangency system yields the positive contacts reported in Section~\ref{S4:five}; rigorous interval arithmetic locates them near \(0.0082\) and \(0.6409\). Lemma~\ref{lem:low-count-chord-geometry} supplies the global supporting-chord condition and reciprocal strictness.
\end{proof}

\paragraph{Six-battlefield mixing by the lower-cost contestant.}
\label{app:six-lower-cost-mixing}
\begin{proof}[Verification of Example~\ref{ex:six-lower-cost-mixing}]
Set \(r=1\) and \(\widetilde{\boldsymbol q}=(0,\tfrac12,\tfrac12,\tfrac12,\tfrac12,\tfrac12,1)\). Then
\[
\varphi_6(t)=\frac12+\frac{t^6-1}{2(1+t)^6},\qquad
\varphi_6''(t)=-\frac{3Q(t)}{(1+t)^7},\quad
Q(t)=2t^4-7t^3+7t^2-7t+7.
\]
Because \(Q(1)=2\), \(Q(2)=-3\), and \(Q(3)=22\), the unique convexity interval identified in Lemma~\ref{lem:low-count-chord-geometry} lies above one and contains two. The associated supporting chord therefore has contacts \(1<t_-<2<t_+\). At \(t_->0\), tangency to the strictly concave left branch through the origin makes the chord's intercept \(\varphi_6(t_-)-t_-\varphi_6'(t_-)\) positive. Numerically, \(t_-\approx1.06518\), \(t_+\approx3.06644\), and its slope is \(\sigma\approx0.0444009\). At \(c=1/2\), let \(X_B=\sigma/c=2\sigma\) and let \(T_A\) assign probability \(\lambda=(2-t_-)/(t_+-t_-)\) to \(t_+\) and probability \(1-\lambda\) to \(t_-\). Then \(\mathbb E[T_A]=2=1/c\). The role-reversed two-contact construction and the lemma's strict reciprocal inequalities give the equilibrium displayed in Example~\ref{ex:six-lower-cost-mixing}; the positive intercept makes inactivity strictly worse for \(A\), and Proposition~\ref{prop:finite-contact} gives uniqueness.
\end{proof}

\paragraph{Seven-battlefield unique finite two-sided mixing.}\label{app:finite-two-sided}
\begin{proof}[Verification of Example~\ref{ex:finite-two-sided}]
For this schedule, \(\varphi(t)=\tfrac12+(t^7-1)/[2(1+t)^7]\) and \(\varphi'(t)=7(t^6+1)/[2(1+t)^8]\). Let
\[
f(s)=121s^6-326s^5-457s^4-468s^3-457s^2-326s+121.
\]
Direct rational evaluation gives \(f(3.963772)<0<f(3.963773)\); by continuity, fix a root \(s\in(3.963772,3.963773)\), and define
\[
a=\frac{\sum_{k=0}^6s^k}{2(1+s)^7},\qquad
p=\frac{128s^2(1+s^6)-(1+s)^8}
{(1+s)[128s(1+s^6)-(1+s)^8]}.
\]
Computing rational upper and lower bounds shows that the denominator is positive and that \(0.9360<p<0.9361\), \(0.03492<a<0.03494\), and \(0.1384<sa<0.1385\). Thus \(F=p\delta_a+(1-p)\delta_{sa}\) is a well-defined two-point distribution.

Against \(F\), write a deviation as \(x=az\) and its normalized payoff as \(H(z)=p\varphi(z)+(1-p)\varphi(z/s)-az\). Direct substitution and reward symmetry give \(H(1)=H(s)\) and \(H'(1)=H'(s)=-(s-1)^2f(s)/[256(1+s)^9]=0\). Reducing \(s^6\) and higher powers with \(f(s)=0\) gives
\[
H(1)-H(z)=
\frac{(z-1)^2(z-s)^2P_s(z)}
{256(1+s)^9(1+z)^7(s+z)^7C(s)},
\qquad z\geq0,
\]
where \(C(s)=118s^5+209s^4+244s^3+209s^2+118s-s^6-1\) and
\(P_s(z)=\sum_{j=0}^{11}P_j(s)z^j\). Because \(3<s<4\),
\(C(s)>s^5(118-s)-1>0\). To prove \(H(1)-H(z)\geq0\), it remains to show that every coefficient \(P_j(s)\) is positive. Put
\(u=s+s^{-1}\). Reducing again with \(f(s)=0\) gives
\[
P_j(s)=\kappa_j(1+s)^{e_j}s^{m_j}
\{\alpha_j+\beta_j(u-4)+\gamma_j(u^2-16)\},
\qquad j=0,\ldots,11,
\]
with the exact coefficients below.
\par
\begingroup
\footnotesize
\setstretch{1}
\setlength{\tabcolsep}{2.4pt}
\renewcommand{\arraystretch}{1.08}
\noindent\makebox[\textwidth][c]{%
\begin{tabular}{@{}c c c c r r r@{}}
\hline
\(j\) & \(\kappa_j\) & \(e_j\) & \(m_j\) & \(\alpha_j\) & \(\beta_j\) & \(\gamma_j\) \\
\hline
0  & \(2^{15}/11^5\)              & 1 & 12 & 2296278280             & 161131318            & 105562027 \\
1  & \(2^{14}/11^8\)              & 2 & 11 & 19903364471479         & 1399614783092        & 913661626344 \\
2  & \(2^{14}/11^{10}\)           & 1 & 11 & 24892755778329290      & 1747536442747027     & 1143992338029810 \\
3  & \(2^{14}/11^{10}\)           & 2 & 10 & 27326002340004839      & 1919558701412938     & 1255286937939934 \\
4  & \(2^{15}/11^{12}\)           & 1 & 10 & 7939003325423267016    & 556693721888089286   & 365135686149414627 \\
5  & \(7\cdot2^{14}/11^{12}\)    & 2 & 9  & 1219165548466875322    & 85476882483071726    & 56078212833017721 \\
6  & \(2^{15}/11^{14}\)           & 1 & 9  & 1181345679926815479726 & 82755674127902547017 & 54369282748763887378 \\
7  & \(2^{14}/11^{12}\)           & 2 & 8  & 4864564780862868374    & 340525088863699482   & 223991808676661219 \\
8  & \(2^{16}/11^{12}\)           & 1 & 8  & 1249602821211452616    & 87449653981671262    & 57549276770474991 \\
9  & \(2^{14}/11^{10}\)           & 2 & 7  & 4343163929791991       & 303511854826142      & 200210318528128 \\
10 & \(3\cdot2^{14}/11^{10}\)    & 1 & 7  & 547834131206542        & 38314702022305       & 25240471208550 \\
11 & \(2^{14}/11^8\)              & 2 & 6  & 371837542151           & 26004936400          & 17132122038 \\
\hline
\end{tabular}%
}
\par
\endgroup
The displayed interval gives \(u>4\). Every entry \(\kappa_j,\alpha_j,\beta_j,\gamma_j\) in the table is positive, so \(P_j(s)>0\) for every \(j\), and therefore \(P_s(z)>0\) for every \(z\geq0\). Hence \(H(z)\leq H(1)=H(s)\) for every \(z\geq0\), with equality only at \(1\) and \(s\). The only best responses to \(F\) are therefore \(a\) and \(sa\). Since \(c=1\), both contestants face this problem; Proposition~\ref{prop:general-two-sided} implies that \((F,F)\) is a scalar equilibrium, and Theorem~\ref{thm:existence} lifts it to the seven-battlefield contest.
\end{proof}

\subsection{\texorpdfstring{A Numerical Approximation for Seven-Battlefield Majority}{A Numerical Approximation for Seven-Battlefield Majority}}
\label{app:j7-numerical-F}
For \(J=7\) and \(r=1\),
\[
\varphi(t)=\frac{35t^4+21t^5+7t^6+t^7}{(1+t)^7},
\qquad
\varphi'(t)=\frac{140t^3}{(1+t)^8}.
\]
The proof of Proposition~\ref{prop:high-elasticity} shows that every \(F\in\mathcal S_{7,1}\) has zero normalized payoff and mean effort \(1/2\). Thus, if \(F=\sum_{k\geq0}p_k\delta_{x_k}\) with \(x_k>0\), its equilibrium conditions are
\[
V_F(x)\equiv\sum_{k\geq0}p_k\varphi(x/x_k)-x\leq0
\quad(x\geq0),
\qquad
V_F(x_k)=0,
\]
together with \(\sum_kp_k=1\) and \(\sum_kp_kx_k=1/2\). At every interior support point,
\[
V_F'(x_i)=\sum_{k\geq0}\frac{140p_kx_i^3x_k^4}{(x_i+x_k)^8}-1=0.
\]

We first solve a finite-support approximation on a grid with equal spacing in log effort, add each detected profitable deviation \(x\) satisfying \(V_F(x)>0\), and repeat; we then refine effort levels and probabilities using approximate indifference and first-order conditions. Using eight atoms is a numerical truncation choice; these approximate conditions are not claimed to select a unique distribution. The resulting eight-atom approximation, adjusted to have mean effort \(1/2\), is\par
{\footnotesize
\setlength{\arraycolsep}{2pt}%
\[
\begin{array}{c|cc}
k & x_k & p_k \\ \hline
0 & 0.522393974289542 & 0.955202635416346 \\
1 & 0.023645476863496 & 0.042514577635406 \\
2 & 0.001209456699792 & 0.002161431193108 \\
3 & 6.43914600647\times10^{-5} & 1.14799182685\times10^{-4}
\end{array}
\hspace{1em}
\begin{array}{c|cc}
k & x_k & p_k \\ \hline
4 & 3.48104677628\times10^{-6} & 6.20007819873\times10^{-6} \\
5 & 1.89347957576\times10^{-7} & 3.37111614297\times10^{-7} \\
6 & 1.03253418805\times10^{-8} & 1.83800039742\times10^{-8} \\
7 & 5.63216511709\times10^{-10} & 1.00263912114\times10^{-9}
\end{array}
\]\par
}

  The displayed distribution has total probability one and mean effort \(1/2\) at the reported precision. Enumerating all positive roots of \(V_F'\) and evaluating \(V_F\) at those stationary points and the endpoints gives a largest computed deviation gain of approximately \(2.6\times10^{-12}\). Over the last few reported atoms, effort ratios are about \(0.055\) and mass-to-effort ratios about \(1.78\); these finite observations do not establish limiting ratios. Proposition~\ref{prop:high-elasticity} supplies exact existence and infinite support; the finite table is only a numerical approximation.

\begingroup
\setlength{\bibsep}{0pt}
\bibliographystyle{aer}
\bibliography{References}
\endgroup

\end{document}